\algrenewcommand\algorithmicrequire{\textbf{Input:}}
\algrenewcommand\algorithmicensure{\textbf{Output:}}
\newtheorem{theorem}{Theorem}[section]
\newtheorem{definition}{Definition}
\newtheorem{corollary}[theorem]{Corollary}
\newtheorem{lemma}[theorem]{Lemma}
\newtheorem{example}[theorem]{Example}
\DeclareMathOperator{\sign}{sign}
\DeclareMathOperator{\conv}{conv}
\title{The Computational Complexity of Counting Linear Regions in ReLU Neural Networks}
\author{
  Moritz Stargalla \\
  University of Technology Nuremberg \\
  \texttt{moritz.stargalla@utn.de} \\
  \And
  Christoph Hertrich \\
  University of Technology Nuremberg \\
  \texttt{christoph.hertrich@utn.de} \\
  \And
  Daniel Reichman \\
  Worcester Polytechnic Institute \\
  \texttt{daniel.reichman@gmail.com}
}
\renewcommand{\paragraph}[1]{\textbf{#1}.}
\begin{document}

\maketitle

\begin{abstract}
  An established measure of the expressive power of a given ReLU neural network is the number of linear regions into which it partitions the input space.
  There exist many different, non-equivalent definitions of what a linear region actually is. We systematically assess which papers use which definitions and discuss how they relate to each other. We then analyze the computational complexity of counting the number of such regions for the various definitions. Generally, this turns out to be an intractable problem. We prove \NP{}- and \#\P{}-hardness results already for networks with one hidden layer and strong hardness of approximation results for two or more hidden layers. Finally, on the algorithmic side, we demonstrate that counting linear regions can at least be achieved in polynomial \emph{space} for some common definitions.
\end{abstract}

\section{Introduction}

Neural networks with rectified linear unit (ReLU) activations are among the most common and fundamental models in modern machine learning. The functions represented by ReLU networks are \emph{continuous and piecewise linear} (CPWL), meaning that the input space can be partitioned into finitely many pieces on each of which the function is affine. Such pieces are called \emph{linear regions}. This leads to the following intuition: the more linear regions a neural network can produce, the more complex problems it is capable of solving. Consequently, starting with \citet{pascanu2013number} and \citet{montufar2014number}, the number of linear regions became a standard measure of the expressive power of a ReLU network. Substantial effort has been put into understanding this quantity, e.g., by deriving upper and lower bounds depending on the network architecture or by developing algorithms to count it. More information can be found in the surveys \citet{huchette2023deep,balestriero2025geometry}.

Despite the significant interest in understanding the number of linear regions, surprisingly little is known about the most natural associated computational complexity question: Given a neural network,
what are the time and space requirements needed to determine how many regions it has? The main objective of our paper is to make progress on this question by proving 
complexity-theoretical results on the problem of counting linear regions.

However, before one can even talk about counting linear regions, one has to properly define them. What sounds like a simple exercise is actually a non-trivial task. In the literature, there exists a variety of non-equivalent definitions of what counts as a linear region of a ReLU network. For example, some authors define it via possible sets of active neurons, others define it solely based on the function represented by the neural network. Some authors require regions to be full-dimensional, or connected, or even convex, others do not.
Inconsistencies between definitions have led to confusion and even minor flaws in previous work, as we explain in \Cref{sec:flaws}.

\subsection{Our Contributions}\label{sec:contribution}

\paragraph{Definitions of linear regions}
In order to raise awareness to the technical, but important and non-trivial inconsistencies regarding the definition of linear regions in neural networks, we identify six non-equivalent, commonly used definitions in \Cref{sec:definitions}. We discuss how they relate to each other and provide a table demonstrating which authors used which definitions in previous work. We do not make a recommendation about what definition is the most reasonable one to use, as this depends on the context, but we encourage all authors of future papers to be aware of the subtleties carried by the different options and to be explicit about which definition they use and why. 

\paragraph{Complexity of counting regions in shallow networks} As for many questions regarding ReLU networks, it makes sense to first understand the most basic case with one hidden layer. In \Cref{sec:shallow}, we prove that, regardless of which of the six definitions one uses, the seemingly simple question of deciding whether a shallow network has more than one linear region can indeed be decided in polynomial time. 
However, for all six definitions, we show that determining the exact number of regions is \#\P-hard, meaning that, unless the commonly believed conjecture $\#\P\,\neq\,\FP$ fails, one cannot count regions of a shallow network in polynomial time. 
Furthermore, our reduction shows that even finding an algorithm that approximately counts the number of regions for one hidden layer might be intractable, as it would resolve long-standing open questions in the context of counting cells of hyperplane arrangements \citep{linial1986hard}.

\paragraph{Complexity of counting regions in networks with more than one hidden layer}
\citet{wang2022estimation} showed that deciding if a deep neural network has more than $K$ regions is NP-hard.
In \Cref{sec:deep} we improve upon \citet{wang2022estimation} in several aspects. While the hardness by \citet{wang2022estimation} only applies to networks with logarithmically growing depth (in the input dimension), we show that hardness can be proved for every constant number of hidden layers $\geq 2$ and even in the case $K=1$, that is, for deciding if the network has more than one linear region. Our reduction also implies running-time lower bounds based on the exponential-time hypothesis. We furthermore show that, unless common complexity assumptions fail, one cannot even approximate the number of regions within an exponential factor in polynomial time.

\paragraph{Counting regions using polynomial space}
While most of our results are concerned with lower bounds, in \Cref{sec:pspace}, we turn our attention towards proving an \emph{upper} bound on the computational complexity of region counting. \citet{wang2022estimation} proved\footnote{The proof by \citet{wang2022estimation} works for a different definition than claimed in their paper; see \Cref{sec:flaws}.} that for one definition of linear regions, the problem can be solved in exponential time. We show the stronger statement that for three of our definitions, polynomial space is sufficient.

\paragraph{Limitations}
Our paper is of theoretical nature and we strive towards a thorough understanding of the problem of counting regions from a computational complexity perspective.
As such, we naturally do not optimize our algorithms and reductions for efficiency or practical use, in contrast to, e.g., \citet{serra2018bounding} and \citet{cai2023getting}. Our hardness results are of worst-case nature. Consequently, although beyond the scope of our paper, it is conceivable that additional assumptions render the problem tractable.
For example, it would be very interesting to devise algorithms for region counting on networks that have been trained using gradient descent, as there is evidence that such networks have fewer regions~\citep{hanin2019deep}, which might allow faster algorithms.
Not all of our results are valid for all of the six definitions we identify. We discuss the open problems resulting from this in the context of the respective sections. In our list of definitions in \Cref{sec:definitions} and the corresponding \Cref{tab:literature}, we tried to capture the most relevant previous works on linear regions, but a full literature review, like \citet{huchette2023deep}, is beyond the scope of our paper.

\subsection{Related work}

\citet{huchette2023deep} survey polyhedral methods for deep learning, also treating the study of linear regions in detail. To the best of our knowledge, the first bounds on the number of regions in terms of the network architecture (e.g., number of neurons, network depth) were developed by \citet{pascanu2013number} and \citet{montufar2014number}. Subsequently, better bounds were established \citep{raghu2017expressive,arora2018understanding,serra2018bounding,zanotti2025bounds}. \citet{arora2018understanding} prove that every CPWL function can be represented by a ReLU network.

Several works have developed algorithms for enumerating linear regions.
\citet{serra2018bounding} and \citet{cai2023getting} present mixed-integer programming based routines to count the number of regions and \citet{masden2022algorithmic} presents an algorithm to enumerate the full combinatorial structure of activation regions. As discussed above, \citet{wang2022estimation} provides some initial results on the computational complexity of counting regions, which we strengthen significantly in this paper. Our reductions are related to other decision problems on trained neural networks, e.g., verification \citep{katz2017reluplex}, deciding injectivity or surjectivity \citep{froese2025parameterized,froese2024complexity} or deciding whether the Lipschitz constant of a ReLU network exceeds a certain threshold \citep{virmaux2018lipschitz,jordan2020exactly}.

Another line of research has studied the question of how to construct ReLU networks for functions with a certain number of regions \citep{he2018relu,chen2022improved,hertrich2023towards,brandenburg2025decomposition,zanotti2025linear}. The number of regions of maxout networks was studied by \citet{montufar2022sharp}. Note that all our hardness results hold for maxout networks, too, as maxout is a generalization of ReLU. \citet{goujon2024number} present bounds for general piecewise linear activation functions. The average number of linear regions was studied, among others, by \citet{hanin2019deep, tseran2021expected}. Our work is inspired by the aim to better understand complexity-theoretic aspects of neural networks; another well-studied question in that regime is the complexity of training \citep{goel2021tight,froese2022computational,froese2023training,bertschinger2023training}.

\section{Preliminaries}
\label{sec:prelim}
For $n\in \mathbb{N}$, we write $[n]:=\{1,\dots, n\}$.
For a set $P\subseteq \mathbb{R}^n$, we denote by $\overline{P}$, $P^\circ$, and $\partial P$ its closure, interior, and boundary, respectively.
The ReLU function is the real function $x\mapsto \max(0,x)$.
For any $n\in \mathbb{N}$, we denote by $\sigma:\mathbb{R}^n\to \mathbb{R}^n$ the function that computes the ReLU function in each component.

\paragraph{Polyhedra, CPWL functions, and hyperplane arrangements}
A \emph{polyhedron} $P$ is the intersection of finitely many closed halfspaces.
A \emph{polytope} is a bounded polyhedron.
A \emph{face} of $P$ is either the empty set or a set of the form $\arg\min \{c^\top x: x\in P\}$ for some $c\in \mathbb{R}^n$.
A \emph{polyhedral complex} $\mathcal{P}$ is a finite collection of polyhedra such that $\emptyset \in \mathcal{P}$, if $P\in \mathcal{P}$ then all faces of $P$ are in $\mathcal{P}$, and if $P,P'\in \mathcal{P}$, then $P\cap P'$ is a face of $P$ and $P'$.
A function $f:\mathbb{R}^n\to \mathbb{R}$ is \emph{continuous piecewise linear} (CPWL), if there exists a polyhedral complex $\mathcal{P}$ such that the restriction of $f$ to each full-dimensional polyhedron $P\in \mathcal{P}$ is an affine function.
If this condition is satisfied, then $f$ and $\mathcal{P}$ are \emph{compatible}.
A \emph{hyperplane arrangement} $\mathcal{H}$ is a collection of hyperplanes in $\mathbb{R}^n$.
A \emph{cell} of a hyperplane arrangement is an inclusion maximal connected subset of $\mathbb{R}^n\setminus(\bigcup_{H\in \mathcal{H}} H)$.
A hyperplane arrangement naturally induces an associated polyhedral complex with the cells being the maximal polyhedra of the complex.

\paragraph{ReLU networks}
A \emph{ReLU neural network} $N$ with $d\geq 0$ hidden layers is defined by $d+1$ affine transformations $T^{(i)}:\mathbb{R}^{n_{i-1}}\to \mathbb{R}^{n_i}, x\mapsto A^{(i)}x+b^{(i)}$ for $i\in [d+1]$.
We assume that $n_0=n$ and $n_{d+1}=1$.
The ReLU network $N$ \emph{computes} the CWPL function $f_N:\mathbb{R}^n\to \mathbb{R}$ with
\[
f_N=T^{(d+1)}\circ \sigma \circ \cdots \circ \sigma \circ T^{(1)}.
\]
The matrices $A^{(i)}\in \mathbb{R}^{n_i\times n_{i-1}}$ are called the \emph{weights} and the vectors $b^{(i)}\in \mathbb{R}^{n_i}$ are the \emph{biases} of the $i$-th layer.
We say the network has \emph{depth} $d+1$ and \emph{size} $s(N) := \sum_{i=1}^d n_i$.
Equivalently, ReLU networks can also be represented as layered, directed, acyclic graphs where each dimension of each layer is represented by one vertex, called a \emph{neuron}.
Each neuron computes an affine transformation of the outputs of its predecessors, applies the ReLU function, and outputs the result. We denote the CPWL function mapping the network input to the output of a neuron $v$ by $f_{N,v}:\mathbb{R}^n\to \mathbb{R}$. If the reference to the ReLU network $N$ is clear, we abbreviate $f_{N,v}$ by $f_v$.

\paragraph{Activation patterns}
Given a ReLU network $N$, a vector $a \in \{0,1\}^{s(N)}$ is called an \emph{activation pattern} of $N$ if there exists an input $x \in \mathbb{R}^n$ such that when $N$ receives $x$ as input, the $i$-th neuron in $N$ has positive output (is active) if $a_i=1$ and $0$ if $a_i=0$.
Given an activation pattern $a\in\{0,1\}^{s(N)}$, the network collapses to an affine function $f_N^a:\mathbb{R}^n\to \mathbb{R}$, and each neuron $i$ outputs an affine function $f^a_{N,i}:\mathbb{R}^n\to \mathbb{R}$ ($f^a_{N,i}$ is the zero function if $a_i=0$).
Again, if the reference to the ReLU network $N$ is clear, we abbreviate $f^a_{N,i}$ by $f^a_i$.

\paragraph{Encoding size}
We use $\langle \cdot \rangle$ to denote the encoding size of numbers, matrices, or entire neural networks, where we assume that numbers are integers or rationals encoded in binary such that they take logarithmic space. More details can be found in \Cref{sec:encoding}.

\paragraph{Computational Complexity}
We give an informal overview over some notions of computational complexity and refer to \citep{arora2009computational} for further reading.
A function $f:\{0,1\}^*\to \{0,1\}$ is in \P\ if $f$ is computable in polynomial time by a deterministic Turing machine, in \NP\ if it is computable in polynomial time by a non-deterministic Turing machine, and in \RP\ if it is computable in polynomial time by a randomized Turing machine that never outputs false positives and accepts a correct input with probability at least $1/2$.
Intuitively, \P\ contains problems that can be efficiently solved while \NP\ contains those whose solutions can be efficiently verified.
It widely believed that \P\,$\neq$\,\NP\ and \RP\,$\neq$\,\NP\ hold.
A function $f:\{0,1\}^*\to \mathbb{N}$ is in \#\P\ if there is a polynomial time non-deterministic Turing machine, which has exactly $f(x)$ accepting paths for any input $x\in \{0,1\}^*$ and in \FP\SPACE{} if $f$ is computable by a deterministic Turing machine that uses polynomial space.
A problem is called \emph{hard} for \NP\ (analogously, for \#\P) if all other problems in this class can be reduced to it in polynomial time, and \emph{complete} if it is both hard and contained in the class itself.

\section{Definitions of linear regions}\label{sec:definitions}

In this section we extract the six most commonly used definitions of linear regions from the literature and discuss their relations alongside with important properties and subtleties. \Cref{tab:literature} provides an overview of which previous papers use which definitions.

\begin{table}[b]
    \centering
    \hfill
    \begin{tabular}{ll}
        \toprule
         Paper & Definitions \\
         \midrule
         \citet{pascanu2013number} & \ref{def:region4} \\
         \citet{montufar2014number} & \ref{def:region5} \\
         \citet{raghu2017expressive} & \ref{def:region1}, \ref{def:region5} (*) \\
         \citet{arora2018understanding} & \ref{def:region5} \\
         \citet{serra2018bounding} & \ref{def:region1},\ref{def:region2} (*)\\
         \citet{hanin2019deep} & \ref{def:region2}, \ref{def:region4} \\
         \citet{he2018relu} & \ref{def:region3}, \ref{def:region6} \\
         \citet{rolnick2020reverse} & \ref{def:region2}, \ref{def:region5} (*) \\
         \citet{tseran2021expected} & \ref{def:region1},\ref{def:region5} (*)\\
         \citet{chen2022improved} & \ref{def:region3}, \ref{def:region5}, \ref{def:region6} \\
        \bottomrule
    \end{tabular}\hfill\hfill
    \begin{tabular}{ll}
        \toprule
         Paper & Definitions \\
         \midrule
         
         \citet{montufar2022sharp}& \ref{def:region4} \\
         \citet{wang2022estimation}& \ref{def:region1}, \ref{def:region5}  \\
         \citet{cai2023getting}& \ref{def:region2}  \\
         \citet{hertrich2023towards}& \ref{def:region6}  \\
         \citet{huchette2023deep}& \ref{def:region2} (*)  \\
         \citet{goujon2024number} & \ref{def:region3}, \ref{def:region6} \\
         \citet{brandenburg2025decomposition} & \ref{def:region3}, \ref{def:region6} \\
         \citet{masden2022algorithmic}& \ref{def:region2} \\
         \citet{zanotti2025bounds}& \ref{def:region4}, \ref{def:region6} \\
         \citet{zanotti2025linear} & \ref{def:region4} (*)\\
        \bottomrule
    \end{tabular}
    \hfill{}\vspace{.3em}
    
    \caption{List of papers that use one or several definitions.
    Additional notes on the papers marked with an asterisk can be found in \Cref{sec:additional_table_notes}. \citet{lezeau2024tropical} use another definition that lies between \Cref{def:region4,def:region5}, see \Cref{sec:flaws}.}
    \label{tab:literature}    
\end{table}

The set of inputs that have the same activation pattern induce a subset of $\mathbb{R}^n$ on which $f_N$ is affine.
\begin{definition}[Activation Region]
\label{def:region1}
    Given a network $N$ and an activation pattern $a\in\{0,1\}^{s(N)}$ with support $I\subseteq [s(N)]$, the set
    $
    S_{N,a}=\{x\in \mathbb{R}^n: f_i^a(x)>0\; \text{ for all } i\in I,\; f_i^a(x) \leq 0\; \text{ for all } i\notin I\}
    $
    is an \emph{activation region} of $N$. If the reference to the ReLU network is clear, we abbreviate $S_{N,a}$ by $S_a$.
\end{definition}
Activation regions can be open, closed, neither open nor closed, and full- or low-dimensional, see \Cref{fig:activation_regions} for some examples.
The (disjoint) union of all activation regions is exactly $\mathbb{R}^n$.
In particular, the number of activation regions equals the number of activation patterns.
It is important to note that the term \emph{activation region} is used ambiguously.
For example, \citet{hanin2019deep} use the term to refer to only \emph{full-dimensional} activation regions. 
\begin{definition}[Proper Activation Region]
\label{def:region2}
Given a ReLU network $N$, a \emph{proper activation region} of $N$ is a full-dimensional activation region of $N$.
\end{definition}

While the previous two definitions depend on the neural network \emph{representation} itself, the following four definitions depend only on the CPWL \emph{function} represented by the ReLU network and are independent from the concrete representation.

\begin{definition}[Convex Region]
\label{def:region3}
    Given a ReLU network $N$ and a polyhedral complex $\mathcal{P}$ that is compatible with $f_N$, a \emph{convex region of $N$ given $\mathcal{P}$} is a full-dimensional polyhedron $P\in \mathcal{P}$.
    The \emph{number of convex regions} of $N$ is the minimum number of convex regions of any polyhedral complex $\mathcal{P}$ that is compatible with $f_N$.
\end{definition}
Note that many different polyhedral complexes can attain this minimal number. Hence, in general, it is not possible to refer to a polyhedron $P$ as a `convex region of $N$' without specifying an associated polyhedral complex.

Another option to define linear regions is to use inclusion-maximal connected subsets on which the function computed by the ReLU network is affine, leading to the following definitions.
\begin{definition}[Open Connected Region]
\label{def:region4}
    Given a ReLU network $N$, an \emph{open connected region} of $N$ is an open, inclusion-wise maximal connected subset of $\mathbb{R}^n$ on which $f_N$ is affine.
\end{definition}
\begin{definition}[Closed Connected Region]
\label{def:region5}
    Given a ReLU network $N$, a \emph{closed connected region} of $N$ is a (closed) inclusion-wise maximal connected subset of $\mathbb{R}^n$ on which $f_N$ is affine.
\end{definition}

\begin{figure}[t]
  \centering
  \raisebox{-0.5\height}{
	\includegraphics[page=1, width=0.7\textwidth]{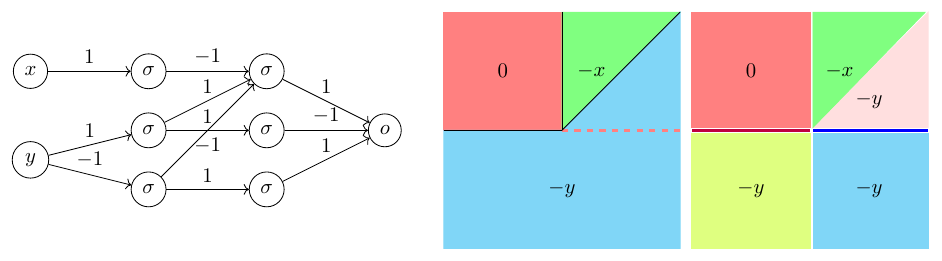}
    }
    \raisebox{-0.5\height}{
	\includegraphics[page=2, width=0.25\textwidth]{arxiv_figures.pdf}
    }
  \caption{
        A ReLU network computing the function $f(x,y)=\max(-y, \min(0,-x))$.
        The closed connected regions (center left) and the activation regions (center right) are displayed.
        The slice $\{(x,0):x\geq 0\}$ is contained in the two closed connected regions with functions $0$ (red) and $-y$ (blue).
		We have $R_6=R_5=R_4=3$, $R_3=4$, $R_2=5$ and $R_1=7$.
		In anti-clockwise direction starting from the region with value 0, the activation patterns are $010110$, $000000$, $001001$, $101001$, $100000$, $110010$ and $110110$ (neurons are ordered from the upper left to the lower right).}
		\label{fig:activation_regions}
\end{figure}

The subtle difference in the definition of open and closed connected regions has an important consequence:
As \citet{zanotti2025bounds} showed, $\overline{P_1}\cap\overline{P_2}=\partial P_1\cap \partial P_2$ holds for any distinct open connected regions $P_1,P_2$.
Interestingly, the same is \emph{not true} for closed connected regions. This is due to the fact that a closed connected region can continue on a low-dimensional slice of another closed connected region, which leads to a part of the boundary of one closed connected region to be contained in the interior of another closed connected region.
\citet[Figure 1]{zanotti2025bounds} gives a neat example where a low dimensional slice even connects two seemingly disconnected full-dimensional sets; another example can be found in \Cref{fig:activation_regions}.
Every open connected region is the interior of the closure of a union of some proper activation regions, see \Cref{lem:open_connected_region_is_made_up_of_proper_activation_regions}.
However, a closed connected region is in general not the closure of a union of some activation regions (see~\Cref{subsection:zanotti_network}).

\citet{hanin2019deep} define the set of open connected regions as the connected components of the input space where the set of points on which the gradient of $f_N$ is discontinuous are removed.
Alternatively, the set of open connected regions is equal to the unique set $\mathcal{S}$ with the minimum number of open connected subsets such that $\bigcup_{S\in \mathcal{S}}\overline{S}=\mathbb{R}^n$ and $f_N$ restricted to any $S\in \mathcal{S}$ is affine, see \Cref{lem:open_connected_regions_are_well_defined}.
The same is not true for closed connected regions, since there can be multiple sets $\mathcal{S}$ with the minimal number of closed connected subsets such that $\bigcup_{S\in \mathcal{S}}S=\mathbb{R}^n$ and $f_N$ restricted to any $S\in \mathcal{S}$ is affine. For example, in \Cref{fig:activation_regions}, in such a minimal set $\mathcal{S}$ there is exactly one closed subset corresponding to the closed connected region with the constant zero function. There are multiple options to choose this subset, e.g. $(-\infty, 0] \times [0,\infty)$ or $((-\infty, 0] \times [0,\infty)) \cup \{(x,0):x>0\}$.

By dropping the requirement of being connected, we obtain the following definition.
\begin{definition}[Affine Region]
\label{def:region6}
    Given a ReLU network $N$, an \emph{affine region} of $N$ is an inclusion-wise maximal subset of $\mathbb{R}^n$ on which $f_N$ is affine.
\end{definition}

For each definition, a linear region $S\subseteq \mathbb{R}^n$ of a ReLU network $N$ can be associated with an affine function $g:\mathbb{R}^n\to \mathbb{R}$ such that $f_N(x)=g(x)$ for all $x\in S$.
The affine function $g$ is unique if $S$ is full-dimensional.
We say that the function $g$ is \emph{computed} or \emph{realized} on $S$.
If $g$ is the zero function, we call $S$ a \emph{zero region} and a \emph{nonzero region} otherwise. The following theorem is immediate from the definitions.
\begin{theorem}
\label{thm:region_hierarchy}
    Given a ReLU network $N$, let $R_1, R_2,\dots,R_6$ denote the number of activation regions, proper activation regions, convex regions, open connected regions, closed connected regions and affine regions, respectively.
    Then:\quad
    $
    R_6 \leq R_5 \leq R_4 \leq R_3 \leq R_2 \leq R_1.
    $
\end{theorem}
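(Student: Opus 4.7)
The plan is to prove each of the five inequalities in the chain independently, each time exhibiting either a direct inclusion of the relevant classes of regions or a surjective map from the finer class to the coarser one. Informally, moving from left to right in the chain corresponds to dropping structural requirements (full-dimensionality, adherence to a fixed polyhedral refinement, connectedness), so fewer regions survive at each step.

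The inequality $R_2 \leq R_1$ is immediate from \Cref{def:region2}. For $R_3 \leq R_2$, I would check that the closures of all activation regions of $N$, together with all of their faces, form a polyhedral complex $\mathcal{P}^{*}$ compatible with $f_N$: each activation region is cut out by strict and non-strict linear inequalities on the hidden pre-activations, so its closure is a polyhedron on which $f_N$ agrees with the affine function $f_N^a$, and any two such closures meet in a common face because they are carved out by subsets of a common hyperplane arrangement. The full-dimensional cells of $\mathcal{P}^{*}$ are precisely the proper activation regions, so the minimum in \Cref{def:region3} yields $R_3 \leq R_2$.

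For $R_4 \leq R_3$, I would fix a compatible complex $\mathcal{P}$ attaining the minimum. Given any open connected region $U$, pick $x \in U$; openness of $U$ together with the fact that $\mathcal{P}$ covers $\mathbb{R}^n$ forces $U \cap P^\circ \neq \emptyset$ for some full-dimensional cell $P \in \mathcal{P}$. Then $U \cup P^\circ$ is open, connected, and $f_N$ is affine on it, so maximality of $U$ yields $P^\circ \subseteq U$; distinct open connected regions are pairwise disjoint, giving the bound. For $R_5 \leq R_4$, I would send each open connected region $U$ to the unique closed connected region $C(U)$ containing $\overline{U}$; uniqueness holds because the union of two such regions would itself be a connected set on which $f_N$ is affine, contradicting maximality. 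Surjectivity of $U \mapsto C(U)$ is the statement that every closed connected region contains the closure of some open connected region realizing the same affine function.

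Finally, $R_6 \leq R_5$ follows by mapping each affine region to any one of its (closed, connected) components: such a component is a closed, connected set on which $f_N$ realizes the affine function associated to the affine region, hence lies inside a closed connected region, and components of distinct affine regions sit in distinct closed connected regions because they realize distinct affine functions. The only step I expect to require more than pure definition-chasing is the surjectivity check in $R_5 \leq R_4$, since closed connected regions can continue along low-dimensional slices as highlighted after \Cref{def:region5}, so one must argue carefully that no closed connected region is "orphaned" with respect to the open connected regions realizing the same affine function.
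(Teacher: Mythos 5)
The paper offers no written proof of this statement (it is declared ``immediate from the definitions''), so your attempt to actually verify each inequality goes beyond what the paper records. The steps $R_2\leq R_1$ and $R_4\leq R_3$ are correct as you argue them, and your route for $R_3\leq R_2$ (the closures of the proper activation regions, with their faces, form a compatible polyhedral complex) is the argument the authors presumably intend, though the face-intersection property is not automatic: one must rule out ``T-junctions,'' which works because the activation boundary of a deeper neuron is the zero set of a continuous CPWL function and therefore cannot terminate in the relative interior of a face of the preceding subdivision. That is a justifiable but nontrivial sub-claim, not pure definition-chasing.

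The genuine gap is exactly where you suspected it, at $R_5\leq R_4$, and it is worse than ``needs more care'': the surjectivity of $U\mapsto C(U)$ is \emph{false} under the stated definitions, because a closed connected region need not have nonempty interior. Take $f(x,y)=\bigl||x|-|y|\bigr|=2\max(|x|,|y|)-|x|-|y|$, a CPWL function computable by a small ReLU network. The set $X=f^{-1}(0)=\{|x|=|y|\}$ is connected and affinely spans $\mathbb{R}^2$, so the only affine function agreeing with $f$ on any connected superset of $X$ is the zero function, whence any such superset lies in $f^{-1}(0)=X$; thus $X$ is an inclusion-wise maximal connected set on which $f$ is affine, i.e.\ a closed connected region in the sense of \Cref{def:region5}, yet it contains no open set and hence equals $C(U)$ for no open connected region $U$. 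Your map therefore only bounds the number of \emph{full-dimensional} closed connected regions. (The same example shows $\{f=cx\}$ for $0<c<1$ is also a maximal connected affine set, so a fully literal reading of \Cref{def:region5} even yields infinitely many closed connected regions; any correct proof of $R_5\leq R_4$ must first pin the definition down, e.g.\ to regions with nonempty interior.) The step $R_6\leq R_5$ inherits a related problem: a low-dimensional component of an affine region does not determine its affine function, so ``components of distinct affine regions sit in distinct closed connected regions'' does not follow. You should either restrict all four function-based definitions to full-dimensional regions before running your maps, or give a separate accounting of the empty-interior regions on both sides of each inequality.
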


The examples in \Cref{fig:activation_regions,fig:linear_regions} show that each inequality in \Cref{thm:region_hierarchy} can be strict.
\citet{zanotti2025bounds} showed $R_4\in O((R_6)^{n+1})$.
\citet{he2018relu} showed $R_3\leq (R_6)!$.
Trivially, $R_1\leq 2^{s(N)}$.

\begin{figure}
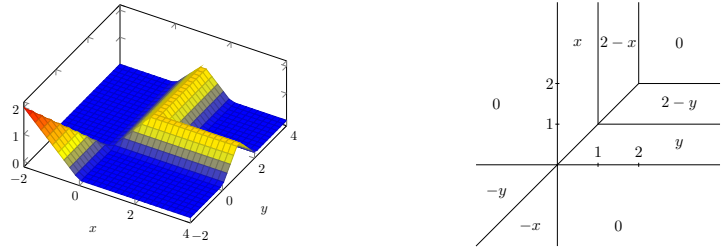

    \centering
    \hfill
    \includegraphics[page=4, width=0.3\textwidth]{arxiv_figures.pdf}\hfill
    \includegraphics[page=3, width=0.25\textwidth]{arxiv_figures.pdf}\hfill{}
    \caption{The function $\max(0,x)+\max(0,-y) - \max(0,x-y) +\min(\max(0, x-2), \max(0, y-2)) - 2 \min(\max(0, x-1), \max(0, y-1))$.
    We have $R_6=7$, $R_5=8$, and $R_4=9$.}
    \label{fig:linear_regions}
\end{figure}

\paragraph{Problem definitions for counting linear regions}
In the remainder of the paper, we consider algorithmic problems arising from counting linear regions. Both decision problems (e.g., deciding if the number of regions is larger than a given threshold) or function problems (such as computing exactly or approximately the number of linear regions) are detailed below.

\textsc{$K$-region-decision}
\\
\textbf{Input:} A ReLU network $N$.
\\
\textbf{Question:} Does $N$ have strictly more than $K$ linear regions (according to a specified definition)?

\textsc{Linear region counting}
\\
\textbf{Input:} A ReLU network $N$.
\\
\textbf{Question:} What is the number of linear regions of $N$ (according to a specified definition)?

\section{Counting regions: one hidden layer}
\label{sec:shallow}
In this section, we derive our results for ReLU networks with one hidden layer.
Our first main result is that \textsc{1-region-decision} can be solved in polynomial time for ReLU networks with one hidden layer.
Detailed proofs of the statements in this section are given in \Cref{sec:omitted_proofs_shallow}.
\begin{theorem}
\label{thm:polyresultfor1layerdecision}
\textsc{1-region-decision} for networks with one hidden layer is in \P{} for \Cref{def:region1,def:region2,def:region3,def:region4,def:region5,def:region6}.
\end{theorem}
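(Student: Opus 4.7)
My plan is to split the six definitions into two groups: \Cref{def:region1,def:region2} depend on the network representation, while \Cref{def:region3,def:region4,def:region5,def:region6} depend only on the function $f_N$. For each group I will reduce \textsc{1-region-decision} to a polynomial-time check on the network parameters.

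For \Cref{def:region1,def:region2} I claim that $N$ has strictly more than one (proper) activation region iff at least one row $a_i$ of the hidden-layer weight matrix $A^{(1)}$ is nonzero. If every $a_i = 0$, each hidden neuron outputs a constant, so there is a single realized activation pattern and a single activation region, namely all of $\mathbb{R}^n$. Conversely, if some $a_i \neq 0$, picking generic points on the two sides of the hyperplane $\{x : a_i^\top x + b_i = 0\}$, and avoiding all other hyperplanes in the arrangement, yields two distinct activation patterns (they differ in coordinate $i$) whose activation regions are both open and hence full-dimensional. This test runs in time linear in $\langle N \rangle$.

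For \Cref{def:region3,def:region4,def:region5,def:region6}, the condition ``$N$ has only one region'' is equivalent to ``$f_N$ is affine on $\mathbb{R}^n$''. If $f_N$ is affine, the trivial polyhedral complex gives one convex region, and the whole space is the only open/closed connected region and the only affine region. Conversely, if $f_N$ is not affine, any two points at which $f_N$ has different local affine expressions lie in distinct regions under each of the four definitions (a closed connected region, and likewise the other three, realizes a single affine function globally). It thus suffices to decide affine-ness of
\[
f_N(x) = d + \sum_{i=1}^{n_1} c_i \max(0, a_i^\top x + b_i).
\]
I would group the neurons with $a_i \neq 0$ by the hyperplane they define: neurons $i$ and $j$ belong to the same group iff $(a_j, b_j) = \mu_j (a_i, b_i)$ for some $\mu_j \neq 0$, a test that reduces to rational proportionality checks. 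Using $\max(0, \mu u) = \mu \max(0, u)$ for $\mu > 0$ and $\max(0, \mu u) = |\mu| \max(0, u) - |\mu| u$ for $\mu < 0$, the combined contribution of each group simplifies to $K_H \max(0, \hat{a}_H^\top x + \hat{b}_H)$ plus an affine term, where $(\hat{a}_H, \hat{b}_H)$ is a canonical representative of the group and $K_H = \sum_{j \in G} c_j |\mu_j|$. Thus $f_N$ is affine iff $K_H = 0$ for every group, which yields a polynomial-time algorithm.

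The main technical step is the ``only if'' direction: a nonzero $K_H$ obstructs affine-ness of $f_N$. I would exhibit a point $x_0 \in H$ that lies on no other hyperplane of the arrangement, which exists because the arrangement consists of only finitely many $(n{-}1)$-dimensional hyperplanes. At such $x_0$, every group other than $H$ contributes a locally affine term, while the $H$-group contributes a non-differentiable kink of magnitude $|K_H|$ in the normal direction of $H$. A nonzero $K_H$ therefore makes $f_N$ non-differentiable at $x_0$, contradicting affine-ness. The remaining algorithmic details---grouping hyperplanes via exact rational arithmetic and summing the $K_H$---are routine and polynomial in $\langle N \rangle$.
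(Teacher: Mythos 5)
Your proposal is correct and follows essentially the same route as the paper's proof: group the hidden neurons by the hyperplane they define, collapse each group to a single ReLU term with a combined coefficient (the paper's cancellation condition $\sum_{i\in I_1^+}a_ic_{1i}=\sum_{i\in I_1^-}a_ic_{1i}$ is exactly your $K_H=0$), and conclude that $f_N$ is affine --- hence has a single region under \cref{def:region3,def:region4,def:region5,def:region6} --- iff every such coefficient vanishes, with the nonvanishing case certified by a point on one hyperplane and no other. If anything, your treatment of \cref{def:region1,def:region2} is slightly more careful than the paper's, which declares that case trivially ``yes'' because $n_1>0$ and does not address the degenerate situation where every row of $A^{(1)}$ is zero and the network has exactly one activation region.
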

The idea of the proof is as follows.
In a ReLU network $N$ with one hidden layer, each neuron corresponds to a hyperplane that divides the input space into two halfspaces.
It is not guaranteed that each hyperplane also leads to a discontinuity of the gradient of $f_N$, since the functions of the neurons with the same hyperplane may add up to an affine function.
The proof of \Cref{thm:polyresultfor1layerdecision} shows that detecting whether a hyperplane of a neuron is canceled can be done in polynomial time. All hyperplanes of the network cancel if and only if the network computes an affine function and has thus only one linear region according to \Cref{def:region3,def:region4,def:region5,def:region6}. For \Cref{def:region1,def:region2}, \textsc{1-region-decision} is trivial.

\citet[Lemma 15]{froese2024complexity} give a result similar to \Cref{thm:polyresultfor1layerdecision}.
They show that for a network with one hidden layer without biases, one can determine in polynomial time whether the network computes the constant zero function, and otherwise find a point on which the network computes a nonzero value.
In contrast, \Cref{thm:polyresultfor1layerdecision} considers biases and nonzero affine functions.

Turning to the problem of exactly counting the number of regions, we show the following theorem.
\begin{theorem}
\label{thm:linear_region_counting_is_sharp_p_complete_for_one_hidden_layer}
    \textsc{Linear region counting} for ReLU networks with one hidden layer is \#\P-hard for \Cref{def:region5,def:region6} and \#\P-complete for \Cref{def:region1,def:region2,def:region3,def:region4}.
\end{theorem}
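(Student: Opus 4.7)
The plan is to combine a standard guess-and-verify argument for $\#\P$-membership (applicable to Definitions~\ref{def:region1}--\ref{def:region4}) with a single reduction for $\#\P$-hardness (applicable to all six definitions), starting from the $\#\P$-hard problem of counting full-dimensional cells of a hyperplane arrangement \citep{linial1986hard}.

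\emph{Containment in $\#\P$.} For \Cref{def:region1,def:region2}, a nondeterministic machine guesses an activation pattern $a\in\{0,1\}^{s(N)}$ and accepts iff $S_a$ is non-empty (respectively, full-dimensional). Both checks are polynomial-time LP feasibility questions on the preactivation expressions of the hidden neurons, giving exactly one accepting path per linear region. For \Cref{def:region3} (convex regions) we first use the polynomial-time cancellation detection underlying \Cref{thm:polyresultfor1layerdecision} to identify the hidden neurons whose hyperplanes induce a genuine gradient discontinuity of $f_N$; the minimum compatible polyhedral complex is then the arrangement of these ``active'' hyperplanes, so counting convex regions reduces to counting proper activation regions of a trimmed network. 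For \Cref{def:region4} (open connected regions) we charge each such region to its lexicographically smallest activation pattern and verify canonicity by a local check: adjacent proper activation regions merge into the same open connected region iff the flipped neuron is in the cancellation set, which is decidable in polynomial time.

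\emph{Hardness.} Given an arrangement $\mathcal{H}=\{H_1,\ldots,H_h\}$ in $\mathbb{R}^n$ with $H_i=\{x:A_i^\top x+b_i=0\}$, we first apply a small rational perturbation so that $\mathcal{H}$ is in general position without altering the number of $n$-dimensional cells, then construct the one-hidden-layer network
\[
f_N(x)=\sum_{i=1}^{h} c_i\,\sigma(A_i^\top x+b_i),
\]
with positive rational coefficients $c_i$ chosen large enough (e.g. $c_i = 2^{Li}$ for $L$ polynomial in the input's bit-length) that the subset sums $\sum_{i\in I} c_i(A_i,b_i)$ are pairwise distinct over $I\subseteq[h]$. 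In general position each proper activation region of $N$ corresponds to exactly one full-dimensional cell of $\mathcal{H}$, so $R_1=R_2=\#\text{cells}$. The coefficient choice forces every two cells to compute genuinely different affine functions and no neuron to cancel, so no merging can occur under \Cref{def:region3,def:region4,def:region6}, giving $R_3=R_4=R_6=R_2$; for \Cref{def:region5} the same choice also rules out the Zanotti-type slice coincidences that could glue two non-adjacent cells through a shared low-dimensional locus, giving $R_5=R_2$. Hence the number of regions under any of the six definitions equals the cell count of $\mathcal{H}$, and the reduction runs in polynomial time, yielding $\#\P$-hardness in every case and $\#\P$-completeness in the first four.

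\emph{Main obstacle.} The central difficulty is engineering the coefficients $c_i$ and the perturbation so that a \emph{single} construction simultaneously realises the cell count under all six non-equivalent definitions, despite their very different sensitivity to coincidences. In particular, establishing $R_5 = R_2$ requires excluding the subtle slice-connection phenomenon highlighted by \citet{zanotti2025bounds}, where a codimension-one locus on which two affine pieces happen to agree can merge otherwise disjoint closed connected regions; arguing that an exponentially separated gradient spectrum together with general position rules this out, while keeping all bit lengths polynomial, is the main technical hurdle. A secondary issue is that $\#\P$-membership for \Cref{def:region3,def:region4} must avoid enumerating entire equivalence classes of activation patterns, which we handle by the local cancellation characterisation inherited from the proof of \Cref{thm:polyresultfor1layerdecision}.
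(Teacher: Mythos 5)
Your hardness reduction contains a step that cannot work: you ``apply a small rational perturbation so that $\mathcal{H}$ is in general position without altering the number of $n$-dimensional cells.'' These two requirements are incompatible whenever the input arrangement is degenerate---perturbing a degenerate arrangement into general position necessarily creates new cells (three concurrent lines in the plane bound $6$ cells, while any generic perturbation of them bounds $7$). Worse, if such a perturbation always existed, the cell count would equal Buck's formula $\sum_{k=0}^{n}\binom{h}{k}$ and be computable in polynomial time, contradicting the very \#\P-hardness result of \citet{linial1986hard} you reduce from: the hardness lives entirely in the degeneracies. The paper avoids this by keeping the (central) arrangement exactly as given, using unit coefficients, and deriving ``distinct cells carry distinct affine functions'' from the \emph{convexity} of $f_{N_\mathcal{H}}(x)=\sum_i\max(0,w_i^\top x)$ together with a chain argument through neighboring cells, rather than from general position or from your exponentially separated weights $c_i=2^{Li}$. (Your subset-sum trick for separating gradients is a workable alternative to the convexity argument, but it does not substitute for the missing treatment of degeneracies.) A second gap in the same part: the identity $R_1=\#\text{cells}$ is not automatic, since activation regions under \Cref{def:region1} can be low-dimensional---for $\sigma(x)+\sigma(-x)$ the pattern $(0,0)$ yields the single point $\{0\}$. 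The paper rules this out by orienting all hyperplanes consistently towards a reference point $x^*$ in a cell and running a Farkas-type argument; your construction imposes no orientation condition.

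The \#\P-membership part follows essentially the paper's route (detect cancelled hyperplanes as in \Cref{thm:polyresultfor1layerdecision}, then use cells of the reduced arrangement for \cref{def:region3,def:region4}), but your canonical-representative scheme for \Cref{def:region4}---verifying lex-minimality of an activation pattern ``by a local check''---is not justified: lex-minimality within an open connected region is a global property of its equivalence class of activation patterns. The paper sidesteps this by certifying an open connected region directly by its sign vector over the non-cancelled hyperplanes, which is unique and verifiable by linear programming.
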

\begin{proof}[Proof sketch]
Containment in \#\P{} is easy for \Cref{def:region1,def:region2}, since an activation pattern $a\in \{0,1\}^{s(N)}$ of a ReLU network $N$ is a unique certificate for a (proper) activation region, which can be verified in polynomial time by computing the dimension of the set $S_a$, see \Cref{lem:dim_of_activation_region_is_poly_time_computable}.
More modifications are necessary to show \#\P{} containment also for \Cref{def:region3,def:region4}.

To prove \#\P-hardness, we reduce from the problem of counting the number of cells of a hyperplane arrangement which is \#\P-complete, see \citep{linial1986hard}.
Starting from a hyperplane arrangement $\mathcal{H}$ in $\mathbb{R}^n$, we carefully construct a neural network whose linear regions exactly correspond to the cells of the hyperplane arrangement. With proper technical adjustments, this works for all six definitions.
\end{proof} 
\citet{linial1986hard} proved the \#\P-completeness of counting the number of cells of a hyperplane arrangement by reducing from the \#\P-complete problem of counting the number of acyclic orientations of a graph.
His reduction implies that \textsc{Linear region counting} remains \#\P-hard even for networks with one hidden layer where $A^{(2)}=(1,\dots,1)$, $b^{(1)}=0$, $b^{(2)}=0$, and $A^{(1)}$ is the transpose of an incidence matrix of a directed graph.

It is an open problem whether \textsc{Linear region counting} is in \#\P{} for \Cref{def:region5,def:region6}.
Notice that a single activation pattern does not suffice as a certificate, since two proper activation regions with a non-empty intersection can have the same affine function.
For example, consider the function $\max(0,x)+\max(0,-y)-\max(0,x-y)$ with zero regions $(-\infty, 0]\times [0,\infty)$ and $[0,\infty) \times (-\infty, 0]$.

To the best of our knowledge, it is unknown whether there is a polynomial factor approximation algorithm for approximating the number of cells in a hyperplane arrangement.
Thus, it is also an open problem whether \textsc{Linear region counting} has a polynomial factor approximation algorithm that runs in polynomial time.

\section{Counting regions: going beyond one hidden layer}
\label{sec:deep}
Here, we prove hardness results for ReLU networks with more than one hidden layer. Detailed proofs of the statements in this section are given in \Cref{sec:omitted_proofs_deep}, together with more detailed discussions providing additional intuition for some of the proofs.

\subsection{Hardness of the decision version}
From a result of \citet{wang2022estimation}, the following theorem follows immediately.
\begin{theorem}[\citep{wang2022estimation}]
    For any fixed constant $K\in \mathbb{N}_{\geq 1}$,\textsc{$K$-region-decision} for ReLU networks of depth $\Theta(\log n)$ is \NP-hard according to \Cref{def:region3,def:region4,def:region5,def:region6}.
\end{theorem}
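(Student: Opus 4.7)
The plan is to unpack \citet{wang2022estimation}'s reduction and combine it with \Cref{thm:region_hierarchy} to cover all four function-based definitions in a single stroke. Their construction maps an NP-hard instance to a ReLU network $N$ of depth $\Theta(\log n)$ exhibiting a sharp YES/NO dichotomy: in the NO case, $f_N$ collapses to a single affine map on all of $\mathbb{R}^n$, whereas in the YES case, $f_N$ exhibits strictly more than $K$ distinct affine behaviors. The first step is to verify that the YES side of the dichotomy can be phrased at the level of the coarsest count, namely $R_6$ (affine regions), which has no connectivity or convexity requirement at all and is therefore the cleanest place to certify ``many regions''.

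Once the dichotomy is written as $R_6 > K$ in YES instances and $R_3 = 1$ in NO instances, the chain $R_6 \leq R_5 \leq R_4 \leq R_3$ from \Cref{thm:region_hierarchy} propagates both bounds uniformly: every $R_i$ with $i \in \{3,4,5,6\}$ exceeds $K$ in the YES case and equals $1$ in the NO case. Hence the same reduction simultaneously witnesses NP-hardness of \textsc{$K$-region-decision} under each of \cref{def:region3,def:region4,def:region5,def:region6}. If Wang's theorem is only stated for one particular $K$, one lifts it to arbitrary fixed $K \geq 1$ by a padding trick: append a small, shallow ReLU gadget acting on a distinguished coordinate that deterministically adds a fixed, instance-independent number of affine pieces to $f_N$, shifting the decision threshold without affecting the YES/NO distinction. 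Because a constant amount of padding does not change the asymptotic depth $\Theta(\log n)$, the reduction remains valid.

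The main obstacle, already flagged in the introductory footnote, is that \citet{wang2022estimation} are ambiguous about which definition their reduction literally proves. A careful re-inspection of their construction is therefore needed to confirm that the ``regions'' produced in the YES case remain distinct when viewed as inclusion-maximal affine pieces of $f_N$, not merely as distinct activation patterns of the network. Should two activation patterns coincidentally realize the same affine function on full-dimensional sets, a small generic perturbation of the weights and biases breaks such coincidences without creating or destroying cells of the induced hyperplane arrangement, forcing $R_6 = R_1$ on the perturbed network and making the hierarchy argument go through. Once that technical check is complete, the theorem is essentially immediate from \citet{wang2022estimation} together with \Cref{thm:region_hierarchy}.
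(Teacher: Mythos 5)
Your main line of argument is essentially the paper's own: the paper gives no standalone proof of this theorem, stating only that it "follows immediately" from the reduction of \citet{wang2022estimation}, and the implicit mechanism is exactly the squeeze you describe — certify the YES case at the level of $R_6$ and the NO case at the level of $R_3$, then let the chain $R_6\leq R_5\leq R_4\leq R_3$ from \Cref{thm:region_hierarchy} propagate the lower bound upward and the trivial bound $R_i\geq 1$ close the NO case for all four function-based definitions simultaneously. Your lifting to arbitrary fixed $K$ by appending a shallow gadget outside the relevant part of the input space also mirrors what the paper later does explicitly in the proof of \Cref{thm:np_hardness_Kregion_decision}.

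One caveat: your fallback of applying a "small generic perturbation of the weights and biases" to break coincidences of affine functions is not sound as stated. The NO-case collapse to a single affine (indeed constant) function relies on exact cancellations among the neurons' contributions; a generic perturbation destroys these cancellations and would typically create many spurious regions in NO instances, breaking the reduction. Moreover the reduction cannot apply the perturbation only to YES instances, since it does not know which case it is in. Fortunately this fallback is not needed: the network in Wang's reduction computes a minimum of $n+1$ affine terms, which in the YES case realizes more than $K$ \emph{distinct} affine functions on full-dimensional sets, so the certificate already lives at the $R_6$ level and the hierarchy argument goes through directly.
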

In their reduction from 3-\SAT, they construct a network computing a minimum of $n+1$ terms. As known constructions for computing the minimum require depth $\Theta(\log n)$, this leads to hardness for counting regions of networks with depth $\Theta(\log n)$.
With \Cref{thm:np_hardness_Kregion_decision}, we improve on the result by showing that the problem remains $\NP$-hard even for networks with two hidden layers.
\begin{theorem}\label{thm:np_hardness_Kregion_decision}
    For any fixed constants $K, L\in \mathbb{N}_{\geq 1}$, $L\geq 2$, \textsc{$K$-region-decision} for ReLU networks with $L$ hidden layers is \NP-hard for \Cref{def:region3,def:region4,def:region5,def:region6}.
\end{theorem}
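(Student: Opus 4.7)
The plan is to reduce from 0/1 integer programming feasibility, which is $\NP$-hard: given $A\in \mathbb{Z}^{m\times n}$ and $b\in \mathbb{Z}^m$, decide whether there exists $x\in\{0,1\}^n$ with $Ax\leq b$. I first handle the base case $K=1$, $L=2$, and then bootstrap to general $K$ and $L$.

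\textbf{Base case $K=1$, $L=2$.} Given an IP instance $(A,b)$, construct the one-hidden-layer, nonnegative CPWL function
\[
F(x) \;=\; \sum_{i=1}^n q_i(x_i) \;+\; \sum_{j=1}^m \max(0,\, a_j^\top x - b_j),
\]
where $q_i(x_i)=\min(|x_i|,|x_i-1|)$ is the distance from $x_i$ to $\{0,1\}$. Each $q_i$ is a univariate CPWL function with breakpoints at $0,\tfrac12,1$, realized by four ReLU neurons. Then $F\geq 0$ with $F(x)=0$ iff $x\in\{0,1\}^n$ and $Ax\leq b$, and $F(x)\to\infty$ as $\|x\|\to\infty$. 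Hence $F$ attains its minimum at a vertex of the polyhedral complex cut out by the hyperplanes $\{x_i\in\{0,\tfrac12,1\}\}$ and $\{a_j^\top x=b_j\}$. A Cramer's-rule bound on these vertices yields, in polynomial time, a rational $\delta>0$ of polynomial bit-length such that $\min_x F(x)\geq\delta$ whenever the IP is infeasible. Setting $c:=\delta/2$, define the 2-hidden-layer network $N$ computing $f_N(x)=\max(0,\,c-F(x))$. If the IP is infeasible then $c-F\leq -\delta/2$ everywhere, so $f_N\equiv 0$ and $N$ has exactly one linear region under each of \cref{def:region3,def:region4,def:region5,def:region6}. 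If the IP is feasible, some $x^*$ yields $f_N(x^*)=c>0$ while $f_N$ vanishes wherever $F>c$ (such points exist since $F\to\infty$), so $f_N$ is non-affine and has at least two linear regions. This proves $\NP$-hardness of \textsc{1-region-decision} for two hidden layers under all four definitions.

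\textbf{Boosting $K$ and $L$.} Let $M(y)=\sum_{i=1}^{K-1}\max(0,y-i)$, a 1-hidden-layer univariate CPWL function with exactly $K$ linear regions. Build the 2-hidden-layer network $N'$ on input $(x,y)\in\mathbb{R}^{n+1}$ computing $f_{N'}(x,y)=f_N(x)+M(y)$: place the ReLUs of $F$ together with the $K-1$ ReLUs $\sigma(y-i)$ in the first hidden layer, and the single ReLU $\max(0,c-F(x))$ together with $K-1$ pass-through ReLUs in the second (applying $\sigma$ to the first-layer $M$-neurons acts as the identity since those outputs are nonnegative). Since $f_N$ and $M$ depend on disjoint coordinates, every minimal compatible polyhedral complex of $f_{N'}$ factors as a product, so under each of \cref{def:region3,def:region4,def:region5,def:region6} the region count of $N'$ equals the product of the region counts of $N$ and $M$: exactly $K$ if the IP is infeasible and at least $2K>K$ if feasible. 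For $L>2$, prepend $L-2$ identity ReLU layers (using $z=\sigma(z)-\sigma(-z)$), which preserve $f_{N'}$ and hence all four region counts.

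\textbf{Main obstacle.} The crux is the gap step: producing a concrete $\delta>0$ of polynomial bit-length such that $\min_x F(x)\geq\delta$ whenever the IP is infeasible, \emph{even though the minimization ranges over $\mathbb{R}^n$ rather than $\{0,1\}^n$}. This combines the properness of $F$ (which forces the minimum to be attained at a vertex of an integer-coefficient hyperplane arrangement) with a standard Cramer's-rule bound on such vertices. A subsidiary technicality is verifying that the product-of-regions formula holds for each of \cref{def:region3,def:region4,def:region5,def:region6}; although routine, this must be done case by case because the four definitions treat boundaries and connectivity differently.
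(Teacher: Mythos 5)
Your reduction is sound in outline but takes a genuinely different route from the paper's. For the base case the paper reduces from \SAT{}: it adds to the clause-penalty sum a gadget $T_n$ that vanishes exactly on $\{0,1\}^n$ and is at most $-\varepsilon$ outside the $\varepsilon$-cubes around $0/1$ points, with the explicit margin $\varepsilon=1/(n+1)$; you instead reduce from $0/1$ integer programming with a distance-to-$\{0,1\}$ penalty and obtain the margin $\delta$ from a Cramer's-rule bound at the vertices of the induced arrangement. Both instantiate the same template $\max(0,\ \text{margin}-\text{penalty})$, and your gap argument is correct: every cell of your arrangement sign-constrains each coordinate, hence is pointed, so the coercive CPWL function $F$ attains its minimum at a rational vertex of polynomial bit-length. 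For the boosting step the paper stays in the same input space and subtracts a staircase in $x_1$ whose breakpoints lie outside the cube containing all nonzero regions, so the counts are read off directly; you add a fresh coordinate $y$ and invoke a product formula for region counts.

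That product formula is the one genuine gap. The claim that the region count of $f_N(x)+M(y)$ \emph{equals} the product of the two counts under each of \cref{def:region3,def:region4,def:region5,def:region6} is not routine and is false in general: for $g(t)=h(t)=\min(1,\max(0,t))$, the sum $g(x)+h(y)$ has only $8$ affine regions rather than $9$, because the piece sums $1+0$ and $0+1$ coincide and the two full-dimensional cells carrying the constant $1$ merge into one affine region. Exact multiplicativity therefore needs at least that all sums $g_i(x)+m_j(y)$ of piece functions be pairwise distinct, and even then maximality and connectivity of the product sets must be checked separately for \cref{def:region4,def:region5}. Fortunately you do not need the full formula. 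In the infeasible case $f_N\equiv 0$, so $f_{N'}=M(y)$ has exactly $K$ regions under all four definitions by inspection. In the feasible case it suffices to exhibit $K+1$ affine functions each realized on a full-dimensional set: the $K$ pieces $m_1,\dots,m_K$ of $M$, realized on $\{F>c\}\times V_j$ and pairwise distinct since their $y$-slopes differ, together with one nonconstant piece $g$ of $f_N$ (one must exist, else $f_N$ would be globally constant yet take both values $0$ and $c$), which has nonzero $x$-gradient and hence differs from every $m_j$. This gives $R_6\geq K+1$, and \cref{thm:region_hierarchy} transfers the bound to \cref{def:region3,def:region4,def:region5}. With that replacement your argument goes through; this is also how the paper sidesteps the same issue in its approximation-hardness proof, proving the product lower bound only for affine regions and invoking the hierarchy.
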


As a consequence, we even obtain hardness of the question whether there exists more than a single region. Proving this special case is also the first step of proving \Cref{thm:np_hardness_Kregion_decision}, as captured by the following lemma for the special case $K=1$ and $L=2$.
\begin{lemma}\label{thm:np_completeness_2region_decision_problem}
\textsc{1-region-decision} for ReLU networks with two hidden layers is \NP-complete according to \Cref{def:region3,def:region4,def:region5,def:region6}.
\end{lemma}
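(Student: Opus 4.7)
The plan is to prove \NP{} membership and \NP{}-hardness separately, using throughout the fact that under each of \cref{def:region3,def:region4,def:region5,def:region6} the network has exactly one region iff the computed function $f_N$ is affine. For membership, I would take as a certificate a pair of activation patterns $a_1,a_2\in\{0,1\}^{s(N)}$ such that both activation regions $S_{a_1},S_{a_2}$ are full-dimensional and the affine functions $f^{a_1}_N,f^{a_2}_N$ differ. Both conditions are polynomial-time verifiable: full-dimensionality of the polyhedron $\overline{S_{a}}$ reduces to a linear program, and $f^{a}_N$ can be read off from $(a,N)$ in polynomial time. If such a pair exists, $f_N$ coincides with two distinct affine functions on non-empty open sets and is therefore not affine. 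Conversely, if $f_N$ is not affine, the finitely many full-dimensional activation regions (whose union is dense in $\mathbb{R}^n$) cannot all induce the same affine restriction, since $f_N$ would otherwise agree with a single affine function on a dense set and hence everywhere by continuity.

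For hardness I would reduce from \textsc{3-SAT}. Given a formula $\phi$ with $n$ variables and $m$ clauses, I construct a two-hidden-layer ReLU network $N_\phi$ computing $f_{N_\phi}(x)=\max(0,g_\phi(x))$, where the first hidden layer of width $O(n+m)$ together with the affine map into the single-neuron second layer implements
\[
g_\phi(x)=\sum_{i=1}^n b(x_i)-C\sum_{j=1}^m V_j(x)-\tfrac{n}{2}+\epsilon,
\]
with $C=1$, $\epsilon=1/(6m)$, the ``Boolean reward'' $b(x_i)=\max(0,x_i-\tfrac12)+\max(0,\tfrac12-x_i)-2\max(0,x_i-1)-2\max(0,-x_i)$, and the clause penalty $V_j(x)=\max(0,1-\ell_{j,1}(x)-\ell_{j,2}(x)-\ell_{j,3}(x))$ with $\ell_{j,k}(x)\in\{x_i,1-x_i\}$ encoding the literals of $C_j$. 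By design, $b(x_i)\leq\tfrac12$ with equality exactly at $x_i\in\{0,1\}$, $\tfrac12-b(x_i)=\mathrm{dist}(x_i,\{0,1\})$, and $V_j(x)=0$ exactly when $C_j$ is satisfied by $x$; all parameters have polynomial bit length in $\langle\phi\rangle$.

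Correctness follows from a case analysis. If $\phi$ is satisfied by $b^\star\in\{0,1\}^n$, then $g_\phi(b^\star)=\epsilon>0$, while $g_\phi$ becomes arbitrarily negative as $\|x\|\to\infty$, so $\max(0,g_\phi)$ attains both the value $0$ and strictly positive values and is not affine. If $\phi$ is unsatisfiable, I show $g_\phi\leq 0$ everywhere by splitting on $\sum_i b(x_i)$: either $\sum_i b(x_i)\leq n/2-\epsilon$, in which case the claim is immediate, or this sum exceeds $n/2-\epsilon$, which forces each $x_i$ to be within $\epsilon$ of $\{0,1\}$; then the nearest Boolean point $\hat b$ fails to satisfy $\phi$, so $\sum_j V_j(\hat b)\geq 1$, and $1$-Lipschitzness of each $V_j$ in its three relevant coordinates yields $\sum_j V_j(x)\geq 1-3m\epsilon\geq\tfrac12$, whence $g_\phi(x)\leq -C/2+\epsilon<0$. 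Hence $\max(0,g_\phi)\equiv 0$ is affine and $N_\phi$ has exactly one region under all four definitions.

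The main obstacle will be this parameter balance: $\epsilon$ must be small enough that $3m\epsilon<\tfrac12$, so the clause penalty still dominates at near-Boolean non-satisfying points, yet large enough to supply the strictly positive witness $g_\phi(b^\star)=\epsilon$; and $C$ must be large enough to absorb this $\epsilon$ while the overall encoding stays polynomial. The choices $\epsilon=1/(6m)$ and $C=1$ thread the needle, and verifying the two cases rigorously (including the Lipschitz bounds on $V_j$ and the identity $\tfrac12-b(x_i)=\mathrm{dist}(x_i,\{0,1\})$) is where the bulk of the formal proof will go.
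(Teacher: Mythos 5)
Your proposal is correct and follows essentially the same route as the paper: the same two-activation-pattern certificate for \NP{} membership, and the same hardness reduction, since your $\sum_{i=1}^n b(x_i)-\tfrac{n}{2}$ equals $-\sum_{i=1}^n\mathrm{dist}(x_i,\{0,1\})$, which is exactly the paper's gadget $T_n$, combined with the identical clause penalties and outer $\max(0,\cdot)$. The only differences are cosmetic (reducing from 3-\SAT{} with $\epsilon=1/(6m)$ and a Lipschitz argument rather than from \SAT{} with $\varepsilon=1/(n+1)$), and your case analysis is sound.
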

\begin{proof}[Proof sketch]
We reduce from \SAT. Given a \SAT{} formula $\phi$, we carefully construct a neural network $N_\phi$ with two hidden layers that has nonzero regions contained in $\varepsilon$-hypercubes around (0-1) points that satisfy $\phi$ and is constantly zero anywhere else. In this way, if $\phi$ is unsatisfiable, then $N_\phi$ computes the constant zero function and has exactly one linear region. If $\phi$ is satisfiable, then $N_\phi$ has strictly more than one linear region (one zero region and at least one nonzero region).
\end{proof}

Given a 3-SAT formula $\phi$ with $m$ clauses, the network $N_\phi$ from the reduction in the proof of \Cref{thm:np_completeness_2region_decision_problem} has input dimension and width $\mathcal{O}(m)$, whereas the network that is created in the reduction of \Cref{thm:np_hardness_Kregion_decision} has input dimension $\mathcal{O}(m)$ and width $\mathcal{O}(m+K)$.
We note that there is an alternative way to prove the \NP-hardness of \Cref{thm:np_completeness_2region_decision_problem}.
\citet[Theorem 18]{froese2024complexity} show that the problem of deciding whether or not a network without biases with one hidden layer has a point which evaluates to a positive value is \NP-complete.
By taking the maximum of the output of the network used in their reduction with the zero function, we obtain the \NP-hardness of \Cref{thm:np_completeness_2region_decision_problem}.
However, our reduction offers a new perspective on the difficulty of the problem. In fact, the ideas used in our reduction are built upon in \Cref{sec:hardness of approximation} to obtain results on the hardness of approximation of \textsc{Linear region counting}.
Moreover, our reduction has different properties, for example, all nonzero linear regions are bounded. This is not possible without biases, since then, all nonzero regions correspond to a union of polyhedral cones.

\Cref{thm:np_hardness_Kregion_decision} can be proven using \Cref{thm:np_completeness_2region_decision_problem} in two steps.
First, we can extend the hardness result of \Cref{thm:np_completeness_2region_decision_problem} from \textsc{1-region-decision} to \textsc{$K$-region-decision} by adding a new function with $K$ linear regions, and second, we can increase the number of hidden layers of the resulting network from $2$ to $L$ by adding $L-2$ additional hidden layers that compute the identity function.

As a corollary of \Cref{thm:np_hardness_Kregion_decision}, we obtain insights on the following decision problem.

\textsc{$L$-network-equivalence}
\\
\textbf{Input:} Two ReLU networks $N, N'$ with $L$ hidden layers.
\\
\textbf{Question:} Do the networks $N$ and $N'$ compute the same function?

Two ReLU networks compute the same function if and only if the difference of the networks is the zero function.
Since this difference can be computed by a single ReLU network, we obtain the following.
\begin{corollary}
\label{cor:euqality_of_two_relu_networks}
    \textsc{1-network-equivalence} is in \P{}, and, for any fixed constant $L\geq 2$, \textsc{$L$-network-equivalence} is $\coNP$-complete.
\end{corollary}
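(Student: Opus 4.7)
The plan is to exploit the hint embedded in the statement: two ReLU networks $N$ and $N'$ compute the same function iff their difference is identically zero, and $f_N-f_{N'}$ can itself be realized as a single ReLU network $N''$ with the same number $L$ of hidden layers by running $N$ and $N'$ in parallel. Concretely, I would use block-diagonal weight matrices $\mathrm{diag}(A^{(i)}_N, A^{(i)}_{N'})$ and stacked biases in each hidden layer, and combine the two output streams in the final layer with weights $(A^{(L+1)}_N, -A^{(L+1)}_{N'})$ and bias $b^{(L+1)}_N - b^{(L+1)}_{N'}$. The whole corollary thus reduces to deciding whether a given ReLU network $N''$ with $L$ hidden layers computes the constant zero function.

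For $L=1$, I would apply \Cref{thm:polyresultfor1layerdecision} (say, with respect to \Cref{def:region3}) to $N''$ to decide in polynomial time whether $N''$ has exactly one linear region. If it has more, then $f_{N''}$ is not affine, hence not zero, and I output \emph{not equivalent}. Otherwise $f_{N''}$ is affine, and an affine function on $\mathbb{R}^n$ is the zero function iff it vanishes at $n+1$ affinely independent points; I would evaluate $N''$ at $0, e_1, \dots, e_n$ via polynomial-time forward passes and output \emph{equivalent} iff all $n+1$ evaluations return zero.

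For $L\geq 2$, coNP-hardness follows by a direct reduction from UNSAT. Given a \SAT{} formula $\phi$, I would take the network $N_\phi$ from \Cref{thm:np_completeness_2region_decision_problem}, which is identically zero iff $\phi$ is unsatisfiable and has more than one linear region otherwise. Setting $N:=N_\phi$ and $N':=$ the all-zero network with two hidden layers yields $N\equiv N'$ iff $\phi$ is unsatisfiable. For $L>2$, I would pad both networks with $L-2$ additional hidden layers implementing the identity via $x\mapsto \max(0,x)-\max(0,-x)$, which leaves the computed functions unchanged and hence preserves equivalence.

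Containment in coNP comes from placing non-equivalence in \NP{}: the witness is a rational point $x\in\mathbb{Q}^n$ of polynomial bit-size with $f_N(x)\neq f_{N'}(x)$, verifiable by a forward pass in polynomial time. The main technical point, and the only real obstacle I foresee, is the polynomial bit-size bound on this witness. If $f_{N''}\not\equiv 0$, then $f_{N''}$ is a nonzero affine function on some full-dimensional activation region of $N''$. Since $L$ is a fixed constant, the linear inequalities defining that region have polynomial bit-size in $\langle N''\rangle$, so standard linear-programming arguments yield a rational interior point of polynomial bit-size; evaluating $f_N$ and $f_{N'}$ at this point then certifies non-equivalence. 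Everything else follows directly from the reduction to zero-testing.
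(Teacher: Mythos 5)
Your proposal is correct and follows essentially the same route as the paper: the parallel ``difference network'' $N^-$ reducing equivalence to zero-testing, the use of \Cref{thm:polyresultfor1layerdecision} plus evaluation at $n+1$ affinely independent points for $L=1$, and the reduction from UNSAT via $N_\phi$ against an all-zero network (the paper uses the already-padded network $N^{(1)}_\phi$ from \Cref{thm:np_hardness_Kregion_decision} rather than padding separately, but that is the same construction). The one genuine difference is the certificate for $\coNP$-membership: you witness non-equivalence by a rational point of polynomial bit-size where the two networks disagree, whereas the paper uses an activation pattern $a\in\{0,1\}^{s(N^-)}$ whose proper activation region carries a nonzero affine function, verified via \Cref{lem:dim_of_activation_region_is_poly_time_computable,lem:pattern_to_function_is_poly_time_computable}. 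The paper's certificate is trivially of polynomial size and avoids the bit-size analysis you correctly identify as the technical crux of your version; if you keep your point-based certificate, note the small remaining detail that an arbitrary polynomial-size interior point of the region could still be a zero of the (nonzero) affine function there, so you should, e.g., take the LP optimum of $\max f^a_{N^-}(x)$ or $\min f^a_{N^-}(x)$ over the region, at least one of which is nonzero and has polynomial encoding size.
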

We also obtain the following runtime lower bound based on the Exponential Time Hypothesis.\footnote{The Exponential Time Hypothesis~\citep{impagliazzo2001complexity} states that 3-\SAT\ on $n$ variables cannot be solved in $2^{o(n)}$ time.}
\begin{corollary}
\label{cor:eth_lower_bound}
    For any fixed constants $K,L\in \mathbb{N}, L\geq 2$, \textsc{$K$-region-decision} and \textsc{Linear region counting} for \Cref{def:region3,def:region4,def:region5,def:region6} for ReLU networks with input dimension $n$ and $L$ hidden layers cannot be solved in $2^{o(n)}$ or $2^{o(\sqrt{\langle N\rangle})}$ time unless the Exponential Time Hypothesis fails.
\end{corollary}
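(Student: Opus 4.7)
The plan is to trace through the reduction of \Cref{thm:np_completeness_2region_decision_problem} (together with the extensions used for \Cref{thm:np_hardness_Kregion_decision}), combine it with the Sparsification Lemma, and then apply a standard ETH argument. By the Sparsification Lemma, ETH implies that 3-\SAT{} on $n$ variables with $m = O(n)$ clauses cannot be solved in $2^{o(n)}$ time. The first step is therefore to verify that the reduction from a \SAT{} instance $\phi$ to a ReLU network $N_\phi$ is efficient in the right sense: the input dimension of $N_\phi$ equals the number of variables of $\phi$ (so it is $n$), the network has $O(n+m)=O(n)$ neurons arranged in two hidden layers, and all weights and biases are rationals of bounded bit-length. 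A direct inspection of the construction of $N_\phi$ should yield $\langle N_\phi\rangle = O(n^2)$, and hence $\sqrt{\langle N_\phi\rangle} = O(n)$.

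Given this, the ETH lower bound follows by a contrapositive argument. Suppose for contradiction there were a $2^{o(n)}$-time algorithm for \textsc{1-region-decision} on two-hidden-layer networks of input dimension $n$. Given a sparsified 3-\SAT{} instance $\phi$ on $n$ variables, we construct $N_\phi$ in polynomial time and invoke the algorithm; by \Cref{thm:np_completeness_2region_decision_problem}, its output decides satisfiability of $\phi$ in $2^{o(n)}$ time, contradicting ETH. The same reduction, combined with $\langle N_\phi\rangle = O(n^2)$, rules out $2^{o(\sqrt{\langle N\rangle})}$-time algorithms, since then $2^{o(\sqrt{\langle N_\phi\rangle})} = 2^{o(n)}$.

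To extend to arbitrary fixed constants $K \geq 1$ and $L \geq 2$, I would invoke the two constant-size modifications already described after \Cref{thm:np_completeness_2region_decision_problem}: attaching a gadget that contributes $K-1$ additional linear regions, and inserting $L-2$ identity layers. Both modifications add only $O(K+L) = O(1)$ neurons and leave the input dimension unchanged, so $\langle N\rangle$ remains $O(n^2)$ and the ETH bounds transfer unchanged. For \textsc{Linear region counting}, any algorithm producing the exact count immediately solves \textsc{$K$-region-decision} by a single comparison against $K$, so the lower bound applies to counting as well.

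The main obstacle will be the fine-grained encoding-size analysis of the reduction. The $2^{o(\sqrt{\langle N\rangle})}$ bound is only as strong as the quadratic upper bound $\langle N_\phi\rangle = O(n^2)$: if the construction in \Cref{thm:np_completeness_2region_decision_problem} inherently requires weights with polynomially many bits (for instance, to encode very small $\varepsilon$-scales or thresholds with super-logarithmic precision), then $\langle N_\phi\rangle$ could grow to $\omega(n^2)$, which would weaken the conclusion to a bound like $2^{o(n^{1/2-\delta})}$. Confirming that each weight and bias in the construction admits a representation of at most $O(\log n)$ bits — and, if necessary, lightly re-engineering the gadgets so that this holds — is therefore the key technical step before the ETH calculation closes.
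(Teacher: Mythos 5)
Your proposal is correct and takes essentially the same route as the paper: both arguments reduce from (sparsified) 3-\SAT{} via the network $N^{(K,L)}_\phi$ of \Cref{thm:np_hardness_Kregion_decision}, bound its encoding size by $\mathcal{O}(n^2+nm)=\mathcal{O}(m^2)=\mathcal{O}(n^2)$, and apply the standard ETH transfer, with \textsc{Linear region counting} inheriting the bound from the decision problem. The encoding-size obstacle you flag resolves positively: in the construction $\varepsilon = 1/(n+1)$ and all other weights and biases are integers of magnitude $\mathcal{O}(n+m)$, so every parameter needs only $\mathcal{O}(\log(n+m))$ bits.
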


The $2^{o(n)}$ lower bound can be seen as another example of the curse of dimensionality in machine learning.
As the input dimension grows, the problem quickly becomes intractable.

\subsection{Hardness of exact and approximate counting}
\label{sec:hardness of approximation}
Here, we show that even \emph{approximating} the number of linear region is hard for certain definitions.
We prove two inapproximability results for different network architectures.
For the first result, we use the proof ideas of \Cref{thm:np_completeness_2region_decision_problem} to show the following lemma.
\begin{lemma}
\label{thm:sharp_sat_reduction}
For any fixed constant $L\in \mathbb{N}, L\geq 3$, there is a reduction from $\#$\SAT\ to \textsc{Linear region counting} for networks with $L$ hidden layers according to \Cref{def:region4,def:region5}.
\end{lemma}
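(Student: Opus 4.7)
The plan is to amplify the two-hidden-layer construction $N_\phi$ from the proof of Lemma~\ref{thm:np_completeness_2region_decision_problem} so that the total number of linear regions becomes an invertible affine function of the satisfying-assignment count $s(\phi)$. First, I would revisit $N_\phi$ and observe that, since every $\varepsilon$-hypercube is a translate of $[-\varepsilon/2,\varepsilon/2]^n$ and the construction is symmetric under the coordinate flips $x_i\mapsto 1-x_i$ induced by any assignment, the restriction of $N_\phi$ to every satisfying hypercube is the same function up to a rigid motion. In particular, each such hypercube contributes the same number of open (resp.\ closed) connected regions, and because the complement of finitely many disjoint small hypercubes in $\mathbb{R}^n$ (with $n \geq 2$) is connected, the exterior is a single zero region.

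Next, I would attach a third hidden layer by composing $N_\phi$ with a one-hidden-layer piecewise-linear post-processing $g:\mathbb{R}\to\mathbb{R}$ satisfying $g(0)=0$ and having $k$ slope changes strictly inside the range of $N_\phi$ on a hypercube, chosen so that these breakpoints are hit transversally by the nonzero level sets of $N_\phi$. The resulting network $N':=g\circ N_\phi$ has $3$ hidden layers, still vanishes outside the hypercubes, and, inside each satisfying hypercube, has exactly $c=c(k,n)$ connected regions --- a constant determined by the construction alone, not by $\phi$. Consequently, the total region count satisfies
\[
R(\phi) \;=\; c\cdot s(\phi) + 1,
\]
from which $s(\phi)=(R(\phi)-1)/c$ is recovered in polynomial time, yielding the desired reduction from $\#$\SAT. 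For $L>3$, I would pad the network with $L-3$ hidden layers computing the identity, which does not change the region count but raises the depth to $L$.

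The hard part will be verifying the per-hypercube uniformity simultaneously for Definition~\ref{def:region4} and Definition~\ref{def:region5}. Under Definition~\ref{def:region5}, two distinct closed connected regions can merge through a low-dimensional slice, as in Figure~\ref{fig:activation_regions}, which could change the count relative to Definition~\ref{def:region4} or between hypercubes. I would control this by taking $\varepsilon$ small and the breakpoints of $g$ bounded away from $0$, so that the zero level set of $g\circ N_\phi$ inside each hypercube meets the exterior zero region only along a full-dimensional boundary and no low-dimensional slice links an interior region to the exterior. The translation symmetry between satisfying hypercubes then guarantees that the same $c$ works uniformly, and the same reduction succeeds for both Definition~\ref{def:region4} and Definition~\ref{def:region5}.
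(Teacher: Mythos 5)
There is a genuine gap, and it sits exactly at the point the paper identifies as the main difficulty. Your argument rests on the claim that ``the restriction of $N_\phi$ to every satisfying hypercube is the same function up to a rigid motion,'' justified by the coordinate flips $x_i\mapsto 1-x_i$. This is false: the flips are a symmetry of $T_n$, but not of the clause terms $\max(0,\,1-\sum_{j\in J^+_i}x_j-\sum_{j\in J^-_i}(1-x_j))$, whose local behaviour near a satisfying point $x^*$ depends on how many literals of clause $i$ are satisfied at $x^*$. If clause $i$ has two or more true literals at $x^*$, its term is identically zero on the whole $\varepsilon$-hypercube and contributes no breakpoints there; if it has exactly one, the term crosses zero inside the hypercube and does create breakpoints. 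Since different satisfying assignments make different clauses ``tight'' in this sense, the number of connected regions of $f_{N_\phi}$ inside a satisfying hypercube genuinely depends on $\phi$ and on the particular assignment, not only on $n$. Post-composing with a scalar function $g$ does not repair this: the regions of $g\circ N_\phi$ inside a hypercube are determined by how the level sets of $N_\phi$ sit there, which is exactly the non-uniform data. So the asserted identity $R(\phi)=c\cdot s(\phi)+1$ with a $\phi$-independent constant $c$ is not established, and without it the reduction does not recover $s(\phi)$.

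The paper's proof closes this gap differently: it takes the \emph{minimum} of $f_{N_\phi}$ (with the constant adjusted to $1-(n+1)\varepsilon$ and $\varepsilon=1/(2+n+nm)$) with the fixed gadget $T_{n,\varepsilon}=\max(0,\varepsilon+T_n)$. One then shows $f_{N_\phi}\geq\varepsilon\geq T_{n,\varepsilon}$ on the entire hypercube around every satisfying point, so the minimum there equals $T_{n,\varepsilon}$, a function whose local region structure is completely independent of $\phi$ and contributes exactly $2^n$ nonzero regions per satisfying assignment (\Cref{lem:teps_regions}); around non-satisfying points the minimum is identically zero. This overwriting step is the idea your proposal is missing. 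Your instinct to worry about \cref{def:region5} merges through low-dimensional slices is reasonable, but it is secondary to the uniformity problem above.
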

\begin{proof}[Proof sketch]
Given a \SAT\ formula $\phi$, the network $N_\phi$ from the proof of \Cref{thm:np_completeness_2region_decision_problem} has some nonzero linear regions contained in the $\varepsilon$-hypercube around every satisfying (0-1) point of $\phi$.
In order to get control over the number of linear regions created per satisfying point, we carefully need to modify the network $N_\phi$ such that every satisfying assignment of $\phi$ creates the same number of nonzero linear regions. This yields a simple formula relating the number of linear regions of the ReLU network with the number of satisfying assignments of $\phi$.
We achieve this by taking the minimum of the modified network with an appropriate function, which increases the number of hidden layers by one but does not change the width compared to $N_\phi$.
The reduction can be extended to ReLU networks with $L\geq 3$ hidden layers as before.
\end{proof}
We note that \Cref{thm:sharp_sat_reduction} does not hold for \Cref{def:region6}, since the constructed network has multiple closed connected regions with the same affine function.
\Cref{thm:sharp_sat_reduction} shows that a network having few regions does not necessarily imply that the regions of the network are ``easy to count''.
On the contrary, it shows that instances with relatively few regions can lead to \#\P-hard counting problems.

The reduction from \#\SAT{} implies that \emph{approximating} \textsc{Linear region counting} is intractable as well, in the following sense: We define an \emph{approximation algorithm} achieving approximation ratio $\rho \leq 1$ as an algorithm that is guaranteed\footnote{The algorithm may be probabilistic and return the correct answer with probability bounded away from $1/2$.} to return, given a network $N$ as input, a number that is at least $\rho$ times the number of regions of $N$. In fact, even though \Cref{thm:sharp_sat_reduction} only holds for \Cref{def:region4,def:region5}, it is sufficient to prove the inapproximability result also for \Cref{def:region3,def:region6}.
\begin{theorem}
\label{exponential_hardness_of_approximation_result}
    For any fixed constant $L\in \mathbb{N}, L\geq 3$, it is \NP-hard to approximate \textsc{Linear region counting} for \Cref{def:region3,def:region4,def:region5,def:region6} within an approximation ratio larger than $(2^n+1)^{-1}$ for networks with $L$ hidden layers and input dimension $n$.
\end{theorem}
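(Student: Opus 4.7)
The strategy is to reduce from \SAT{} and to sharpen the gap produced by the reduction in Lemma~\ref{thm:sharp_sat_reduction} so that satisfiable instances of a formula on $n$ variables yield at least $2^n+1$ linear regions, while unsatisfiable ones yield exactly one. Since the constructed network will have input dimension $n$, a $\rho$-approximation with $\rho > (2^n+1)^{-1}$ then immediately distinguishes the two cases and decides \SAT.

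Given a \SAT{} formula $\phi$ on $n$ variables, I start from the network $N_\phi$ built in the proof of \Cref{thm:np_completeness_2region_decision_problem}: it is identically zero outside a disjoint collection of $\varepsilon$-hypercubes, one centered at each $0$-$1$ satisfying assignment, and is strictly positive on each such hypercube. I then apply the minimum-with-auxiliary-function construction from the proof sketch of \Cref{thm:sharp_sat_reduction}, but choose the auxiliary function on the hypercube centered at a satisfying assignment $x^*$ to be
$
h_{x^*}(x) := \sum_{i=1}^{n} 2^{i}\,\bigl|x_i - x^*_i\bigr|.
$
Inside the hypercube, $h_{x^*}$ is generated by the $n$ axis-parallel hyperplanes $\{x_i = x^*_i\}$, and therefore has exactly $2^n$ full-dimensional pieces, one for each sign pattern $s\in\{-1,+1\}^n$; on the piece indexed by $s$, the function equals $\sum_i 2^i s_i(x_i-x_i^*)$, whose gradient $(2^1 s_1,\dots,2^n s_n)$ uniquely identifies $s$. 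Small rationally independent perturbations of the slopes across different hypercubes can be added to make the affine functions pairwise distinct across the whole network. The minimum with the original bump keeps the combined function identically zero outside the hypercubes. The resulting network has at most $3$ hidden layers, and one can append $L-3$ identity layers (each implementable as $\max(0,x)-\max(0,-x)$) to match any target depth $L\geq 3$.

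Now I can read off the region counts for all four definitions. If $\phi$ is unsatisfiable, $N_\phi$ computes the zero function, so $R_3 = R_4 = R_5 = R_6 = 1$. If $\phi$ is satisfiable, then any single satisfying hypercube already contributes $2^n$ nonzero full-dimensional open (and closed) connected pieces with pairwise distinct affine functions, plus the surrounding zero region, giving $R_4, R_5, R_6 \geq 2^n+1$; by \Cref{thm:region_hierarchy}, $R_3\geq R_4 \geq 2^n+1$ as well. Suppose for contradiction that a polynomial-time $\rho$-approximation algorithm exists with $\rho > (2^n+1)^{-1}$. Running it on $N_\phi$ produces a value $v$ satisfying $\rho R \leq v \leq R$. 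Unsat then yields $v=1$; sat yields $v\geq \rho(2^n+1) > 1$. Thresholding $v$ at $1$ decides \SAT{} in polynomial time, contradicting $\P \neq \NP$.

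The main obstacle I anticipate is verifying rigorously that the composition of the \SAT{} bump with the per-hypercube subdivision $h_{x^*}$ simultaneously (i) remains identically zero outside the hypercubes, (ii) realizes all $2^n$ sign-pattern pieces as full-dimensional closed connected regions in the strong sense of \Cref{def:region5}, ruling out low-dimensional slices that could merge two of them (the phenomenon illustrated in \Cref{fig:activation_regions}), and (iii) produces pairwise distinct affine functions across the entire network for \Cref{def:region6}. Property (iii) is precisely the reason \Cref{thm:sharp_sat_reduction} is not stated for \Cref{def:region6}, so the slope choice in $h_{x^*}$ and the cross-hypercube perturbations must be set up carefully; the powers-of-two slopes above together with generic translations are chosen exactly with this in mind.
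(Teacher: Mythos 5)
Your proposal follows essentially the same route as the paper: reduce from \SAT{}, take the minimum of the network $N_\phi$ from \Cref{thm:np_completeness_2region_decision_problem} with an auxiliary function that subdivides each $\varepsilon$-hypercube around a lattice point into $2^n$ full-dimensional pieces with pairwise distinct gradients (the paper uses $T_{n,\varepsilon}$, whose pieces already have the $2^n$ distinct sign-pattern gradients, so your powers-of-two slopes are a cosmetic variant), and then exploit the gap between $1$ region (unsatisfiable) and at least $2^n+1$ regions (satisfiable), exactly as in the paper's two-line derivation from \Cref{thm:sharp_sat_reduction}. Your observation that a \emph{single} satisfying hypercube already yields $2^n$ pairwise distinct affine functions, hence $R_6\geq 2^n+1$, is precisely how the paper extends the bound to \Cref{def:region6} despite \Cref{thm:sharp_sat_reduction} failing there. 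One step you should drop: the ``small rationally independent perturbations of the slopes across different hypercubes'' are both unnecessary (cross-hypercube distinctness is not needed for the lower bound, only the exact count in \Cref{thm:sharp_sat_reduction} would need it) and unrealizable as stated, since rationally independent reals are irrational, violating the rational-weight encoding model, and a polynomial-size network built in polynomial time cannot assign a different perturbation to each of the $2^n$ hypercubes. With that step removed, the argument is correct and matches the paper's.
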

\begin{proof}
If a \SAT\ formula has no satisfying assignment, the network produced by the reduction of \Cref{thm:sharp_sat_reduction} will have exactly $1$ linear region according to \Cref{def:region3,def:region4,def:region5,def:region6}. Otherwise, it will have at least $1+2^n$ linear regions according to \Cref{def:region3,def:region4,def:region5,def:region6}. If it was possible to achieve an approximation ratio larger than $(2^n+1)^{-1}$ in polynomial time we could decide if a \SAT\ formula is satisfiable in polynomial time. This concludes the proof.   
\end{proof}
We note that very similar ideas also rule out a fully polynomial randomized approximation scheme (FPRAS) for approximating the number of linear regions. A FPRAS \citep{jerrum2003counting} is a randomized polynomial-time (in the size of the input and $1/\varepsilon$) algorithm that returns a number $T$ such that $\text{Prob}[(1-\epsilon)R  \leq T\leq (1+\epsilon)R]\geq 3/4$,
where $R$ is the number of regions of the network. \Cref{exponential_hardness_of_approximation_result} can be easily adapted to show no FPRAS can exist for estimating the number of regions of neural network unless \NP\,=\,\RP.

\Cref{exponential_hardness_of_approximation_result} does not imply hardness of approximation for ReLU networks with two hidden layers.
In the following, we show that approximation is indeed hard for networks with two hidden layers for \Cref{def:region4,def:region5,def:region6}, although with a weaker inapproximability factor than in \Cref{exponential_hardness_of_approximation_result}.
\begin{theorem}
\label{thm:approx_result_2_hidden_layers}
Given a ReLU network with two hidden layers and input dimension $n$, for every $\varepsilon \in (0,1)$ it is \NP-hard to approximate \textsc{Linear region counting} by a ratio larger than $2^{-O(n^{1-\varepsilon})}$ for \Cref{def:region3,def:region4,def:region5,def:region6}.
\end{theorem}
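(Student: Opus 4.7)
The plan is to amplify the one-versus-two-region gap from \Cref{thm:np_completeness_2region_decision_problem} into an exponential one by summing many copies of $N_\phi$ on disjoint blocks of variables, at the cost of inflating the input dimension. Given a \SAT\ instance $\phi$ on $m$ variables and a fixed $\varepsilon \in (0,1)$, set $k := \lceil m^{(1-\varepsilon)/\varepsilon}\rceil$ and let $N_\phi^{(1)}, \dots, N_\phi^{(k)}$ denote $k$ copies of the two-hidden-layer network $N_\phi$ from \Cref{thm:np_completeness_2region_decision_problem}, where copy $i$ reads the $i$-th block $x^{(i)} \in \mathbb{R}^m$ of the input. Define the network
\[
N(x^{(1)}, \dots, x^{(k)}) \;:=\; \sum_{i=1}^{k} N_\phi^{(i)}(x^{(i)})
\]
of input dimension $n = km$. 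Placing the hidden neurons of all copies in parallel and combining their outputs linearly keeps $N$ at two hidden layers with polynomial encoding size, and the choice of $k$ gives $k = \Theta(n^{1-\varepsilon})$.

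If $\phi$ is unsatisfiable, each copy is identically zero, so $N \equiv 0$ has exactly one region under any of \cref{def:region3,def:region4,def:region5,def:region6}. If $\phi$ is satisfiable, fix for each copy a full-dimensional subset $U_i$ of a nonzero region of $N_\phi^{(i)}$ and a full-dimensional subset $Z_i$ of its zero region, and let $g_i(x^{(i)}) = a_i^{\top} x^{(i)} + b_i \not\equiv 0$ denote the affine function realized on $U_i$. For every $S \subseteq [k]$, the product set $\prod_{i \in S} U_i \times \prod_{i \notin S} Z_i$ is full-dimensional in $\mathbb{R}^n$ and $N$ realizes $g_S(x) := \sum_{i \in S} g_i(x^{(i)})$ on it. Because the copies read disjoint blocks of variables, the coefficient of $x^{(i)}$ in $g_S$ equals $a_i$ for $i \in S$ and $0$ for $i \notin S$, so the $2^k$ functions $g_S$ are pairwise distinct as soon as every $a_i$ is nonzero. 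By the identification of affine regions (\cref{def:region6}) with distinct realized affine functions this yields $R_6(N) \geq 2^k$, and \Cref{thm:region_hierarchy} propagates the bound to $R_3, R_4, R_5 \geq 2^k$. Combining the two cases gives a polynomial-time reduction from \SAT\ with a $1$ versus $2^{\Omega(n^{1-\varepsilon})}$ gap, so any approximation within ratio larger than $2^{-O(n^{1-\varepsilon})}$ would decide \SAT.

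The main technical obstacle is ensuring that $a_i \neq 0$ in every copy. The proof of \Cref{thm:np_completeness_2region_decision_problem} pins down only that nonzero regions sit inside $\varepsilon$-hypercubes around satisfying $0$--$1$ points and does not a priori prevent a nonzero region realizing a pure additive constant; in that degenerate case the functions $g_S$ would depend only on $|S|$ and collapse into at most $k+1$ distinct values. The fix is a small modification of $N_\phi$ that guarantees a nontrivial linear part: either by verifying directly that the ``bump''-style function produced by \Cref{thm:np_completeness_2region_decision_problem} already has nonzero slope in at least one coordinate, or by adding a term $\delta x^{(i)}_1$ inside one of the ReLU activations that support the nonzero region of $N_\phi^{(i)}$, with $\delta$ small enough to preserve both the zero-versus-nonzero positivity pattern of the original construction and the two-hidden-layer structure of $N$.
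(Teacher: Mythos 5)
Your proposal is correct and follows essentially the same route as the paper: sum $k$ disjoint-variable copies of $N_\phi$ to amplify the $1$-versus-$2$ gap of \Cref{thm:np_completeness_2region_decision_problem} into $1$ versus $2^k$ affine regions, propagate the bound to \cref{def:region3,def:region4,def:region5} via \Cref{thm:region_hierarchy}, and choose $k$ polynomial in the number of variables so that $k=\Theta(n^{1-\varepsilon})$ (the paper packages the counting step as \Cref{lem:amplification_lemma}). The degeneracy you flag --- a nonzero region realizing a pure constant, which would collapse the $2^k$ sums to $k+1$ values --- is a legitimate gap that the paper's lemma also glosses over, but it is resolved by the construction itself: on any full-dimensional piece where $f_{N_\phi}>0$ the term $T_n$ contributes slope $\pm 1$ in every coordinate, so the linear part is nonzero and no extra perturbation is needed.
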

\begin{proof}[Proof Sketch]
Given a \SAT{} formula $\phi$, the network $N_\phi$ in the proof of \Cref{thm:np_completeness_2region_decision_problem} has exactly one linear region if $\phi$ is unsatisfiable and at least 2 linear regions if $\phi$ is satisfiable.
We proceed by showing that for a ReLU network $N$ with $R$ linear regions, the network computing the function $f^{(k)}:\mathbb{R}^{nk}\to \mathbb{R},\; f^{(k)}(x_{11},...,x_{1n},...,x_{k1},...,x_{kn})=\sum_{i=1}^kf_{N}(x_{i1},...,x_{in})$
has exactly $R^k$ linear regions.
In words, $f^{(k)}$ is simply the sum of $k$ copies of $f_N$ each having as input a disjoint set of $n$ variables.
Applying this construction to $N_\phi$ for appropriate values of $k$ gives the desired result.
\end{proof}

\section{Counting regions using polynomial space}\label{sec:pspace}
Due to the \#\P-hardness of \textsc{Linear region counting}, we do not expect that efficient (polynomial time) algorithms for counting the number of linear regions exist.
\citet{wang2022estimation} claimed that \textsc{Linear region counting} for \Cref{def:region5} would be in \EXP\TIME. As stated in \Cref{sec:flaws}, the algorithm actually works for \Cref{def:region4} instead of \Cref{def:region5}. Since closed connected regions are generally not a union of activation regions, it is still an open problem whether even an \EXP\TIME{} algorithm is possible for \Cref{def:region5}. Also for \Cref{def:region3}, to the best of our knowledge, it is not clear whether an \EXP\TIME{} algorithm exists, because there are infinitely many options to choose the underlying polyhedral complex.
On the contrary, we show in this section that for \Cref{def:region1,def:region2,def:region6}, the number of regions can be computed in polynomial space and therefore also in \EXP\TIME{}.

\begin{theorem}
\label{thm:poly_space_algorithms}
    \textsc{Linear region counting} is in \FP\SPACE{} for \Cref{def:region1,def:region2,,def:region6}.
\end{theorem}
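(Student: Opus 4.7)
The plan is to handle all three definitions via a common template: enumerate activation patterns one at a time, perform a polynomial-time test per pattern, and maintain a binary counter that fits in polynomial space. For \Cref{def:region1} I use the fact that activation patterns are in bijection with activation regions. For each $a \in \{0,1\}^{s(N)}$, I would first propagate $a$ through $N$---treating neurons with $a_i = 0$ as producing zero---to obtain the affine preactivations $f_i^a$; their coefficients have polynomial bit-size and can be computed in polynomial time. I would then test feasibility of the linear system $f_i^a(x) > 0$ for $a_i = 1$ and $f_i^a(x) \leq 0$ for $a_i = 0$ by linear programming. For \Cref{def:region2} I replace this by a full-dimensionality test, i.e., feasibility of the strict system $f_i^a(x) > 0$ for $a_i = 1$ and $f_i^a(x) < 0$ for $a_i = 0$, which is again an LP and can be carried out in polynomial time.

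For \Cref{def:region6} the first step is a structural lemma: affine regions are in bijection with the distinct affine functions $f_N^a$ arising from patterns $a$ with full-dimensional $S_a$. If $R$ is an affine region with associated affine function $g$, maximality forces $R = \{x : f_N(x) = g(x)\}$ and moreover forces $R$ to be full-dimensional: if $R$ were low-dimensional then, picking any $x_1 \notin \operatorname{aff}(R)$, one can construct an affine $h$ with $h|_R = g|_R$ (any two affine functions agreeing on $R$ agree on $\operatorname{aff}(R)$, but the value at $x_1$ remains free) and $h(x_1) = f_N(x_1)$, so $f_N|_{R \cup \{x_1\}} = h$ is affine, contradicting maximality. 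Being full-dimensional and a union of closures $\overline{S_a}$ of the activation complex, $R$ must contain some full-dimensional $S_a$, giving $g = f_N^a$; conversely, every such $f_N^a$ determines the affine region $\{x : f_N(x) = f_N^a(x)\}$. To count distinct $f_N^a$'s in polynomial space---without ever storing them all---I would use the lex-smallest representative trick: for the currently enumerated $a$ with full-dimensional $S_a$, run an inner enumeration over all lexicographically smaller $a' \in \{0,1\}^{s(N)}$, and increment the counter only if no such $a'$ with full-dimensional $S_{a'}$ satisfies $f_N^{a'} = f_N^a$. Equality of affine functions with polynomial bit-size rational coefficients is in \P, and the inner loop reuses the same polynomial workspace as the outer one.

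The main obstacle is the structural lemma for \Cref{def:region6}; the algorithmic part is then a routine polynomial-space enumeration. One must carefully justify that no affine region can be low-dimensional and that the bijection to $\{f_N^a : S_a \text{ full-dimensional}\}$ is exact. Once this is established, all components---the outer pattern counter, the candidate affine function, the LP calls for feasibility and full-dimensionality, and the inner lex-smallest check---fit within polynomial workspace, giving the claimed \FP\SPACE{} upper bound.
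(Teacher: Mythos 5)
For \Cref{def:region1,def:region2} your algorithm coincides with the paper's: enumerate all $2^{s(N)}$ patterns and test each by linear programming. (Minor caveat: your full-dimensionality test for \Cref{def:region2} via feasibility of the all-strict system $f_i^a(x)>0$, $f_i^a(x)<0$ breaks when some $f_i^a$ with $a_i=0$ is identically zero, in which case $S_a$ can still be full-dimensional; the paper instead computes the dimension of the closed polyhedron $P_a$, cf.\ \Cref{lem:dim_of_activation_region_is_poly_time_computable}.) For \Cref{def:region6} you take a genuinely different route. The paper's \textsc{ExhaustiveSearch} enumerates, in the \emph{outer} loop, all candidate affine functions whose coefficients respect the encoding-size bound of \Cref{lem:coeffs_are_polynomial}, and for each calls \textsc{SearchAffinePiece} over all activation patterns. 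You instead enumerate activation patterns in the outer loop and deduplicate realized affine functions with a lexicographically-smallest-representative inner loop over patterns. Your version is arguably cleaner: it needs the coefficient bound only to bound the workspace holding $f_N^a$, not to bound a search space, and it avoids iterating over the enormous set of coprime coefficient vectors. Both are valid \FP\SPACE{} procedures \emph{given} the same structural fact that the affine regions are in bijection with the distinct functions $f_N^a$ realized on proper activation regions.

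That structural fact is where your argument has a genuine gap --- and, notably, it is exactly the step the paper also leaves unproven (it is the unjustified first sentence of the proof of \Cref{lem:checkaffine}). Your extension argument correctly shows that a maximal $R$ has affine hull equal to $\mathbb{R}^n$, so its affine function $g$ is unique and $R=\{x: f_N(x)=g(x)\}$. But having full affine hull does not give $R$ a nonempty interior, and $R$ is \emph{not} in general a union of closures of activation regions: it meets each cell $\overline{S_a}$ with $f_N^a\neq g$ in a lower-dimensional affine slice of that cell, not in all of it or in nothing. So the conclusion ``$R$ contains some full-dimensional $S_a$'' does not follow. Under a literal reading of \Cref{def:region6} it can even fail: for the one-hidden-layer network computing $f_N(x,y)=|x|-|y|$, the set $\{(x,y): |x|=|y|\}$ is the union of the two diagonals, has affine hull $\mathbb{R}^2$, carries the (unique) affine function $g=0$, and is inclusion-wise maximal, yet the zero function is realized on none of the four proper activation regions (the open quadrants). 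Hence the bijection your counter relies on requires either restricting the definition to full-dimensional maximal sets (clearly the intended reading, and what both your algorithm and the paper's actually count) or a further argument; as written, the lemma you yourself flag as the main obstacle is not established.
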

It is not hard to see that computing the number of activation regions and proper activation regions is possible in space that is polynomial in $\langle N\rangle$. Consider the following (informal) algorithm:
Given a ReLU network $N$, iterate over all $2^{s(N)}$ vectors in $\{0,1\}^{s(N)}$, and for each vector $a\in \{0,1\}^{s(N)}$, compute the dimension of $S_a$ in time polynomial in $\langle N\rangle$ (see~\Cref{lem:dim_of_activation_region_is_poly_time_computable}) to determine if $S_a$ is an activation region or a proper activation region, and increase a counter by one if this is the case.

Counting the number of affine regions is slightly more complicated, because a naive approach enumerating all the activation patterns would need to keep track of all the affine coefficients already seen to avoid double-counting, which is infeasible in polynomial space. Instead, we iterate over all possible affine functions by enumerating all possible coefficient combinations that have an encoding size of less than a polynomial upper bound, and check if there is a proper activation region on which the affine function is realized. The running time of this algorithm is exponential in the encoding size of the network, but it suffices to prove \FP\SPACE{} containment.

We describe how to count regions in \Cref{alg:exhaustive_search}.
The comments in the algorithms refer to the lemmas that show that the computation in the respective line can be performed in polynomial space.

\begin{algorithm}
\caption{\textsc{SearchAffinePiece}}
\label{alg:search_affine_piece}
    \begin{algorithmic}[1]
        \Require{A ReLU network $N$ and a vector $(a_1,\dots,a_n,b)\in \mathbb{Q}^{n+1}$.}
        \Ensure{1 if $\sum_{i=1}^n a_i x_i + b$ is a function of an affine region of $N$, else 0.}
        \For{$a\in \{0,1\}^{s(N)}$}
        \If{$\dim S_a=n$}\algorithmiccomment{(\Cref{lem:dim_of_activation_region_is_poly_time_computable})}
        \If{$\sum_{i=1}^n a_i x_i + b=f^a_{N}(x)$ }
        \Return 1 \algorithmiccomment{(\Cref{lem:pattern_to_function_is_poly_time_computable})}
        \EndIf
        \EndIf
        \EndFor
        \Return 0
    \end{algorithmic}
\end{algorithm}
\begin{algorithm}
\caption{\textsc{ExhaustiveSearch}}
\label{alg:exhaustive_search}
    \begin{algorithmic}[1]
        \Require{A ReLU network $N$.}
        \Ensure{Number of affine regions of $N$.}
        \State $n_{\max}=\max\{n_0,n_1,\dots, n_{d+1}\}$
        \State $U=2^{36d^2n_{\max}^2\langle A_{\max}\rangle}$ \algorithmiccomment{(\Cref{lem:coeffs_are_polynomial})}
        \State $R=0$
        \For{$(a, b)\in \{-U,\dots, U\}^{n+1}\times \{1,\dots, U\}^{n+1}$}
        \If{$\gcd(a_i,b_i)=1$ for $i\in [n+1]$}
        \State $R \leftarrow R+\textsc{SearchAffinePiece}(N,(\frac{a_1}{b_1},\dots, \frac{a_{n+1}}{b_{n+1}}))$ \algorithmiccomment{(\Cref{lem:checkaffine})}
        \EndIf
        \EndFor
        \Return $R$
    \end{algorithmic}
\end{algorithm}

The algorithms for counting (proper) activation regions (\Cref{def:region1,def:region2}) show fixed-parameter tractability of \textsc{Linear region counting} with respect to the number of neurons.
Furthermore, recent results of \citet[Corollary 4.4]{froese2025parameterized} imply W[1]-hardness of \textsc{Linear region counting} for ReLU networks with two or more hidden layers when parameterized by the input dimension (for \Cref{def:region3,def:region4,def:region5,def:region6}).
However, the fixed-parameter tractability status of \textsc{Linear region counting} remains open for other parameterizations and definitions.

\section{Conclusion}
We collected and discussed six commonly used non-equivalent definitions for linear regions of ReLU networks.
We proved \#\P-hardness for counting the number of linear regions (for all six definitions) and \NP-hardness for several associated decision problems (for most definitions).
We further showed that for ReLU networks with two or more hidden layers, even approximating the number of linear regions is \NP-hard (again, for most definitions).
On the positive side, we showed that for some definitions, linear regions can at least be counted in polynomial space.

There remain many interesting open problems and directions for future work.
Is \textsc{Linear region counting} in \EXP\TIME\ for \Cref{def:region3,def:region5} (are there even finite algorithms)?
Can some of the results in \Cref{sec:shallow,sec:deep,sec:pspace} be extended to all six definitions?
For example, is \textsc{Linear region counting} for ReLU networks with one hidden layer contained in \#\P\ also for \Cref{def:region5,def:region6}?
Is the problem of approximating the number of linear regions also \NP-hard for ReLU networks with one hidden layer?
Finally, it would be interesting to study the fixed-parameter tractability of \textsc{Linear region counting} under different parameterizations and definitions.

\bibliographystyle{plainnat}
\bibliography{arxiv.bib}

\appendix

\section{Notes on Theory and Literature}

\subsection{Details on the encoding size}
\label{sec:encoding}
The \emph{encoding size} $\langle n\rangle$ of a nonnegative integer $n$ is $\langle n\rangle:=\lceil\log_2(n+1)\rceil$, the encoding size of a fraction $q=a/b$ with $a\in \mathbb{Z}$ and $b\in \mathbb{N}$ is $\langle q\rangle := 1 + \langle |a|\rangle + \langle b\rangle$, and the encoding size of a matrix with rational entries $A=(a_{ij})_{i\in [n],\ j\in [m]}$ is $\langle A\rangle := nm + \sum_{i\in [n],\ j\in [m]}\langle a_{ij}\rangle$.
Since we are interested in the computational complexity, we restrict ourselves to ReLU networks with rational entries that can be represented with a finite number of bits.
Then, a ReLU network $N$ of depth $d+1$ with $A^{(i)}\in \mathbb{Q}^{n_i\times n_{i-1}}$ and a vector $b^{(i)}\in \mathbb{Q}^{n_i}$ has encoding size
$
\langle N\rangle = \sum_{i=1}^{d+1} \langle A^{(i)}\rangle + \sum_{i=1}^{d+1}\langle b^{(i)}\rangle
$.
In particular, if $A_{\max}$ denotes the maximum encoding size of an entry in any $A^{(i)}$ and $b^{(i)}$, then $\langle N\rangle \leq (d+1)\cdot (\max\{n,n_1,\dots,n_d\}+1)^2 \cdot (1+A_{\max})=\text{poly}(n,s(N),A_{\max})$.

\subsection{Basic properties of linear regions}
The following statements hold.
\begin{lemma}
\label{lem:open_connected_regions_are_well_defined}
    Given a ReLU network $N$, the set of open connected regions of $N$ is equal to the unique set $\mathcal{S}$ with the minimal number of open connected subsets such that $\bigcup_{S\in \mathcal{S}}\overline{S}=\mathbb{R}^n$ and $f_N$ restricted to any $S\in \mathcal{S}$ is affine.
\end{lemma}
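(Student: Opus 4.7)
Let $\mathcal{S}^*$ denote the set of open connected regions of $N$. The plan is to establish three things in turn: (i) $\mathcal{S}^*$ is itself a valid choice for $\mathcal{S}$; (ii) every valid $\mathcal{S}$ satisfies $|\mathcal{S}|\ge|\mathcal{S}^*|$; (iii) equality forces $\mathcal{S}=\mathcal{S}^*$.

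First I would verify (i). Each element of $\mathcal{S}^*$ is open and connected and carries an affine restriction of $f_N$ by definition. To see that $\bigcup_{S\in\mathcal{S}^*}\overline{S}=\mathbb{R}^n$, I would pick a polyhedral complex $\mathcal{P}$ compatible with $f_N$ (which exists because $f_N$ is CPWL), note that for any full-dimensional cell $P\in\mathcal{P}$ the interior $P^\circ$ is open, connected and has $f_N$ affine on it, hence is contained in some $S^*\in\mathcal{S}^*$, and that every $x\in\mathbb{R}^n$ lies in the closure of at least one full-dimensional cell.

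For (ii), I would define a map $\phi:\mathcal{S}\to\mathcal{S}^*$ sending each $S\in\mathcal{S}$ to the unique open connected region containing it. Existence of such a region is immediate (enlarge $S$ to an inclusion-maximal open connected set on which $f_N$ is affine); uniqueness uses the fact that two distinct open connected regions $S_1^*,S_2^*$ both containing $S$ would make $S_1^*\cup S_2^*$ open and connected, and the affine pieces of $f_N$ on $S_1^*$ and $S_2^*$ must coincide because they agree on the nonempty open set $S$, contradicting inclusion-maximality. Surjectivity of $\phi$ is the main geometric step: given $S^*\in\mathcal{S}^*$ and $x\in S^*$, the covering hypothesis yields some $S\in\mathcal{S}$ with $x\in\overline{S}$, and since $S^*$ is open, any small enough neighborhood $U\subseteq S^*$ of $x$ meets $S$ at some $y\in S^*\cap S$; then $S$ is an open connected subset containing $y$, so $S\subseteq\phi(S)$, and the point $y\in S^*$ lies in exactly one open connected region, namely $S^*$, forcing $\phi(S)=S^*$. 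This gives $|\mathcal{S}|\ge|\mathcal{S}^*|$.

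For (iii), I would assume $|\mathcal{S}|=|\mathcal{S}^*|$, so that $\phi$ is a bijection, and invoke the property from \citet{zanotti2025bounds} that $\overline{S_1^*}\cap\overline{S_2^*}=\partial S_1^*\cap\partial S_2^*$ for distinct open connected regions. This implies $\overline{S'}\cap S^*=\emptyset$ whenever $\phi(S')\neq S^*$, because $\overline{S'}\subseteq\overline{\phi(S')}$ and $S^*\cap\partial S^*=\emptyset$. Hence $S^*\subseteq\overline{S}$ for the unique $S=\phi^{-1}(S^*)$, and together with $S\subseteq S^*$ we get $\overline{S}=\overline{S^*}$. The main obstacle will be to upgrade the double inclusion $S\subseteq S^*\subseteq\overline{S}$ into equality $S=S^*$: in pure point-set topology this can fail (for example, a punctured disk is open connected and has the same closure as the full disk), so the argument must exploit additional structure. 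I would attempt to do this by showing that replacing $S$ by $S^*$ in $\mathcal{S}$ preserves validity and that any strictly larger set of the same cardinality would either contradict minimality through a merging argument or be ruled out by maximality of the open connected region in the class of open connected sets on which $f_N$ is affine. This final step is where I expect the delicate work to lie.
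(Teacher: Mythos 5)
Your steps (i) and (ii) are correct and in fact more self-contained than the paper's own argument: the paper never builds your map $\phi$, but instead imports from \citet{zanotti2025bounds} the claim that every element of a minimum-cardinality admissible family is inclusion-maximal, and derives uniqueness from that. Your surjection $\phi$ cleanly establishes both that the set $\mathcal{S}^*$ of open connected regions is admissible and that every admissible family has at least $|\mathcal{S}^*|$ elements, which is the substantive half of the lemma.

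The obstacle you flag in step (iii) is not a technicality you failed to push through: for $n\geq 2$ it is fatal to the statement as literally written, and your punctured-disk remark is exactly the counterexample. Pick any open connected region $S^*$, any $x_0\in S^*$, and replace $S^*$ by $S^*\setminus\{x_0\}$ inside $\mathcal{S}^*$: every element of the perturbed family is still open, connected, and carries an affine restriction of $f_N$; the closures, and hence the covering property, are unchanged; and the cardinality is still minimal. So the minimizer is not unique, no argument can upgrade $S\subseteq S^*\subseteq\overline{S}$ to $S=S^*$, and your proposed repair by swapping $S$ for $S^*$ cannot exclude the perturbed family. (The same example defeats the paper's intermediate claim that every element of a minimal family is maximal: $S^*\setminus\{x_0\}$ is not, since adjoining $U=\{x_0\}$ keeps it open, connected, and affine.) Both the statement and your proof are rescued by adding the hypothesis that each $S\in\mathcal{S}$ is regular open, i.e.\ $S=(\overline{S})^\circ$ --- a property the open connected regions themselves enjoy, being interiors of closures of unions of proper activation regions, cf.\ \Cref{lem:open_connected_region_is_made_up_of_proper_activation_regions} --- since then your chain of inclusions gives $S^*\subseteq(\overline{S})^\circ=S$ immediately. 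With that amendment your argument is complete; without it, neither your proof nor the paper's goes through.
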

\begin{proof}
    First, we show that the minimal set satisfying the assumptions above is unique.
    Suppose that there are two distinct sets $\mathcal{S}$ and $\mathcal{S}'$ of open subsets that achieve the minimum.
    By~\citet[Lemma 3.2]{zanotti2025bounds}, each element in $S\in \mathcal{S}$ is maximal in the sense that there does not exist a nonempty set $U\subseteq \mathbb{R}^n\setminus S$ such that $S\cup U$ is open and connected, and $f_N$ restricted to $S\cup U$ remains affine.
    
    Since $\mathcal{S}$ and $\mathcal{S}'$ are distinct, there is a set $S\in \mathcal{S}$ with $S\notin \mathcal{S}'$, and since $\bigcup_{S'\in \mathcal{S}'}\overline{S'}=\mathbb{R}^n$, there is a set $S'\in \mathcal{S}'$ with $S\cap S'\neq \emptyset$.
    Since both $S$ and $S'$ are open, the intersection $S\cap S'$ is full dimensional and the affine functions on $S$ and $S'$ are identical.
    Therefore, $U=S'\setminus S\subset \mathbb{R}^n \setminus S$ is a nonempty set such that $S\cup U=S\cup S'$ is open and $f_N$ restricted to $S\cup U$ remains affine, contradicting the maximality of $S$.

    By~\citet[Lemma 3.2]{zanotti2025bounds}, the unique set $\mathcal{S}$ is then exactly the unique set of inclusion-maximal open subsets of $\mathbb{R}^n$ such that $f_N$ restricted to each inclusion-maximal subset is affine, which is by definition the set of open connected regions of $N$.
\end{proof}
\begin{lemma}
\label{lem:open_connected_region_is_made_up_of_proper_activation_regions}
    Given a ReLU network, the closure of every open connected region is the closure of the union of a set of proper activation regions.
\end{lemma}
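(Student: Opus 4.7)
Let $U$ be an open connected region of $N$ with associated affine function $g$, i.e.\ $f_N|_U\equiv g$. The plan is to show that the candidate collection
\[
\mathcal{A}\;:=\;\{S_a : S_a\text{ is a proper activation region of }N,\ S_a\cap U\neq\emptyset\}
\]
satisfies $\overline{U}=\overline{\bigcup_{S_a\in\mathcal{A}}S_a}$, from which the lemma follows directly.

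For the inclusion $\overline{\bigcup_{S_a\in\mathcal{A}}S_a}\subseteq\overline{U}$, I would first observe that each $S_a$ is convex (both strict and weak affine inequalities are preserved under convex combinations) and that on $S_a$ the network computes the fixed affine function $f_N^a$. Picking any $p\in S_a\cap U$ and any interior point $r\in S_a^\circ$ (which exists because $S_a$ is full-dimensional), the segment from $r$ to $p$ lies in $S_a^\circ$ except possibly at $p$, so $S_a^\circ$ meets every neighborhood of $p$ and in particular meets the open ball $B(p,\varepsilon)\subseteq U$. This forces $f_N^a$ to agree with $g$ on a full-dimensional set, hence as affine functions. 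Then $U\cup S_a^\circ$ is open, connected (the two open connected sets share a point), and $f_N$ is affine on it, so by the inclusion-maximality of $U$ (Definition~\ref{def:region4}; see also the characterization in Lemma~\ref{lem:open_connected_regions_are_well_defined}) we get $S_a^\circ\subseteq U$, and therefore $S_a\subseteq\overline{S_a^\circ}\subseteq\overline{U}$.

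For the inclusion $\overline{U}\subseteq\overline{\bigcup_{S_a\in\mathcal{A}}S_a}$, it suffices to show $U\subseteq\overline{\bigcup_{S_a\in\mathcal{A}}S_a}$. For any $p\in U$, pick $\varepsilon>0$ with $B(p,\varepsilon)\subseteq U$. Since there are only finitely many activation patterns and lower-dimensional activation regions together form a Lebesgue-null set, the full-dimensional ball $B(p,\varepsilon)$ must meet some proper activation region $S_b$. Any such $S_b$ automatically satisfies $S_b\cap U\neq\emptyset$, so $S_b\in\mathcal{A}$, and choosing $q\in S_b\cap B(p,\varepsilon)$ shows $p$ is within distance $\varepsilon$ of $\bigcup_{S_a\in\mathcal{A}}S_a$. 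Letting $\varepsilon\to 0$ yields the claim.

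The only delicate step, and the one I expect to need the most care to write cleanly, is showing $S_a^\circ\cap U\neq\emptyset$ whenever $S_a\in\mathcal{A}$: a point $p\in S_a\cap U$ could a priori lie on the relative boundary of $S_a$, so one has to invoke convexity of $S_a$ and the standard fact that for a convex set with nonempty interior, segments from interior points to arbitrary points of the set stay in the interior except at the endpoint. Once this is in place, the maximality argument invoking Definition~\ref{def:region4} closes the proof.
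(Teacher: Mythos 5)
Your argument is correct, but it is worth noting that the paper does not actually prove this lemma itself: its ``proof'' is a one-line citation of Lemma~3 in \citet{hanin2019deep}. You instead give a complete, self-contained argument, and it holds up. The three ingredients you use are all sound: (i) each activation region $S_a$ is convex (an intersection of open and closed halfspaces defined by the affine functions $f_i^a$), so a full-dimensional $S_a$ satisfies $S_a\subseteq\overline{S_a^\circ}$ and the segment trick gives $S_a^\circ\cap B(p,\varepsilon)\neq\emptyset$ for any $p\in S_a$; (ii) two affine functions agreeing on a nonempty open set are equal, which combined with the inclusion-maximality in \Cref{def:region4} yields $S_a^\circ\subseteq U$ and hence $S_a\subseteq\overline{U}$; (iii) the finitely many low-dimensional activation regions form a Lebesgue-null set, so every ball inside $U$ meets a proper activation region, giving the reverse inclusion. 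The only cosmetic point is that you prove the slightly stronger, more informative statement $\overline{U}=\overline{\bigcup_{S_a\in\mathcal{A}}S_a}$ with the explicit collection $\mathcal{A}=\{S_a\text{ proper}: S_a\cap U\neq\emptyset\}$, which is exactly what the cited external lemma provides; if this were written into the paper it would make \Cref{sec:pspace} and the discussion of \citet{wang2022estimation}'s algorithm self-contained. You correctly flag the one delicate step (a point of $S_a\cap U$ may lie on the boundary of $S_a$), and your resolution via convexity is the standard and correct one.
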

\begin{proof}
    See~\citet[Lemma 3]{hanin2019deep}.
\end{proof}
\subsection{Additional notes for definitions used in literature}
\label{sec:additional_table_notes}
\begin{itemize}
    \item \citet{raghu2017expressive} do not use \Cref{def:region1} explicitly, but they define activation patterns and derive a bound on the total number of activation patterns, which is equivalent to bounding the number of activation regions.
    \item The bound of \citet{serra2018bounding} holds for \Cref{def:region2}, compare the discussion in \citep[Section 5]{cai2023getting}. The MIP counts the number of activation regions (\Cref{def:region1}).
    \item \citet{rolnick2020reverse} only treat cases where \Cref{def:region2,def:region5} are equivalent.
    \item \citet{tseran2021expected} consider activation regions of maxout networks, which is conceptually slightly different from the activation regions defined in this paper.
    \item \citet{huchette2023deep} actually define activation regions using \Cref{def:region1}, but then later state that they disregard low dimensional linear regions. Therefore, their definition is equivalent to \Cref{def:region2}.
    \item \citet{zanotti2025linear} defines linear regions as the closure of open connected regions.
\end{itemize}
\subsection{Inaccuracies in the literature}
\label{sec:flaws}
In this section we list a few cases where misunderstandings about the different definition of a linear region led to small errors or inconsistencies in previous work, alongside with a suggestion how they would be fixable. Usually, this can be achieved by switching to a different, maybe more appropriate definition of a linear region.

\begin{enumerate}
    \item In Lemma 11 (d), \citep{chen2022improved} claims that $\left(S_1\cap S_2\right) \cap\left(S_1^\circ \cup S_2^\circ\right) = \emptyset$ holds for any two closed connected regions $S_1, S_2$. While this claim is not true for closed connected regions, it is true that $\left(\overline{S_1}\cap \overline{S_2}\right)\cap\left(S_1^\circ \cup S_2^\circ\right) = \emptyset$ holds for any two \emph{open} connected regions $S_1, S_2$. This inaccuracy was first pointed out by \citet{zanotti2025bounds}.
    
    \item The algorithm of \citet{wang2022estimation} does not create the set of closed connected regions, since in the algorithm, only closures of activation regions are merged such that (1) the affine function of the regions is identical and (2) the closures of the two activation regions have a non-empty intersection.
    However, a closed connected region is in general \emph{not} a union of the closure of a set of activation regions, see \Cref{subsection:zanotti_network} for an example.
    
    However, the algorithm of \citet{wang2022estimation} can instead be adapted to count the number of \emph{open connected regions} with one minor adjustment. Instead of merging the closure of activation regions with a nonempty intersection, merge only two proper activation regions if their closure has a $(n-1)$-dimensional intersection. This computation can still be done in time polynomial in the input size through linear programming, as described in \Cref{lem:dim_of_activation_region_is_poly_time_computable}.

    \item \citet{lezeau2024tropical} define a linear region as follows:
    
    \textit{A set of a neural network $f$ is a linear region if it is a maximal connected region (closure of an open set) on which $f$ is linear.}

    We note that their definition is not equivalent to \Cref{def:region4}~or~\ref{def:region5}, since a closed connected region that counts as multiple open connected regions can count as a single linear region in their definition, while a closed connected region can also count as multiple linear regions in their definition; consider, for example, the orange closed connected region in \Cref{fig:zanotti_network}.

    In Section 6.2, they present an algorithm to estimate the number of linear regions: they compute the number of unique gradients obtained on a set of sample points. They further state that if the gradients of two points are equal but the midpoint of the two points has a different gradient, then the two original points correspond to two different linear regions.
    This is not always true, since linear regions can be nonconvex according to their definition. Therefore, their algorithm can also overestimate the number of linear regions, but instead always underestimates the number of proper activation regions.
\end{enumerate}

\subsection{Technical results}
\begin{lemma}
\label{lem:coeffs_are_polynomial}
    Given a ReLU network $N$ and an activation pattern $a\in \{0,1\}^{s(N)}$ corresponding to a proper activation region. Let $A_{\max}\in \mathbb{Z}$ be the maximum absolute value of any numerator or denominator in an entry of the matrices $A^{(1)},\dots, A^{(d+1)}$ and biases $b^{(1)},\dots, b^{(d+1)}$ and let $n_{\max}=\max\{n_0,\dots,n_{d+1}\}$.
    Then, the encoding size of every coefficient of the affine function $f_N^a:\mathbb{R}^n\to \mathbb{R}$, $f_N^a(x)=\sum_{i=1}^n a_ix + b$ is bounded by
    \[
    36d^2n_{\max}^2\langle A_{\max}\rangle.
    \]
\end{lemma}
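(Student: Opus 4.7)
The plan is to expand the affine function $f_N^a$ explicitly as a telescoping matrix product and then bound numerator and denominator separately by clearing denominators layer by layer.

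First, given the activation pattern $a$, for each $l \in [d]$ let $D^{(l)} \in \{0,1\}^{n_l \times n_l}$ be the diagonal matrix whose entries pick out the neurons of layer $l$ that are active under $a$. Then, on $S_a$, the network collapses to the affine function $f_N^a(x) = w^\top x + c$ with
\[
w^\top \;=\; A^{(d+1)} D^{(d)} A^{(d)} D^{(d-1)} A^{(d-1)} \cdots D^{(1)} A^{(1)},
\]
and $c$ obtained by pushing the biases through: $c = A^{(d+1)} h^{(d)} + b^{(d+1)}$ where $h^{(l)} = D^{(l)}(A^{(l)} h^{(l-1)} + b^{(l)})$ with $h^{(0)} = 0$. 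Each entry of $w$ is therefore a sum over directed paths from an input coordinate to the output neuron, of products of $d+1$ weight entries (those permitted by the $D^{(l)}$), and similarly for $c$ (where one of the factors along the path is a bias entry instead of a weight).

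Second, I clear denominators layer by layer. For each $l \in [d+1]$, let $Q^{(l)}$ be the LCM of the denominators appearing in $A^{(l)}$ and $b^{(l)}$. Since there are at most $n_l(n_{l-1}+1) \leq 2 n_{\max}^2$ such entries and each denominator is at most $A_{\max}$, we get $Q^{(l)} \leq A_{\max}^{2n_{\max}^2}$. Define the integer matrices/vectors $\tilde A^{(l)} := Q^{(l)} A^{(l)}$ and $\tilde b^{(l)} := Q^{(l)} b^{(l)}$, whose entries have absolute value at most $A_{\max}^{2n_{\max}^2+1}$. Then the vector $\bigl(\prod_{l=1}^{d+1} Q^{(l)}\bigr)\,w^\top$ equals $\tilde A^{(d+1)} D^{(d)} \tilde A^{(d)} \cdots D^{(1)} \tilde A^{(1)}$ and is an integer vector; each of its entries is a sum of at most $n_{\max}^d$ products of $d+1$ integers each bounded by $A_{\max}^{2n_{\max}^2+1}$. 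Consequently, every such entry is bounded in absolute value by
\[
n_{\max}^d \cdot A_{\max}^{(2n_{\max}^2+1)(d+1)}.
\]

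Third, I convert to encoding sizes. Taking $\log_2$, the numerator of each rational coefficient has encoding size at most $d\log_2 n_{\max} + (2n_{\max}^2+1)(d+1)\log_2 A_{\max}$, while the common denominator $\prod_l Q^{(l)}$ has encoding size at most $2(d+1)n_{\max}^2 \log_2 A_{\max}$. Both are dominated by $O(d\, n_{\max}^2 \langle A_{\max}\rangle)$; rolling in the additive logarithm terms and being generous with constants yields the stated bound $36 d^2 n_{\max}^2 \langle A_{\max}\rangle$. The bias coefficient $c$ is handled by exactly the same path-expansion argument applied to the recursion for $h^{(l)}$, so its encoding size satisfies the same bound.

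The main obstacle is nothing conceptual but rather the bookkeeping: one must be careful that naively multiplying rationals across $d+1$ layers would blow up denominators to size $\exp(n_{\max}^d)$, so it is essential to factor out the per-layer common denominator $Q^{(l)}$ \emph{before} taking the product, and to use the path-sum representation (bounded by $n_{\max}^d$ terms) rather than an iterative matrix-multiplication bound. Once this is done, every quantity in sight is elementary and polynomial in the stated parameters.
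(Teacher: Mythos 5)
Your proposal is correct and follows essentially the same route as the paper's own proof: expand the collapsed affine map as a product of activation-masked layer matrices, clear denominators to reduce to integer matrices, bound the entries of the integer product by the number of paths times the maximal entry power, and convert to encoding sizes. The only cosmetic differences are that you use diagonal activation matrices and per-layer LCMs where the paper zeroes out columns and takes a single global common denominator, neither of which changes the argument.
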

\begin{proof}
    Given an activation pattern $a\in \{0,1\}^{s(N)}$ of $N$, we modify the matrices $A^{(1)},\dots, A^{(d+1)}$ and biases $b^{(1)},\dots, b^{(d+1)}$ by replacing the columns of matrices and bias entries corresponding to inactive neurons with 0 entries.
    Let $A^{(1),a},\dots, A^{(d+1),a}$ and biases $b^{(1),a},\dots, b^{(d+1), a}$ denote the modified matrices and biases.
    Then, a simple calculation shows that for all $x\in \mathbb{R}^n$, we have
    \[
    f_N^a(x) = A^{(d+1),a}\cdot \dots \cdot A^{(1),a}x + b^{(d+1), a}+\sum_{i=0}^{d-1}A^{(d+1),a}\cdot \dots \cdot A^{(d+1-i),a}b^{(d-i), a}.
    \]
    To bound the encoding size of all occurring coefficients, we separately give a bound on the absolute value of any occurring denominator and numerator.
    For this, we turn the $d+1$ rational matrices and biases into integral matrices and biases by bringing all fractional entries to a common denominator.
    The value of the common denominator is bounded by $A_{\max}^{(d+1)(n_{\max}+1)^2}$, since there are fewer than $(d+1)(n_{\max}+1)^2$ entries in the rational matrices and biases.

    To bound the maximum absolute numerator value, we now consider the $(d+1)$ integral matrices and biases that arise by multiplying the fractional matrices by the common denominator.
    The maximum absolute value of an entry in one of the integral matrices or biases is bounded by $A_{\max}^{(d+1)(n_{\max}+1)^2}$.
    A simple calculation shows that the maximum absolute entry that can be obtained in the product of the $(d+1)$ integral matrices is
    \[
    \left(A_{\max}^{(d+1)(n_{\max}+1)^2}\right)^{d+1}\prod_{i=1}^{d}n_i
    \leq n_{\max}^d \cdot A_{\max}^{(d+1)^2(n_{\max}+1)^2}.
    \]
    The constant of $f_N^a$ is equal to $b^{(d+1), a}+\sum_{i=0}^{d-1}A^{(d+1),a}\cdot \dots \cdot A^{(d+1-i),a}b^{(d-i), a}$ and can thus be bounded by $(d+1) n_{\max}^d  A_{\max}^{(d+1)^2(n_{\max}+1)^2}$.
    
    Since the maximum encoding size of an element of a set of integers is obtained by the integer having the maximum absolute value, it follows that the encoding size of a coefficient in $f_N^a$ is at most
    \begin{align*}
    &1+\langle (d+1) n_{\max}^d  A_{\max}^{(d+1)^2(n_{\max}+1)^2} \rangle + \langle A_{\max}^{(d+1)(n_{\max}+1)^2} \rangle\\
    \leq\;&1+\langle d + 1 \rangle+d\langle n_{\max}\rangle+(d+1)^2(n_{\max}+1)^2\langle A_{\max}\rangle+(d+1)(n_{\max}+1)^2\langle A_{\max}\rangle\\
    \leq\;&1+\langle d + 1 \rangle+d\langle n_{\max}\rangle+2(d+1)^2(n_{\max}+1)^2\langle A_{\max}\rangle\\
    \leq\;&1+\langle d + 1 \rangle+d\langle n_{\max}\rangle+32d^2n_{\max}^2\langle A_{\max}\rangle\\
    \leq \; & 36d^2n_{\max}^2\langle A_{\max}\rangle.
    \end{align*}
\end{proof}
\begin{lemma}
\label{lem:pattern_to_function_is_poly_time_computable}
    Given a ReLU network $N$, an activation pattern $a\in \{0,1\}^{s(N)}$ and an index $i\in [s(N)]$ of a neuron, one can compute in time polynomial in $\langle N\rangle$ the (coefficients of the) affine function $f^a_i:\mathbb{R}^n\to \mathbb{R}$ which is computed at the output of the $i$-th neuron.
\end{lemma}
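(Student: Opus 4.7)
The plan is to compute $f_i^a$ by straightforward forward propagation through $N$, maintaining at each neuron the coefficients of the affine function it represents under the fixed activation pattern $a$. Concretely, I process neurons in topological (layer) order. For each input coordinate $j\in [n]$ I initialize the affine function $x\mapsto x_j$ (coefficient vector $e_j$, constant $0$). For each neuron $v$ in a hidden or output layer, with incoming weight row $w_v$ and bias $b_v$, I compute the pre-activation as the affine function $g_v^a := \sum_u (w_v)_u\, f_u^a + b_v$, which is a $\mathbb{Q}$-linear combination of affine functions and hence an affine function; its coefficient vector and constant are obtained by adding $|u|$ scaled copies of the stored vectors and then adding the scalar $b_v$. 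Finally, I set $f_v^a := g_v^a$ if $a_v=1$ and $f_v^a := 0$ if $a_v=0$. After reaching neuron $i$, I output the stored pair $(\text{coefficients},\text{constant})$.

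To argue the runtime is polynomial in $\langle N\rangle$, I need that (i) only polynomially many arithmetic operations are performed, and (ii) every intermediate rational has polynomial encoding size. For (i), there are at most $s(N)$ neurons, and at each neuron the update is a linear combination of at most $n_{\max}$ previously stored affine functions, each represented by $n+1$ rationals; so the total number of rational arithmetic operations is $O(s(N)\cdot n_{\max}\cdot (n+1))$, which is polynomial in $\langle N\rangle$. For (ii), observe that every $f_v^a$ is itself the output of a sub-network of $N$ (the sub-network induced by the ancestors of $v$) under the restriction of $a$ to that sub-network, possibly with some columns/biases zeroed out. Applying \Cref{lem:coeffs_are_polynomial} to this sub-network gives that each coefficient of $f_v^a$ has encoding size $O(d^2 n_{\max}^2\langle A_{\max}\rangle)$, hence polynomial in $\langle N\rangle$.

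Since rational addition and multiplication on operands of encoding size $s$ run in time polynomial in $s$, combining (i) and (ii) yields a total runtime polynomial in $\langle N\rangle$, completing the proof. The only potential obstacle is the coefficient blow-up during propagation, but this is already controlled by \Cref{lem:coeffs_are_polynomial}; beyond that, the algorithm is simply layer-by-layer affine substitution with the ReLU replaced by identity or zero according to $a$.
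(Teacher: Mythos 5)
Your proposal is correct and matches the paper's approach: the paper's proof is a one-line reference to \Cref{lem:coeffs_are_polynomial}, whose argument is exactly your layer-by-layer computation of the matrix products with inactive neurons zeroed out, with the coefficient encoding sizes bounded as you describe. Your explicit observation that each intermediate $f_v^a$ is the output of a sub-network (so \Cref{lem:coeffs_are_polynomial} controls all intermediate blow-up) is precisely the content of the paper's ``analogous'' remark.
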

\begin{proof}
    The proof is analogous to the proof of \Cref{lem:coeffs_are_polynomial}.
\end{proof}
\begin{lemma}
\label{lem:dim_of_activation_region_is_poly_time_computable}
    Given a ReLU network $N$ and a vector $a\in \{0,1\}^{s(N)}$, one can compute the dimension of $S_a$ in time polynomial in $\langle N\rangle$.
\end{lemma}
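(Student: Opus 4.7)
The plan is to reduce the dimension computation to polynomially many linear programs of polynomial bit-length. First I would use \Cref{lem:pattern_to_function_is_poly_time_computable} to compute, for every neuron $i\in[s(N)]$, the coefficients of the affine function $f_i^a:\mathbb{R}^n\to\mathbb{R}$ in time polynomial in $\langle N\rangle$; by \Cref{lem:coeffs_are_polynomial} the bit-length of these coefficients is polynomial in $\langle N\rangle$, so the entire subsequent LP setup has polynomial encoding size.

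Next, let $I$ be the support of $a$ and define the closed polyhedron
\[
P_a=\{x\in\mathbb{R}^n : f_i^a(x)\geq 0\text{ for }i\in I,\ f_j^a(x)\leq 0\text{ for }j\notin I\},
\]
so that $S_a\subseteq P_a$ is obtained from $P_a$ by strengthening the constraints indexed by $I$ to strict inequalities. I claim that $\dim S_a=\dim P_a$ whenever $S_a\neq\emptyset$, and $S_a=\emptyset$ precisely when either $P_a=\emptyset$ or there exists $i\in I$ for which $f_i^a$ is an \emph{implicit equality} of $P_a$ (i.e.\ $f_i^a(x)=0$ on all of $P_a$). Indeed, in the relative interior of a nonempty polyhedron every non--implicit-equality constraint is strict, so $\operatorname{relint}(P_a)\subseteq S_a$ as soon as no $f_i^a$ with $i\in I$ is an implicit equality; this sandwiches $S_a$ between $\operatorname{relint}(P_a)$ and $P_a$ and forces $\dim S_a=\dim P_a$.

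With this characterisation the algorithm is straightforward: (i) decide nonemptiness of $P_a$ by an LP feasibility check; (ii) for each $i\in I$ solve the LP $\max\{f_i^a(x):x\in P_a\}$, returning "$S_a=\emptyset$" if the optimum is $0$ for some $i\in I$; (iii) otherwise compute $\dim P_a$ via the standard technique of identifying all implicit equalities (for each constraint $f_k^a(x)\lessgtr 0$ of $P_a$, solve an LP pushing $f_k^a$ to the active side and check whether the optimum forces equality), collecting these implicit equalities into a matrix $E$, and returning $\dim P_a=n-\operatorname{rank}(E)$. Each of these is an LP of polynomial encoding size, solvable in polynomial time by the ellipsoid or interior-point method, and the rank computation is standard polynomial-time linear algebra; since only $O(s(N))$ LPs are solved, the total running time is polynomial in $\langle N\rangle$.

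The only real obstacle is the correctness claim that $\dim S_a=\dim P_a$ whenever $S_a\neq\emptyset$, together with the implicit-equality characterisation of emptiness, which must be justified carefully because $S_a$ is defined by a mixture of strict and non-strict inequalities. Everything else is routine polynomial-time LP bookkeeping built on \Cref{lem:pattern_to_function_is_poly_time_computable,lem:coeffs_are_polynomial}.
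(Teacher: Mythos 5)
Your proposal is correct and follows essentially the same route as the paper's proof: both pass to the closed polyhedron $P_a$, detect emptiness of $S_a$ by solving $\max\{f_i^a(x):x\in P_a\}$ for each $i\in I$, and compute $\dim P_a$ via polynomially many LPs whose encoding size is controlled by \Cref{lem:coeffs_are_polynomial}. Your sandwich $\operatorname{relint}(P_a)\subseteq S_a\subseteq P_a$ is merely a slightly more explicit justification of the step the paper phrases as ``$P_a$ is the closure of $S_a$ if $S_a$ is nonempty.''
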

\begin{proof}
    Let $I\subseteq [s(N)]$ denote the support of the activation pattern $a\in \{0,1\}^{s(N)}$.
    By \Cref{lem:coeffs_are_polynomial}, the encoding size of the coefficients in every affine function $f^a_i$ are bounded polynomially in $\langle N\rangle$. Thus, we can solve a series of linear programs to compute the dimension of the polyhedron
    \[
    P_a = \{x\in \mathbb{R}^n: f_i^a(x)\geq 0\; \text{ for all } i\in I,\; f_i^a(x) \leq 0\; \text{ for all } i\notin I\}.
    \]
    For details, we refer to~\cite[Section 7.3]{20.500.11850/426218}.
    Since $P_a$ is the closure of $S_a$ if $S_a$ is nonempty, it follows that the dimension of $P_a$ is equal to the dimension of $S_a$ unless $S_a$ is empty (in the latter case, $P_a$ cannot be full-dimensional).
    The latter case is easy to recognize since for all $i\in I$, we can check if $f_i^a(x)=0$ holds for all $x\in P_a$ by solving the linear program $\max\{f_i^a(x) : x\in P_a\}$.
\end{proof}
\section{Omitted proofs}
\subsection{Omitted proofs for the one hidden layer case}
\label{sec:omitted_proofs_shallow}
Let $N$ be a ReLU network with one hidden layer and let $[n_1]$ denote the set of neurons.
For ease of notation, we denote $A^{(2)}=(a_1,\dots,a_{n_1})\in \mathbb{R}^{1\times n_1}$ and $A^{(1)}=(w_{ij})_{i\in [n_1],j\in [n]}$.
Then, the function computed by the network is
\[
f_N(x)=b^{(2)}+\sum_{i=1}^{n_1} a_i \max(0, b^{(1)}_i + \sum_{j=1}^n w_{ij}x_j).
\]
For every neuron $i\in [n_1]$, we define the hyperplane $H_i := \{x\in \mathbb{R}^n : b^{(1)}_i + \sum_{j=1}^n w_{ij}x_j = 0\}$ and halfspaces $H_i^+ := \{x\in \mathbb{R}^n : b^{(1)}_i + \sum_{j=1}^n w_{ij}x_j \geq 0\}$, $H_i^- := \{x\in \mathbb{R}^n : b^{(1)}_i + \sum_{j=1}^n w_{ij}x_j \leq 0\}$.
The function $f_i:\mathbb{R}^n\mapsto \mathbb{R}$ computed by a neuron $i\in [n_1]$ is $f_i(x)=a_i \max(0, b^{(1)}_i + \sum_{j=1}^n w_{ij}x_j)$.
\begin{proof}[Proof of \Cref{thm:polyresultfor1layerdecision}]
We only consider \Cref{def:region3,def:region4,def:region5,def:region6} here (for \Cref{def:region1,def:region2}, the answer is trivially yes, as $n_1>0$).
Let $N$ be a ReLU network as defined above.

The idea is to check for every neuron if the function $f_N$ computed by the ReLU network $N$ has breakpoints along the hyperplane corresponding to the neuron, that is, we check if the gradient of $f_N$ is discontinuous along the hyperplane. Since multiple neurons can correspond to the same hyperplane, it is possible that the functions of these neurons cancel such that no breakpoints along the hyperplane are introduced. First, we group the neurons according to the hyperplanes to which they correspond.

Let $g_1(x)=b^{(1)}_1 + \sum_{j=1}^n w_{1j}x_j$ and let $I_1$ be the subset of $[n_1]$ such that
\[
i\in I_1 \quad \iff \quad c_{1i}= w_{i1} / w_{11}= \dots= w_{in} / w_{1n}= b^{(1)}_i / b^{(1)}_1
\]
holds for some constant $c_{1i}\in \mathbb{R}\setminus \{0\}$, which is equivalent to the hyperplanes $H_1,H_i$ being identical.

Now, we derive a condition when the function $f_N$ that is computed by the ReLU network has breakpoints along the hyperplane $H_1$.

For all $i\in I_1$, we have
\begin{equation*}
    f_i(x) = a_i \cdot \max(0, c_{1i}\cdot g_1(x)) = a_i |c_{1i}|\cdot \max(0, \sign(c_{1i})\cdot g_1(x)).
\end{equation*}

We split $I_1$ into the two sets $I_1^+=\{i\in I_1: c_{1i}>0\}$ and $I_1^-=\{i\in I_1: c_{1i}<0\}$.
The sum of the functions of the neurons in $I_1$ is
\begin{align*}
&\max(0, g_1(x))\sum_{i\in I_1^+}a_i c_{1i} - \max(0, -g_1(x))\sum_{i\in I_1^-}a_i c_{1i}\\
=\;&g_1(x)\sum_{i\in I_1^-}a_i c_{1i} + \max(0, g_1(x))\left(\sum_{i\in I_1^+}a_i c_{1i} - \sum_{i\in I_1^-}a_i c_{1i}\right)
\end{align*}
Thus, if 
\begin{equation}\label{eq:hyperplane_removal_condition}
\sum_{i\in I_1^+}a_i c_{1i} = \sum_{i\in I_1^-}a_i c_{1i},
\end{equation}
the sum of all functions having $H_1$ as corresponding hyperplane is affine and $f_N$ has no breakpoints along the hyperplane $H_1$.
Otherwise, $f_N$ has breakpoints along the hyperplane $H_1$ and $N$ has more than one linear region.

Thus, $N$ has only a single linear region if and only if \eqref{eq:hyperplane_removal_condition} holds for all $j\in[n_1]$ (replace 1 by $j$ in \eqref{eq:hyperplane_removal_condition} and define the set $I_j$ and constants $c_{ji}$ as before with $j$ instead of 1), which can be verified in polynomial time.
\end{proof}

\begin{proof}[Proof of \Cref{thm:linear_region_counting_is_sharp_p_complete_for_one_hidden_layer}]
We show \#P-hardness by reducing from the problem of counting the number of cells in a hyperplane arrangement, which is \#\P-complete (see~\citep{linial1986hard}).

Let $\mathcal{H}=(H_i)_{i\in [m]}$ be a hyperplane arrangement in $\mathbb{R}^n$ with $H_i:=\{x: w_i^\top x=0\}$, $w_i\in \mathbb{R}^n\setminus \{0\}$ such that each $w_i$ appears only once.
Restricted to such hyperplane arrangements, the problem of counting the number of cells remains \#\P-complete, see~\citep{linial1986hard}.
Given a point $x^*\in \mathbb{R}^n$ in a cell, we first orient the hyperplanes such that $w_i^\top x^* > 0$ for all $i\in [m]$.
We will show that the ReLU network $N_\mathcal{H}$ computing the convex function
\[
f_{N_\mathcal{H}}(x)=\sum_{i=1}^m \max(0, w_i^\top x),
\]
which can be computed using one hidden layer and one neuron per hyperplane, has exactly as many linear regions as the hyperplane arrangement $\mathcal{H}$ has cells (according to \Cref{def:region1,def:region2,def:region3,def:region4,def:region5,def:region6}).

It is easy to see that the number of cells of $\mathcal{H}$ is exactly the number of proper activation regions of $N_\mathcal{H}$.
We now show that also the number of activation regions is equal to the number of cells of $N_\mathcal{H}$.
Suppose for the sake of contradiction that there is an activation pattern $a\in \{0,1\}^m$ with support $I\subseteq [m]$ such that the activation region $S_a$ is neither full dimensional nor empty.
If $S_a$ is low dimensional, then by definition the set
\[
\{x\in \mathbb{R}^n: w_i^\top x \leq 0 \text{ for all }i\in [m]\setminus I\}
\]
is low dimensional and there is an index $j\in [m]\setminus I$ with $S_a\subseteq \{w_i^\top x = 0\}$.
Therefore, there is a $\lambda \in \mathbb{R}_{\geq 0}^{[m]\setminus I}$ such that
\[
\sum_{i\in [m]\setminus I} \lambda_i w_i = -w_j
\]
holds, which leads to the contradiction
\[
0<\sum_{i\in [m]\setminus I} \lambda_i w_i^\top x^* = -w_j^\top x^*\leq 0.
\]
Thus, each activation region is a proper activation region.

We now show that the affine functions on two cells cannot be equal, which proves \#\P-hardness also for \Cref{def:region3,def:region4,def:region5,def:region6}.

Suppose $a,a'\subseteq \{0,1\}^m$ are two activation patterns corresponding to distinct proper activation regions $S_{a}, S_{a'}$ with the same affine function.
$\overline{S_{a}}$ cannot have a $(n-1)$-dimensional intersection with $\overline{S_{a'}}$, since otherwise there would be exactly one hyperplane separating $S_{a}$ from $S_{a'}$, and by construction, the affine functions $f_{N_\mathcal{H}}^a$ and $f_{N_\mathcal{H}}^{a'}$ must be distinct.

Therefore, $\conv(S_a \cup S_{a'})\setminus (S_a \cup S_{a'})$ is full-dimensional, and there exists a proper activation region $S_{a^*}$ with $\dim(\text{conv}(S_{a}\cup S_{a'}) \cap S_{a^*})=n$ and $\dim(\overline{S_{a}}\cap \overline{S_{a^*}})=n-1$.
Since $f_{N_\mathcal{H}}$ is convex, $f_{N_\mathcal{H}}(x)=f_{N_\mathcal{H}}^a(x)$ holds for all $x\in \text{conv}(S_{a}\cup S_{a'})$. Thus, the function computed on $S_{a^*}$ must be equal to $f_{N_\mathcal{H}}^a$, which gives a contradiction as before.

We now show that \textsc{Linear region counting} is in \#\P\ for \Cref{def:region1,def:region2,def:region3,def:region4}.

A certificate for \Cref{def:region1,def:region2} is simply an activation pattern $a\in \{0,1\}^{s(N)}$ of the ReLU network $N$, which can be checked in polynomial time by computing the dimension of $S_a$, see \Cref{lem:dim_of_activation_region_is_poly_time_computable}.
Thus, \textsc{Linear region counting} is in \#\P\ for \Cref{def:region1,def:region2}.

We now construct certificates for \Cref{def:region3,def:region4}.
Given a ReLU network $N$ with one hidden layer, the set $\mathcal{H}=(H_i)_{i\in[m]}$ of hyperplanes that correspond to breakpoints of the function $f_N$ can be computed in polynomial time using the procedure described in the proof of \Cref{thm:polyresultfor1layerdecision}.
By construction, the function $f_N$ is affine on each cell of the hyperplane arrangement $\mathcal{H}$ and the affine functions that are realized on two neighboring cells (cells with an $(n-1)$-dimensional intersection) cannot be equal.
Thus, each cell of the hyperplane arrangement $\mathcal{H}$ is an open connected region.

Since each open connected region is convex, the set of convex regions of the ReLU network $N$ is well defined and its cardinality is equal to the set of open connected regions.

A unique certificate for an open connected region (and a convex region) is now given by the hyperplane arrangement $\mathcal{H}$ as well as a vector $a\in\{-,+\}^m$ specifying a cell $C\subset\mathbb{R}^n$ of the hyperplane arrangement $\mathcal{H}$, where $C\subseteq H_i^{a_i}$ for every $i\in [m]$.
The certificate can be checked in polynomial time: we can verify in polynomial time if the hyperplane arrangement $\mathcal{H}$ is defined as above, and we can verify in polynomial time if the vector $a$ corresponds to a cell of $\mathcal{H}$.

It follows that \textsc{Linear region counting} is in \#\P{} for \Cref{def:region3,def:region4}.
\end{proof}

\subsection{Omitted proofs for more than one hidden layers}
\label{sec:omitted_proofs_deep}
\textbf{Intuition for the proof of \Cref{thm:np_completeness_2region_decision_problem}.}
The proof of \Cref{thm:np_completeness_2region_decision_problem} improves the reduction of~\citet{wang2022estimation}, who use a result of \citet[Appendix I]{katz2017reluplex}.
Similar to \citet[Appendix I]{katz2017reluplex}, we rely on the simple fact that given a \SAT{} formula $\phi(x)=\bigwedge_{i=1}^m((\bigvee_{j\in J^+_i}x_j) \vee (\bigvee_{j\in J^-_i}\neg x_j))$ on the variables $x_1,\dots, x_n$, the function
\[
g_\phi(x) = 1 - \sum_{i=1}^m \max(0, 1-\sum_{j\in J^+_i}x_j-\sum_{j\in J^-_i}(1-x_j))
\]
takes value 1 on all satisfying (0-1) assignments, and value less than 0 on all non-satisfying assignments, which follows from the fact that $\max(0, 1-\sum_{j\in J^+_i}x_j-\sum_{j\in J^-_i}(1-x_j))$ evaluates to 0 for all (0-1) assignments that satisfy the $i$-th clause of $\phi$, and to 0 for all (0-1) assignments that do not satisfy the $i$-th clause of $\phi$.
For every $i \in [m]$, $J_i^+$ and $J^-_i$ are disjoint subsets of $[n]$ specifying which (negated) variables occur in the $i$-th clause of $\varphi$.

Notice that if $\phi$ is unsatisfiable, there is no (0-1) assignment on which $g_\phi$ takes value 1. As a result, for any $\varepsilon\in (0,1)$, $\phi$ is satisfiable if and only if $\max(1, \varepsilon + g_\phi) - 1 = \max(0, \varepsilon-1 + g_\phi)$ evaluates to $\varepsilon$ on some (0-1) assignment.

This implies that if $\phi$ is satisfiable, the function $h_{\phi, \varepsilon}= \max(0, \varepsilon-1 + g_\phi)$ has at least two linear regions according to \Cref{def:region3,def:region4,def:region5,def:region6}, since $h_{\phi, \varepsilon}$ evaluates to $\varepsilon$ for a satisfying (0-1) point, and to 0 for all points in an $\varepsilon$-ball around a non-satisfying (0-1) point. Since each clause in a \SAT{} formula is not satisfied by least one (0-1) assignment, we can assume that $\phi$ has a non-satisfying assignment.

If for any \SAT{} formula $\phi$, the function $h_{\phi, \varepsilon}$ had strictly more than one linear region (according to \Cref{def:region3,def:region4,def:region5,def:region6}) only if $\phi$ is satisfiable, then we would have a complete reduction from \SAT{} to the problem of deciding whether a ReLU network with two hidden layers has strictly more than one linear region (according to \Cref{def:region3,def:region4,def:region5,def:region6}), since $h_{\phi, \varepsilon}$ can be computed using a ReLU network with two hidden layers.

Unfortunately, there exists a \SAT{} formula $\psi$ such that $h_{\psi, \varepsilon}$ has more than one linear region although $\psi$ is unsatisfiable, see \Cref{ex:satfornphardness2}.

The key idea to resolve this is to add a CWPL function that is negative everywhere but on the elements of the set $\{0,1\}^n$ (on which it evaluates to zero).

A function with this property is the function $T_n:\mathbb{R}^n\to \mathbb{R}$ with
\[
T_n(x)=\sum_{i=1}^n(-\max(0,-x_i)-\max(0,x_i)+\max(0,2x_i-1)-\max(0,2x_i-2)),
\]
shown in \Cref{fig:subtraction_function3}. The proof of \Cref{thm:np_completeness_2region_decision_problem} shows that adding this function recovers the equivalence: A \SAT{} formula is satisfiable if and only if the function $\max(0, T_n + \varepsilon-1 + g_\phi)$ has more than one linear region according to \Cref{def:region3,def:region4,def:region5,def:region6}.

For an example that visualizes the different steps of the reduction, see \Cref{ex:satfornphardness1}.

\begin{figure}
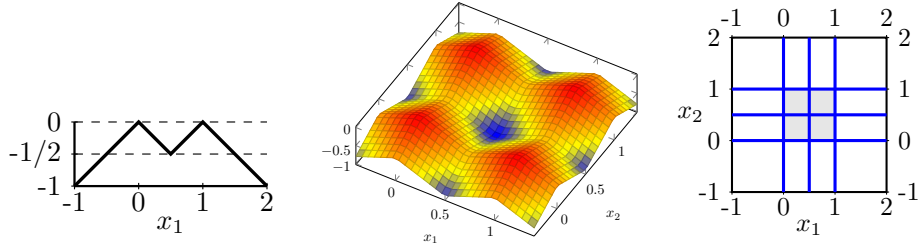

    \centering
    \includegraphics[page=5, width=0.3\textwidth]{arxiv_figures.pdf}
    \includegraphics[page=6, width=0.325\textwidth]{arxiv_figures.pdf}
    \includegraphics[page=7, width=0.275\textwidth]{arxiv_figures.pdf}
    \caption{The function $T_1$ (left), $T_2$ (center) and its linear regions (right).}
    \label{fig:subtraction_function3}
\end{figure}

The following lemma is easy to prove based on the plot of $T_1$ in \Cref{fig:subtraction_function3}.
\begin{lemma}
\label{lem:t_regions}
For any $n\in \mathbb{N}$, the following implications hold
\begin{align*}
     &T_n(x)\leq -\varepsilon
     \quad&&\Longleftarrow&&\quad
     \exists i\in [n]:\;x_i\in (-\infty,-\varepsilon]\cup[\varepsilon,1-\varepsilon]\cup [1+\varepsilon,\infty)\\
     &T_n(x)=0
     \quad&&\iff&&\quad
     x\in \{0,1\}^n.
    \end{align*}
\end{lemma}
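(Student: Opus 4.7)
The plan is to reduce everything to a one-dimensional statement, since $T_n$ is by construction separable, $T_n(x)=\sum_{i=1}^n t(x_i)$ with
\[
t(y) := -\max(0,-y)-\max(0,y)+\max(0,2y-1)-\max(0,2y-2).
\]
So the first step is to get a clean piecewise-linear description of $t$, and the second step is to lift that to $T_n$ via additivity.

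For the first step, I would do a standard case analysis on the four breakpoints $y\le 0$, $0\le y\le 1/2$, $1/2\le y\le 1$, $y\ge 1$, checking which of the four ReLU terms are active on each interval. This gives (after straightforward simplification)
\[
t(y)=
\begin{cases}
y & y\le 0,\\
-y & 0\le y\le \tfrac12,\\
y-1 & \tfrac12\le y\le 1,\\
1-y & y\ge 1,
\end{cases}
\]
which is the tent-like function visible in \Cref{fig:subtraction_function3} (left). Two facts are then immediate: $t(y)\le 0$ for every $y\in\mathbb{R}$ with equality iff $y\in\{0,1\}$; and, assuming $\varepsilon\in(0,\tfrac12]$ (which is the relevant regime in the reduction of \Cref{thm:np_completeness_2region_decision_problem}), the sublevel set $\{y:t(y)\le -\varepsilon\}$ is exactly $(-\infty,-\varepsilon]\cup[\varepsilon,1-\varepsilon]\cup[1+\varepsilon,\infty)$; this is just reading off the four affine pieces above.

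For the second step, I lift both statements to $T_n$. Since $t\le 0$ everywhere, $T_n(x)=\sum_i t(x_i)\le 0$, and $T_n(x)=0$ forces each summand to vanish, i.e.\ $x_i\in\{0,1\}$ for every $i$; conversely any such $x$ clearly gives $T_n(x)=0$. This proves the second equivalence. For the first implication, if some coordinate $x_{i_0}$ lies in $(-\infty,-\varepsilon]\cup[\varepsilon,1-\varepsilon]\cup[1+\varepsilon,\infty)$ then by the one-dimensional statement $t(x_{i_0})\le -\varepsilon$, while $t(x_i)\le 0$ for the remaining $i$, yielding $T_n(x)\le -\varepsilon$ as claimed.

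There is no genuine obstacle here: the content of the lemma is a direct case analysis on the four affine pieces of $t$ together with the separability of $T_n$. The only thing to be careful about is the implicit assumption $\varepsilon\le\tfrac12$, without which the three intervals in the hypothesis overlap or become vacuous; this is harmless because $T_n$ is only used in the reduction for small $\varepsilon$, but it is worth stating explicitly in the proof to avoid ambiguity.
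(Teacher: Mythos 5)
Your proposal is correct and matches the paper's intended argument: the paper omits a formal proof, stating only that the lemma ``is easy to prove based on the plot of $T_1$,'' which is precisely the separability-plus-one-dimensional case analysis you carry out explicitly. Your piecewise description of $t$ and the lifting to $T_n$ are both accurate, and your remark about $\varepsilon\le\tfrac12$ is a reasonable (if slightly over-cautious) clarification, since the stated implication in fact holds for all $\varepsilon>0$ because the middle interval simply becomes empty.
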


\begin{proof}[Proof of \Cref{thm:np_completeness_2region_decision_problem}]
We first show that the problem is in \NP.
If a ReLU network $N$ is a yes-instance of \textsc{1-region-decision}, then there are two proper activation regions on which two distinct affine functions are computed.
Two activation patterns corresponding to such proper activation regions serve as a polynomial certificate of a yes-instance.
Given two vectors $a,a'\in \{0,1\}^{s(N)}$, we can verify the certificate in polynomial time.
First, we check if $a$ and $a'$ correspond to proper activation regions by computing the dimension of $S_a$ and $S_{a'}$ in polynomial time, see \Cref{lem:dim_of_activation_region_is_poly_time_computable}.
If $S_a$ and $S_{a'}$ are proper activation regions and $f_N^a\neq f_N^{a'}$ holds, which can be checked in polynomial time (see \Cref{lem:pattern_to_function_is_poly_time_computable}), then $a,a'$ is a valid certificate of a yes-instance.
Thus, the problem is in \NP.

To show \NP-hardness, we reduce the problem of deciding whether a \SAT\ instance is satisfiable to our problem.
Let $\phi(x)=\bigwedge_{i=1}^m((\bigvee_{j\in J^+_i}x_j) \vee (\bigvee_{j\in J^-_i}\neg x_j) $ be a \SAT\ formula on $n$ variables with $|J^+_i|+|J^-_i|\leq n$.
Set $\varepsilon = 1/(n+1)$.
Consider the network $N_\phi$ with two hidden layers that computes the function
\[
    f_{N_\phi}(x)=\max(0, T_n(x)+\varepsilon-\sum_{i=1}^m \max(0, 1-\sum_{j\in J^+_i}x_j-\sum_{j\in J^-_i}(1-x_j))),
\]
Note that $N_\phi$ can be constructed from $\phi$ in polynomial time, since adding $T_n$ increases the encoding size only by an additional $\mathcal{O}(n^2)$ term.
The idea is now to show that if $\phi$ has a satisfying assignment, then $N_\phi$ has at least two linear regions, and if $\phi$ has no satisfying assignment, then $N_\phi$ has only one linear region with the constant zero function, which proves the lemma.

By \Cref{lem:t_regions}, we have $T_n(x)\leq -\varepsilon$ and therefore $f_{N_\phi}(x)=0$ for all $x\in 
\mathbb{R}^n$ with some $x_i\in (-\infty,-\varepsilon]\cup[\varepsilon,1-\varepsilon]\cup [1+\varepsilon,\infty)$.

As a result, we have
\[
\{x: f_{N_\phi}(x)>0\}\subseteq ([-\varepsilon,\varepsilon]\cup [1-\varepsilon, 1+\varepsilon])^n= \bigcup_{x\in \{0,1\}^n}B^\infty_\varepsilon(x),
\]
where $B^\infty_\varepsilon(x):=\{x': \|x-x'\|_\infty \leq \varepsilon\}$.

Suppose now that $x^*\in \{0,1\}^n$ satisfies $\phi$.
Then,
\[
\sum_{j\in J^+_i}x^*_j+\sum_{j\in J^-_i}(1-x^*_j)\geq 1 \quad \text{ for all }i\in [m],
\]
which implies $f_{N_\phi}(x^*)=\max(0, T_n(x^*)+\varepsilon)=\varepsilon>0$.
Thus, $N_\phi$ has at least two linear regions.

Suppose now that $x^*\in \{0,1\}^n$ does not satisfy $\phi$.
There is at least one clause $i^*$ with
\[
\sum_{j\in J^+_{i^*}}x^*_j+\sum_{j\in J^-_{i^*}}(1-x^*_j)= 0.
\]
In particular, for all $x\in B^\infty_\varepsilon(x^*)$, we have
\[
1-\sum_{j\in J^+_{i^*}}x_j-\sum_{j\in J^-_{i^*}}(1-x_j)
\geq
1 - |J^+_{i^*}|\varepsilon - |J^-_{i^*}|\varepsilon
\geq 1 - n \varepsilon
= 1 - n / (n+1)
= 1 / (n+1)
=\varepsilon.
\]
Therefore, we have $f_{N_\phi}(x)=0$ for all $x\in B^\infty_\varepsilon(x^*)$.
If $\phi$ has no satisfying assignment, then $f_{N_\phi}$ is the constant zero function.
\end{proof}
\begin{proof}[Proof of \Cref{thm:np_hardness_Kregion_decision}]
Given fixed constants $K, L\in \mathbb{N}_{\geq 1}$, $L\geq 2$ and a \SAT{} formula $\phi$, we will create a network with $L$ hidden layers which has strictly more than $K$ linear regions if and only if the network $N_\phi$ from the proof of \Cref{thm:np_completeness_2region_decision_problem} has strictly more than one hidden layer.

Let $N_\phi$ be the network as in the proof of \Cref{thm:np_completeness_2region_decision_problem}.
If $K\geq 2$, the network $N^{(K)}_\phi$ computing the function
\[
f_{N_\phi}(x)
-
\max(0,2(n+m)(x_1-2))-\dots - \max(0,2(n+m)(x_1-K))
\]
has $K$ linear regions if $N_\phi$ has only one linear region and strictly more than $K$ linear regions if $N_\phi$ has more than one linear region.
For \Cref{def:region3,def:region4,def:region5}, this follows from the fact that the newly introduced linear regions are outside of the hypercube $[-\varepsilon,1+\varepsilon]^n$ that contains all nonzero linear regions of $f_{N_\phi}(x)$.
For \Cref{def:region6}, we additionally have to verify that no newly introduced affine function was already present in $f_{N_\phi}$.
To see that no newly affine function was already present in $f_{N_\phi}$, observe that the coefficient of $x_1$ of every newly introduced affine function is at most $-2(n+m)$, while the coefficient of every affine function of $f_{N_\phi}$ cannot be smaller than $-n-m$, which can be easily seen from the proof of \Cref{thm:np_completeness_2region_decision_problem}.

The additional maximum terms can be created using two neurons in the first hidden layer that correspond to the positive and negative part of $x_1$, respectively, and adding $K-1$ neurons in the second hidden layer (using the equation $x_1=\max(0, x_1)-\max(0,-x_1)$ to build the maximum terms in the second hidden layer).

Thus, the network $N^{(K)}_\phi$ has two hidden layers.
To obtain a network $N^{(K,L)}_\phi$ with $L$ hidden layers, we add $L-2$ new hidden layers between the output layer and the last hidden layer of $N^{(K)}_\phi$.
Each new hidden layer has two neurons, the first neuron outputs $\max(0, N^{(K)}_\phi)$ and the second neuron outputs $\max(0, -N^{(K)}_\phi)$.
We achieve this by replacing the arcs from the second hidden layer of $N^{(K)}_\phi$ to the output node by connections to the two neurons of the first newly added hidden layer.
The theorem now follows by noting that the modified network $N^{(K,L)}_\phi$ has encoding size $\mathcal{O}(K\cdot\langle N_\phi\rangle+L)$ and can be constructed from $N_\phi$ in polynomial time.
\end{proof}
\begin{proof}[Proof of \Cref{cor:euqality_of_two_relu_networks}]
    Let $L\in \mathbb{N}$ be a fixed constant.
    Given two ReLU networks $N, N'$ with $L$ hidden layers, let $N^-$ represent the network with $L$ hidden layers that `subtracts' $N'$ from $N$ by computing the networks $N$ and $N'$ in parallel. $N$ and $N'$ compute the same function if and only if $N^-$ computes the zero function.

    To see that \textsc{$L$-network-equivalence} is in \NP, note that a vector in $\{0,1\}^{s(N^-)}$ that corresponds to a proper activation pattern with a nonzero affine function can be used as a certificate, as in the proof of \Cref{thm:np_completeness_2region_decision_problem}.

    If $L=1$, by \Cref{thm:polyresultfor1layerdecision} we can decide in polynomial time if $N^-$ computes an affine function.
    If $N^-$ computes an affine function then it computes the zero function if and only if $N^-$ evaluates to zero on $n+1$ affinely independent points, which yields a polynomial time algorithm for \textsc{1-network-equivalence}.
    Suppose $L\geq 2$ and let $\phi$ be a \SAT{} formula, let $N^{(1)}_\phi$ be the ReLU network with $L$ hidden layers from the proof of \Cref{thm:np_hardness_Kregion_decision}, and let $N_0$ be a ReLU network with $L$ hidden layers that computes the zero function.
    By \Cref{thm:np_hardness_Kregion_decision}, a \SAT{} formula $\phi$ is satisfiable if and only if $N^{(1)}_\phi$ and $N_0$ are a no-instance of \textsc{$L$-network-equivalence}, proving that \textsc{$L$-network-equivalence} is \coNP-hard.
\end{proof}

\begin{proof}[Proof of \Cref{cor:eth_lower_bound}]
Let $K, L\in \mathbb{N}, L\geq 2$ be fixed constants.
Given a 3-\SAT\ formula $\phi$ on $n$ variables and $m$ clauses, let $N^{(K,L)}_\phi$ be the ReLU network with $L$ hidden layers and input dimension $n$ from the proof of
\Cref{thm:np_hardness_Kregion_decision}. Recall that $N^{(K,L)}_\phi$ has strictly more than $K\in \mathbb{N}$ hidden layers if and only if $\phi$ is satisfiable.

We now show that $N^{(K,L)}_\phi$ has encoding size $\mathcal{O}(m^2)$. Recall that $N^{(K,L)}_\phi$ has an encoding size of $\mathcal{O}(K\cdot \langle N_\phi\rangle+L)$, and $N_\phi$ has an encoding size of $\mathcal{O}(n^2+nm)$. Since $n\leq 3m$ holds for every 3-\SAT\ formula and $K$ and $L$ are constants, the encoding size of $N^{(K,L)}_\phi$ is $\mathcal{O}(m^2)$.

It is well known that, assuming the Exponential Time Hypothesis is true, this implies that there is no $2^{o(n)}$ or $2^{o(\sqrt{\langle N\rangle})}$ time algorithm for \textsc{$K$-region-decision}, see~\citep{cygan2015lower}.
A $2^{o(n)}$ or $2^{o(\sqrt{\langle N\rangle})}$ time algorithm for \textsc{Linear region counting} problem would directly give a $2^{o(n)}$ or $2^{o(\sqrt{\langle N\rangle})}$ time algorithm for \textsc{$K$-region-decision}.
\end{proof}

\textbf{Intuition for the proof of \Cref{thm:sharp_sat_reduction}}

Given a \SAT\ formula $\phi$, the network $N_\phi$ from the proof of \Cref{thm:np_completeness_2region_decision_problem} has some nonzero linear regions near every satisfying assignment of $\phi$.
Unfortunately, the number of linear regions created per satisfying point depends on the formula $\phi$ and is not easily computable.
Therefore, we modify the network $N_\phi$ such that the same number of nonzero linear regions is created by every satisfying assignment of $\phi$. For this, we take the minimum of $f_{N_\phi}$ with the function
$T_{n,\varepsilon}:\mathbb{R}^n\to \mathbb{R}$,
\[
T_{n,\varepsilon}(x)= \max(0, \varepsilon+T_n(x)),
\]
shown in \Cref{fig:subtraction_function4}. We further show that $T_{n,\varepsilon}$ is strictly smaller than $f_{N_\phi}$, but greater than zero near a satisfying point. In this way, the minimum of $f_{N_\phi}$ and $T_{n,\varepsilon}$ is attained by $f_{N_\phi}$ near each non-satisfying point (where $f_{N_\phi}$ equals the zero function) and by $T_{n,\varepsilon}$ near each satisfying point. We proceed by showing that exactly $2^n$ nonzero regions are created for every satisfying assignment of $\phi$, and if $\phi$ has exactly $k$ satisfying assignments, the modified network has exactly $1+k\cdot 2^n$ linear regions according to \Cref{def:region4,def:region5}.
The modified network has three hidden layers. The reduction can then be extended to ReLU networks with $L\geq 3$ hidden layers as before.

\begin{figure}
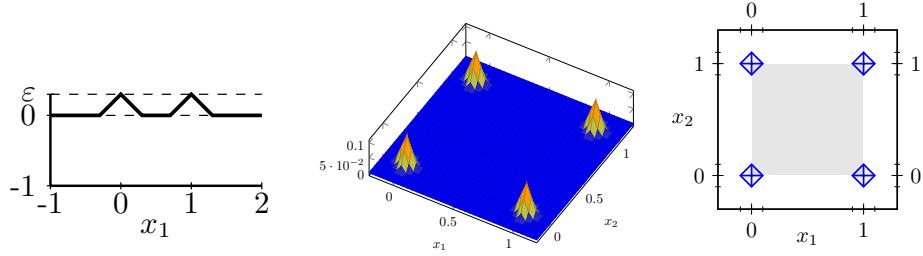

    \centering
    \includegraphics[page=8, width=0.3\textwidth]{arxiv_figures.pdf}
    \includegraphics[page=9, width=0.325\textwidth]{arxiv_figures.pdf}
    \includegraphics[page=10, width=0.275\textwidth]{arxiv_figures.pdf}
    \caption{The functions $T_{1,\varepsilon}$ (left), $T_{2,\varepsilon}$ (center) and its linear regions (right).}
    \label{fig:subtraction_function4}
\end{figure}

The following lemma is required for the proof of \Cref{thm:sharp_sat_reduction}.
\begin{lemma}
\label{lem:teps_regions}
Let $n\geq 2$ and $B^\infty_\varepsilon(x):=\{x'\in \mathbb{R}^n: \|x-x'\|_\infty \leq \varepsilon\}$.
The function $T_{n,\varepsilon}$ with $0<\varepsilon< 1/2$
has exactly $1+2^{2n}$ linear regions according to \Cref{def:region4,def:region5} and we have
\begin{align*}
    &T_{n,\varepsilon}(x)>0&&\quad\Longrightarrow\quad&& x\in\bigcup_{x^*\in \{0,1\}^n}B^\infty_\varepsilon(x^*),\\
    &T_{n,\varepsilon}(x)=\varepsilon&&\quad\iff\quad&&x\in\{0,1\}^n,
\end{align*}
and for every $x^*\in \{0,1\}^n$, the set $B^\infty_\varepsilon(x^*)$ contains exactly $2^n$ nonzero regions according to \Cref{def:region3,def:region4,def:region5,def:region6}.
\end{lemma}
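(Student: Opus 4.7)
The plan is to first decompose $T_n$ as a separable sum $T_n(x)=\sum_{i=1}^n t(x_i)$, where $t(y)$ collects the four max terms depending only on the single coordinate $y$. A short case analysis yields $t(y)=y$ for $y\le 0$, $t(y)=-y$ for $y\in[0,1/2]$, $t(y)=y-1$ for $y\in[1/2,1]$ and $t(y)=1-y$ for $y\ge 1$. From this I read off two structural facts: $t(y)\le 0$ with equality iff $y\in\{0,1\}$, and $t(y)>-\varepsilon$ iff $y\in(-\varepsilon,\varepsilon)\cup(1-\varepsilon,1+\varepsilon)$. Both implications of the lemma follow: if $T_{n,\varepsilon}(x)>0$ then $\sum_i t(x_i)>-\varepsilon$, and since every summand is nonpositive, each $t(x_i)>-\varepsilon$ individually, forcing $x\in B^\infty_\varepsilon(x^*)$ for some $x^*\in\{0,1\}^n$; while $T_{n,\varepsilon}(x)=\varepsilon$ forces $T_n(x)=0$, hence every $t(x_i)=0$, hence $x\in\{0,1\}^n$.

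Next, I would pin down the behaviour inside a single cube $B^\infty_\varepsilon(x^*)$. Because $\varepsilon<1/2$, the only breakpoint of $t$ active on $[x^*_i-\varepsilon,x^*_i+\varepsilon]$ is $x^*_i$ itself, so the piecewise formula collapses to $t(x_i)=-|x_i-x^*_i|$ and therefore $T_{n,\varepsilon}(x)=\max(0,\varepsilon-\|x-x^*\|_1)$ on the cube. Since $\|\cdot\|_\infty\le\|\cdot\|_1$, the positivity set inside the cube is exactly the open $\ell_1$-ball of radius $\varepsilon$ around $x^*$, which is contained in the cube. This ball splits canonically into its $2^n$ intersections with the sign-orthants around $x^*$, each a full-dimensional simplex on which $T_{n,\varepsilon}$ is the distinct affine function $\varepsilon-\sum_i\sigma_i(x_i-x^*_i)$ for $\sigma\in\{-1,+1\}^n$. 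The $2^n$ affine functions have pairwise distinct gradients, so the simplices are separated by genuine breakpoints and cannot be merged for any of \cref{def:region3,def:region4,def:region5,def:region6}; this gives exactly $2^n$ nonzero regions in the cube.

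For the total count under \cref{def:region4,def:region5} I would observe that the cubes $B^\infty_\varepsilon(x^*)$ for distinct $x^*\in\{0,1\}^n$ are pairwise disjoint (again by $\varepsilon<1/2$), so pieces from different cubes lie in different connected components and contribute $2^n\cdot 2^n=2^{2n}$ nonzero regions in total. The zero set $\{x:T_{n,\varepsilon}(x)=0\}$ equals the complement of the union of the $2^n$ open $\ell_1$-balls; for $n\ge 2$, removing finitely many pairwise disjoint compact convex sets from $\mathbb{R}^n$ leaves a path-connected set, so this accounts for exactly one zero region under \cref{def:region5} (as a closed connected set on which $T_{n,\varepsilon}$ is the zero affine function) and, after taking interior, one under \cref{def:region4}. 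Summing gives $1+2^{2n}$.

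The step I expect to be most delicate is ruling out, under \cref{def:region5}, that two adjacent orthant-simplex pieces inside the same cube could be joined into a single closed connected region: their two affine formulas $\varepsilon-\sum_i\sigma_i(x_i-x^*_i)$ and $\varepsilon-\sum_i\sigma'_i(x_i-x^*_i)$ with $\sigma$ and $\sigma'$ differing only in coordinate $j$ actually coincide on the shared facet $\{x_j=x^*_j\}$ (where $|x_j-x^*_j|=0$), so one must rule out merging through this low-dimensional slice. I would dispose of this by the definition: a closed connected region requires a \emph{single} affine function on $\mathbb{R}^n$ to agree with $T_{n,\varepsilon}$ throughout the region, but the two candidate functions have different gradients in coordinate $j$ and neither agrees with $T_{n,\varepsilon}$ off the shared hyperplane, so the union is not a set on which $T_{n,\varepsilon}$ is affine.
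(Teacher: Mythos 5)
Your proposal follows essentially the same route as the paper's: the separable decomposition $T_n(x)=\sum_i t(x_i)$ with $t\le 0$ vanishing exactly on $\{0,1\}$, the identity $T_{n,\varepsilon}(x)=\max(0,\varepsilon-\|x-x^*\|_1)$ on each cube $B^\infty_\varepsilon(x^*)$, and the count of $2^n$ full-dimensional orthant pieces per cube with pairwise distinct gradients. In places you are more explicit than the paper (which treats only the cube around $(1,\dots,1)$ and appeals to symmetry, and does not spell out the connectedness of the zero set or the within-cube non-merging argument of your last paragraph). The two displayed implications and the counts for \cref{def:region3,def:region4,def:region6} per cube and for \cref{def:region4} globally are fine.

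However, the delicate point you single out for \cref{def:region5} is not the dangerous one. You correctly rule out merging two orthant simplices \emph{inside the same cube} (their affine functions differ, so their union is not a set on which $T_{n,\varepsilon}$ is affine). What neither you nor the paper's own proof addresses is merging of simplices from \emph{different cubes that carry the same affine function}, connected through a low-dimensional slice of the zero set. This actually occurs: take $n=2$ and $g(x)=\varepsilon+1-x_1-x_2$, which is realized on $S_1=\{x_1\ge 1,\,x_2\ge 0,\,x_1+x_2\le 1+\varepsilon\}\subseteq B^\infty_\varepsilon((1,0))$ and on $S_2=\{x_1\ge 0,\,x_2\ge 1,\,x_1+x_2\le 1+\varepsilon\}\subseteq B^\infty_\varepsilon((0,1))$. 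On the segment $L=\{(t,1+\varepsilon-t):\varepsilon\le t\le 1\}$ joining the vertex $(1,\varepsilon)$ of $S_1$ to the vertex $(\varepsilon,1)$ of $S_2$, a direct computation with $t(y)=-\min(y,1-y)$ gives $\varepsilon+T_2\le 0$, hence $T_{2,\varepsilon}=0=g$ on $L$; so $S_1\cup L\cup S_2$ is a connected set on which $T_{2,\varepsilon}$ agrees with the single affine map $g$, and $S_1,S_2$ lie in the same inclusion-maximal such set. This is precisely the ``connection through a low-dimensional slice'' phenomenon the paper itself emphasizes in \Cref{sec:definitions}, and it means neither your argument nor the paper's establishes the count $1+2^{2n}$ (or the per-cube count of $2^n$) for \cref{def:region5}; the problem appears to lie with the statement for \cref{def:region5} rather than with something you overlooked relative to the paper. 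For \cref{def:region4} the argument is unaffected, since an open connected region cannot pass through the low-dimensional set $L$.
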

\begin{proof}[Proof of \Cref{lem:teps_regions}]
By \Cref{lem:t_regions}, if $T_n(x)\geq -\varepsilon$, then $x\in B^\infty_\varepsilon(x^*)$ for some $x^*\in \{0,1\}^n$.
Therefore, we have $T_{n,\varepsilon}(x) = 0$ for all $x\in \mathbb{R}^n\setminus (\bigcup_{x^*\in \{0,1\}^n}B^\infty_\varepsilon(x^*))$.
What is left is to analyze the linear regions of $T_{n,\varepsilon}$ in the set $B^\infty_\varepsilon(x^*)$ for every $x^*\in \{0,1\}^n$.

Due to the symmetry of $T_{n,\varepsilon}$, we only consider the set $B^\infty_\varepsilon((1,\dots, 1)^\top)=[1-\varepsilon,1+\varepsilon]^n$.
We will show that $[1-\varepsilon,1+\varepsilon]^n$ has exactly $2^n$ nonzero linear regions.

First, observe that for a point $x\in [1-\varepsilon,1+\varepsilon]^n$, we have
\[
T_n(x)= \sum_{i: x_i< 1}(x_i-1)+\sum_{i: x_i>1}(1-x_i).
\]
Given a subset $I$ of $[n]$, we define the set
\[
C_I:=\{x: 1-\varepsilon\leq x_i< 1 \;\forall i\in I,\;
1\leq x_i\leq 1+\varepsilon\;\forall i\notin I\}.
\]
It is easy to see that the disjoint union $\bigcup_{I\subseteq [n]}C_I$ gives exactly the set $[1-\varepsilon,1+\varepsilon]^n$.

Each set $C_I$ divides into two sets:
\begin{align*}
    &C^1_I:=\{x\in C_I: \sum_{i\in I}(x_i-1)+\sum_{i\notin I}(1-x_i)\geq-\varepsilon\}\\
    &C^0_I:=\{x\in C_I: \sum_{i\in I}(x_i-1)+\sum_{i\notin I}(1-x_i)\leq-\varepsilon\}
\end{align*}
We have $T_{n,\varepsilon}(x)=\varepsilon+\sum_{i\in I}(x_i-1)+\sum_{i\notin I}(1-x_i)$ for all $x\in C^1_I$ and $T_{n,\varepsilon}(x)=0$ for all $x\in C^0_I$.

The set $C^1_I$ is full dimensional, as $x^*\in \mathbb{R}^n$ with
$x^*_i=\begin{cases}
1-\frac{\varepsilon}{2n},&i\in I\\
1+\frac{\varepsilon}{2n},&i\notin I
\end{cases}$ is an interior point of $C^1_I$.
This proves that every $C_I$ contains exactly one nonzero region. Since the function for every $C_I$ is unique, $[1-\varepsilon,1+\varepsilon]^n$ contains exactly $2^n$ nonzero regions according to \Cref{def:region3,def:region6}.
Since $T_{n,\varepsilon}$ has only a single zero region, it follows that $T_{n,\varepsilon}$ has exactly $1+2^{2n}$ linear regions.
\end{proof}
\begin{proof}[Proof of \Cref{thm:sharp_sat_reduction}]
Let $\phi(x)=\bigwedge_{i=1}^m((\bigvee_{j\in J^+_i}x_j) \vee (\bigvee_{j\in J^-_i}\neg x_j) $ be a \SAT\ formula on $n$ variables, where $|J^+_i|+|J^-_i|\leq n$.
Set $\varepsilon = 1/(2+n+nm)$.
Consider the network $N_\phi^*$ that computes the function
\[
    f_{N_\phi^*}(x)=\min(T_{n,\varepsilon}(x), \max(0, 1-(n+1)\varepsilon-\sum_{i=1}^m \max(0, 1-\sum_{j\in J^+_i}x_j-\sum_{j\in J^-_i}(1-x_j)))),
\]
which can be computed with 3 hidden layers. This is due to the fact that the minimum of two terms can be expressed using three neurons $\min(a,b)=-\max(0,b-a)+\max(0,b)-\max(0,-b)$.
The reduction is polynomial since the addition of $T_{n,\varepsilon}$ increases the encoding size only by an additional $\mathcal{O}(n^2)$ term.
Now, our goal is to show that if $\phi$ has exactly $k$ satisfying assignments, then $N_\phi^*$ has exactly $1+2^n \cdot k$ linear regions.

By \Cref{lem:teps_regions}, if $f_{N_\phi^*}(x)> 0$ then $x\in \bigcup_{x^*\in \{0,1\}^n}B^\infty_\varepsilon(x^*)$, where $B^\infty_\varepsilon(x^*)=\{x'\in \mathbb{R}^n: \|x^*-x'\|_\infty \leq \varepsilon\}$.

We will prove our theorem by showing that the following holds for all $x^*\in \{0,1\}^n$.
\begin{enumerate}
    \item If $\phi(x^*)=0$, then $f_{N_\phi^*}$ has no nonzero linear region in $B^\infty_\varepsilon(x^*)$.
    \item If $\phi(x^*)=1$, then $f_{N_\phi^*}$ has exactly $2^n$ nonzero linear regions in $B^\infty_\varepsilon(x^*)$.
\end{enumerate}
We start with the first implication.
Suppose $\phi(x^*)=0$ holds.
Then, there is a clause $i^*$ such that
$x^*_j= \begin{cases}
    0,\; j\in J^+_{i^*}\\
    1,\;j\in J^-_{i^*}
\end{cases}$
holds.
Thus, we have for all $x\in B^\infty_\varepsilon(x^*)$:
\begin{align*}
&-\sum_{i=1}^m \max(0, 1-\sum_{j\in J^+_i}x_j-\sum_{j\in J^-_i}(1-x_j))\\
\leq &-\max(0,1-\sum_{j\in J^+_{i^*}}x_j-\sum_{j\in J^-_{i^*}}(1-x_j))\\
\leq &-\max(0,1-\sum_{j\in J^+_{i^*}}\varepsilon-\sum_{j\in J^-_{i^*}}(1-(1-\varepsilon)))\\
=&-1+(|J^+_{i^*}|+|J^-_{i^*}|) \varepsilon,
\end{align*}
implying
\begin{align*}
1-(n+1)\varepsilon-\sum_{i=1}^m \max(0, 1-\sum_{j\in J^+_i}x_j-\sum_{j\in J^-_i}(1-x_j))
\leq (|J^+_{i^*}|+|J^-_{i^*}|-n-1)\varepsilon
\leq 0
\end{align*}
and thus $f_{N_\phi^*}(x)=0$ for all $x\in B^\infty_\varepsilon(x^*)$.

To prove the second implication, suppose that $\phi(x^*)=1$ holds.
We will show that the second component $f_{N_\phi}$ in the minimum of $f_{N_\phi^*}$,
\[
f_{N_\phi}(x)=\max(0, 1-(n+1)\varepsilon-\sum_{i=1}^m \max(0, 1-\sum_{j\in J^+_i}x_j-\sum_{j\in J^-_i}(1-x_j)))
\]
is greater or equal to $T_{n,\varepsilon}$ for all $x\in B^\infty_\varepsilon(x^*)$. Then $f_{N_\phi^*}(x)=T_{n,\varepsilon}(x)$ holds for all $x\in B^\infty_\varepsilon(x^*)$. By \Cref{lem:teps_regions}, this implies that $f_{N_\phi^*}$ has exactly $2^n$ nonzero linear regions in $B^\infty_\varepsilon(x^*)$, which will prove the second implication.

We now show that $f_{N_\phi^*}(x)=T_{n,\varepsilon}(x)$ holds for all $x\in B^\infty_\varepsilon(x^*)$.
W.l.o.g. let $x^*=(1,\dots, 1)^\top$.
By assumption, $|J^+_i|\geq 1$ and $|J^-_i|\leq n-1$ holds for all clauses $i\in[m]$.
Thus, we have for all $x\in B^\infty_\varepsilon(x^*)=[1-\varepsilon,1+\varepsilon]^n$ and all $i\in [m]$
\begin{align*}
    1-\sum_{j\in J^+_i}x_j-\sum_{j\in J^-_i}(1-x_j)
    \leq 1-|J^+_i|(1-\varepsilon)+|J^-_i|\varepsilon
    \leq  \varepsilon+(n-1)\varepsilon
    = n\cdot \varepsilon.
\end{align*}
As a consequence, for all $x\in [1-\varepsilon,1+\varepsilon]^n$, we have
\begin{align*}
    f_{N_\phi}(x)
    &\geq \max(0, 1-(n+1)\varepsilon-\sum_{i=1}^m \max(0, 1-\sum_{j\in J^+_i}x_j-\sum_{j\in J^-_i}(1-x_j)))\\
    &\geq \max(0, 1-(n+1)\varepsilon-\sum_{i=1}^m n\cdot \varepsilon)\\
    &= 1-(n+1)\varepsilon-m \cdot n\cdot \varepsilon\\
    &= 1-(1+n+nm)\varepsilon\\
    &= \varepsilon\\
    &\geq T_{n,\varepsilon}(x),
\end{align*}
and thus, $f_{N_\phi^*}(x)=T_{n,\varepsilon}(x)$ for all $x\in [1-\varepsilon,1+\varepsilon]^n$.
By \Cref{lem:teps_regions}, $T_{n,\varepsilon}$ has $2^n$ nonzero linear regions in $[1-\varepsilon,1+\varepsilon]^n$.

We extend the hardness result to networks with $L\geq 3$ hidden layers as in the proof of \Cref{thm:np_hardness_Kregion_decision}.
\end{proof}
The following lemma will be used in the proof of \Cref{thm:approx_result_2_hidden_layers}.
\begin{lemma}
\label{lem:amplification_lemma}
Let $g:\mathbb{R}^n\to \mathbb{R}$ be a CWPL function with exactly $m$ affine regions. Then, for every $k\in \mathbb{N}$, the function $g^{(k)}:\mathbb{R}^{nk}\to \mathbb{R}^n$,
\[
g^{(k)}(x_{1,1},\dots, x_{1,n},\dots,x_{k,1},\dots, x_{k,n})=\sum_{i=1}^k g(x_{i,1},\dots, x_{i,n})
\]
has exactly $m^k$ affine regions.
\end{lemma}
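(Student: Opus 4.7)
The plan is to exhibit a bijection between $[m]^k$ and the affine regions of $g^{(k)}$. I would enumerate the affine regions of $g$ as $R_1,\dots,R_m$ with associated (pairwise distinct by maximality) affine functions $\ell_1,\dots,\ell_m$, so that $R_i=\{x\in\mathbb{R}^n:g(x)=\ell_i(x)\}$. For each tuple $\vec i=(i_1,\dots,i_k)\in[m]^k$ I would consider the product $P_{\vec i}=R_{i_1}\times\cdots\times R_{i_k}$ together with the affine function $L_{\vec i}(x_1,\dots,x_k)=\sum_{j=1}^k \ell_{i_j}(x_j)$. Because the blocks of variables of $g^{(k)}$ are disjoint, $g^{(k)}$ agrees with $L_{\vec i}$ on $P_{\vec i}$, and the products $P_{\vec i}$ cover $\mathbb{R}^{nk}$.

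The core of the argument is to show that each $P_{\vec i}$ is itself inclusion-maximal as a subset on which $g^{(k)}$ is affine, hence equal to exactly one affine region of $g^{(k)}$. I would do this by proving that the family $\{L_{\vec i}\}_{\vec i\in[m]^k}$ consists of pairwise distinct affine functions: assuming $L_{\vec i}=L_{\vec i'}$, fixing all but one block of variables and comparing shows that $\ell_{i_j}-\ell_{i'_j}$ is a constant on each block with the $k$ constants summing to zero, which together with the pairwise distinctness of $\ell_1,\dots,\ell_m$ should force $\vec i=\vec i'$. Once distinctness of the $L_{\vec i}$'s is established, if $(y_1,\dots,y_k)\notin P_{\vec i}$ then some $y_{j^*}\notin R_{i_{j^*}}$, so $g(y_{j^*})\neq \ell_{i_{j^*}}(y_{j^*})$, and because no other tuple $\vec i'$ produces the same $L_{\vec i}$ one concludes $g^{(k)}(y_1,\dots,y_k)\neq L_{\vec i}(y_1,\dots,y_k)$. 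This yields the bijection and the count $m^k$.

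The main obstacle is the distinctness step: a priori two distinct affine functions $\ell_i,\ell_{i'}$ of $g$ can share the same linear part and differ only by a nonzero additive constant (their affine regions being geometrically separated in $\mathbb{R}^n$), and in that situation swapping $i\leftrightarrow i'$ across two coordinate blocks of a tuple produces the same $L_{\vec i}$, so that two products $P_{\vec i},P_{\vec i'}$ lie in a single larger affine region of $g^{(k)}$. Pushing the proof through in full generality therefore requires additional care -- for instance verifying that the affine functions realized on the distinct affine regions of the specific networks $N_\phi$ from \Cref{thm:np_completeness_2region_decision_problem}, to which the lemma is applied in the proof of \Cref{thm:approx_result_2_hidden_layers}, have pairwise distinct linear parts -- or settling for the weaker (and for \Cref{thm:approx_result_2_hidden_layers} still sufficient) statement that $g^{(k)}$ has at least $2^k$ distinct affine regions whenever $m\geq 2$, which follows immediately from the fact that tuples indexed by different subsets of "zero-valued" blocks already yield affine functions with distinct gradients.
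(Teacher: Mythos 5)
You take the same route as the paper's own proof: decompose $\mathbb{R}^{nk}$ into the products $U_{i_1}\times\cdots\times U_{i_k}$ and argue that distinct tuples yield distinct affine functions $\sum_{j}h_{i_j}(x_{j})$. The ``main obstacle'' you flag is not a technicality you failed to overcome --- it is a genuine flaw, and the paper's proof steps right over it: it asserts that because $h_1,\dots,h_m$ are pairwise distinct, the sums $\sum_j h_{i_j}$ are pairwise distinct, which fails in exactly the situation you describe. Concretely, take $g(x)=-\max(0,-x)+\max(0,x-1)$ on $\mathbb{R}$, whose $m=3$ affine regions $(-\infty,0]$, $[0,1]$, $[1,\infty)$ carry the functions $x$, $0$, $x-1$. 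For $k=2$ the tuples $(1,3)$ and $(3,1)$ both produce the affine function $x_{1}+x_{2}-1$, so $U_1\times U_3$ and $U_3\times U_1$ lie in one and the same affine region of $g^{(2)}$ (an affine region is the maximal --- not necessarily connected --- set on which $g^{(2)}$ agrees with a fixed affine map); $g^{(2)}$ has $8<3^2$ affine regions. So the lemma as stated is false, the distinctness step cannot be pushed through in general, and a repair of the kind you sketch is genuinely needed.

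Of your two repairs, the second is the right one, since \Cref{thm:approx_result_2_hidden_layers} only uses the implication ``$g$ has at least $2$ affine regions $\Rightarrow$ $g^{(k)}$ has at least $2^k$ affine regions.'' But be careful with your claim that this ``follows immediately'' whenever $m\geq 2$: if the only two affine functions at hand differed by a nonzero constant, the subset construction would produce sums with identical gradients and only $k+1$ distinct functions. What saves the argument is that a non-affine CPWL function always admits two affine regions whose functions have \emph{distinct linear parts}: two full-dimensional pieces sharing an $(n-1)$-dimensional facet carry affine functions that agree on that facet, so their (nonzero) difference cannot be a constant. (For the specific $N_\phi$ one can instead use, as you suggest, the zero region together with a piece of nonzero gradient, but that nonzero gradient also needs a one-line justification --- e.g., a nonconstant CPWL function cannot have all pieces constant.) Running your product argument on just those two indices yields $2^k$ sums with pairwise distinct gradients, hence $2^k$ affine regions, which is all the theorem requires.
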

\begin{proof}
Let $U_1,\dots,U_m$ be the affine regions of $g$, let $R_1,\dots, R_p$ be the affine regions of $g^{(k)}$ and let $h_i:\mathbb{R}^{n}\to \mathbb{R}$ be the affine function of the affine region $U_i$ of $g$ for every $i\in [m]$.

For every $i\in [m]^k$ and for all $x\in U_{i_1}\times \dots \times U_{i_k}$, the function $g^{(k)}$ computes the affine function
\[
g^{(k)}(x_{1,1},\dots, x_{1,n},\dots,x_{k,1},\dots, x_{k,n})=\sum_{j=1}^k h_{i_j}(x_{j,1},\dots, x_{j,n}).
\]
Since all affine functions $h_1,\dots,h_m$ are distinct, $U_{i_1}\times \dots \times U_{i_k}$ is contained in a different affine region of $g^{(k)}$ for every $i\in [m]^k$.
As $U_{i_1}\times \dots \times U_{i_k}$ is inclusion-maximal with respect to affinity of $g^{(k)}$, it follows that $\{R_1,\dots,R_p\} = \{U_{i_1}\times \dots \times U_{i_k}: i\in [m]^k\}$, which concludes the proof.
\end{proof}
\begin{proof}[Proof of \Cref{thm:approx_result_2_hidden_layers}]
Let $\phi$ be a \SAT{} formula on $l$ variables and let $N_\phi$ be the ReLU network with two hidden layers constructed in the proof of \Cref{thm:np_completeness_2region_decision_problem}. Recall that for \Cref{def:region3,def:region4,def:region5,def:region6}, the network $N_\phi$ has at least two linear regions if $\phi$ is satisfiable and exactly one linear region if $\phi$ is unsatisfiable.

Let $N_\phi^{(k)}$ be the ReLU network composed of taking $k$ copies of $N$ each with a disjoint set of $l$ variables.
The function computed by the ReLU network $N^{(k)}$ is $f_{N_\phi^{(k)}}:\mathbb{R}^{lk}\to \mathbb{R}$ with
\[
f_{N_\phi^{(k)}}(x_{11},\dots, x_{1l},\dots,x_{k1},\dots, x_{kl})=\sum_{i=1}^k f_{N_\phi}(x_{i1},\dots, x_{il}).
\]
If $\phi$ is unsatisfiable then $N_\phi$ has exactly one linear region which implies that $N_\phi^{(k)}$ also has exactly one linear region (according to \Cref{def:region3,def:region4,def:region5,def:region6}). If $\phi$ is satisfiable then $N_\phi$ has at least two affine regions and by \Cref{lem:amplification_lemma}, $N_\phi^{(k)}$ has at least $2^k$ affine regions. By \Cref{thm:region_hierarchy}, $N_\phi^{(k)}$ then also has at least $2^k$ linear regions according to \Cref{def:region3,def:region4,def:region5}.
It follows that approximating the number of regions within a factor larger than $2^{-k}$ is \NP-hard (according to \Cref{def:region3,def:region4,def:region5,def:region6}). Setting $n=lk$, the theorem now follows by picking $k=l^C$ for a sufficiently large constant $C$ (e.g., $C$ such that $\frac{C}{C+1}>1-\varepsilon$) and noting the construction of $N_\phi^{(k)}$ from $N_\phi$ can be done in polynomial time.
\end{proof}

\subsection{Omitted proofs for polynomial space algorithms}
\begin{lemma}\label{lem:checkaffine}
Given a ReLU network $N$ and an affine map $\varphi(x_1,...,x_n)=\sum_{i=1}^n a_ix_i+b$,
we can check in space polynomial in $\langle N\rangle$ and in the encoding size of the coefficients of $\varphi$ whether $\varphi$ is the function of an affine region of $N$.
\end{lemma}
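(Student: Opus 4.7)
The plan is to argue correctness and the space bound of \Cref{alg:search_affine_piece} separately. First I would establish the structural equivalence that $\varphi$ is the affine function of some affine region of $N$ (in the sense of \Cref{def:region6}) if and only if there exists an activation pattern $a\in\{0,1\}^{s(N)}$ with $\dim S_a=n$ and $f_N^a=\varphi$. The easy direction is that a proper activation region $S_a$ on which $f_N^a=\varphi$ certifies that the level set $\{x\in\mathbb{R}^n : f_N(x)=\varphi(x)\}$ is a nonempty, inclusion-maximal subset of $\mathbb{R}^n$ on which $f_N$ is affine, hence an affine region realizing $\varphi$. For the converse, I would argue that any affine region realizing $\varphi$ must contain a full-dimensional subset (otherwise it would be contained in a face of some proper activation region on which $f_N$ is also affine, contradicting inclusion-maximality), and this full-dimensional subset must meet the interior of at least one proper activation region $S_a$, on which $f_N=f_N^a=\varphi$ holds by definition.

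Given the equivalence, correctness of \Cref{alg:search_affine_piece} is immediate: the algorithm returns $1$ precisely when such an $a$ exists. For the space bound, the loop over $\{0,1\}^{s(N)}$ requires only $s(N)$ bits to store the current activation pattern, which is polynomial in $\langle N\rangle$. Inside each iteration, testing $\dim S_a=n$ runs in polynomial time, and hence in polynomial space, by \Cref{lem:dim_of_activation_region_is_poly_time_computable}; the coefficients of $f_N^a$ are computable in polynomial space by \Cref{lem:pattern_to_function_is_poly_time_computable}, and they have polynomially bounded encoding size by \Cref{lem:coeffs_are_polynomial}. The equality check against $\varphi$ then reduces to a coefficient-wise comparison, using space polynomial in $\langle N\rangle$ and in the encoding size of $\varphi$. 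Since the workspace used inside the loop body is reused across iterations, the overall space usage remains polynomial in $\langle N\rangle$ and the encoding size of $\varphi$ despite the exponential number of iterations.

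The main obstacle is the converse correctness direction: one must rule out the possibility that an affine function associated with an affine region is not realized on any proper activation region. This hinges on the structural fact that proper activation regions cover $\mathbb{R}^n$ up to a set of lower dimension, so any full-dimensional subset on which $f_N$ equals $\varphi$ must intersect at least one proper activation region in its interior. Once this is pinned down, the rest of the argument is a straightforward composition of the previously established technical lemmas.
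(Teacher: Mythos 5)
Your proof follows essentially the same route as the paper's: reduce the question to the existence of a proper activation region on which $\varphi$ is realized, then sweep over all $2^{s(N)}$ activation patterns while reusing the per-iteration workspace, invoking \Cref{lem:dim_of_activation_region_is_poly_time_computable} and \Cref{lem:pattern_to_function_is_poly_time_computable} for the dimension and coefficient checks. You actually justify the equivalence between affine regions and proper activation regions in more detail than the paper, which simply asserts it; your argument is sound under the paper's implicit reading that affine regions are full-dimensional and that the proper activation regions cover $\mathbb{R}^n$ up to a lower-dimensional set.
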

\begin{proof}
First, note that $\varphi$ is the function of an affine region of $N$ if and only if there is a proper activation region on which $\varphi$ is realized.

Now, go over all $2^{s(N)}$ possible activation patterns for neurons of $N$ using space polynomial in $s(N)$. For each vector $a\in\{0,1\}^{s(n)}$, it is possible to verify in time polynomial in the encoding size $\langle N \rangle$ of the ReLU network $N$ whether $S_a$ is a proper activation region, see \Cref{lem:dim_of_activation_region_is_poly_time_computable}.
Further, we can check in polynomial time whether $\varphi$ is equal to the function $f^a_N$ computed on the proper activation region $S_a$, see \Cref{lem:pattern_to_function_is_poly_time_computable}.
\end{proof}
\begin{proof}[Proof of \Cref{thm:poly_space_algorithms}]
Let $N$ be a ReLU network.
As discussed in \Cref{sec:pspace}, the number of activation regions and proper activation regions can be counted in space which is polynomial in the encoding size $\langle N \rangle$ of the ReLU network $N$.
Now, we describe a polynomial space algorithm for counting the number of affine regions.

By \Cref{lem:coeffs_are_polynomial}, the encoding size of any coefficient of an affine function that occurs in one of the affine regions of $N$ is bounded by $M:=36d^2n_{\max}^2\langle A_{\max}\rangle$, which is polynomial in $\langle N \rangle$.

As each coefficient of an affine function of $N$ is a fraction, $M$ is also an upper bound on the encoding size of a numerator and on the encoding size of a denominator.
Since each affine function of $N$ is defined by $n+1$ fractions, we can exhaustively search through all sequences of $n+1$ fractions, where the numerator and denominator of each fraction can have encoding size of at most $M$. For each sequence of $n+1$ fractions, by \Cref{lem:checkaffine} we can compute in space that is polynomial in $\langle N \rangle$ if the corresponding affine function is the function of an affine region of $N$.
If an affine function of an affine region is found, we increase a counter by 1.
To avoid counting the same affine function more than once, we only check fraction sequences win which the numerator and denominator of every fraction are relatively prime.
\end{proof}
\section{Examples}
\subsection{A closed connected region which is not a closure of a union of a set of activation regions}
\label{subsection:zanotti_network}
\citet[Figure 1]{zanotti2025bounds} uses the following function as an example:
\[
\min(y, \max(-1, -x), \max(3 - 2 x, -x)).
\]
We turn this function into a ReLU network $N$ with three hidden layers, as illustrated in \Cref{fig:zanotti_network}.
For the orange closed connected region $P$ in \Cref{fig:zanotti_network}, there is no set of activation regions such that $P$ is the closure of a union of activation regions of the ReLU network $N$.

\begin{figure}
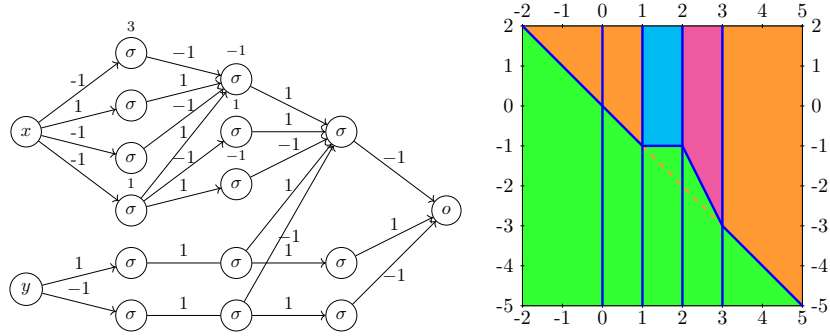

    \centering
    \includegraphics[page=17, width=0.45\textwidth]{arxiv_figures.pdf}
    \includegraphics[page=18, width=0.35\textwidth]{arxiv_figures.pdf}
    \caption{A ReLU network $N$ computing $\min(y, \max(-1, -x), \max(3 - 2 x, -x))$.
    An activation region of $N$ is either a blue line, blue point, or a full dimensional cell as defined by the blue lines.
    There are four closed connected region as indicated by the colors. The line between the points $(1,-1)$ and $(3,-3)$ belongs to the green as well as the orange region.}
    \label{fig:zanotti_network}
\end{figure}

\subsection{Further examples}
\begin{example}
\label{ex:satfornphardness1}
Consider the \SAT\ formula $\varphi = (\neg x_1)\land (x_1 \lor x_2)$ with the satisfying assignment $(0,1)$ and the function $g_\varphi(x)=1-\max(0,1-(1-x_1))-\max(0,1-x_1-x_2)$ displayed in \Cref{fig:sat_example_figure}.

\begin{figure}
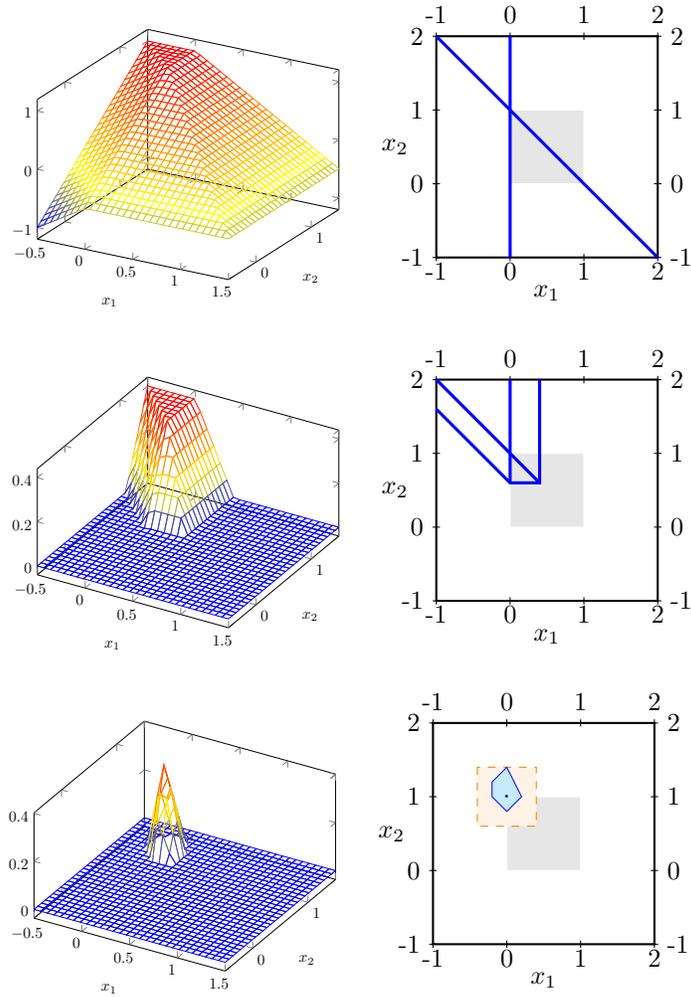

    \centering
    \includegraphics[page=11, width=0.34\textwidth]{arxiv_figures.pdf}
    \includegraphics[page=12, width=0.34\textwidth]{arxiv_figures.pdf}
    \includegraphics[page=13, width=0.34\textwidth]{arxiv_figures.pdf}
    \includegraphics[page=14, width=0.34\textwidth]{arxiv_figures.pdf}
    \includegraphics[page=15, width=0.34\textwidth]{arxiv_figures.pdf}
    \includegraphics[page=16, width=0.34\textwidth]{arxiv_figures.pdf}
    \caption{The functions $g_\varphi(x)=1-\max(0,x_1)-\max(0,1-x_1-x_2)$ (top), $h_{\varphi, \varepsilon}(x)=\max(0, \varepsilon-1+g_\varepsilon(x))$ (center), and $f_{N_\varphi}(x)=\max(0,T_2(x)+\varepsilon-1+g_\varphi(x))$ (bottom) for $\varepsilon=0.4$.
    The function $f_{N_\varphi}$ is only nonzero in the blue region, which is contained in the $\varepsilon$-square (orange) around the only satisfying point of $\varphi$ (black).
    }
    \label{fig:sat_example_figure}
\end{figure}

As mentioned above, for all $x\in \{0,1\}^2$, $g_\varphi(x)=1$ holds if $x$ is a satisfying assignment of $\varphi$ and $g_\varphi(x)\leq 0$ otherwise.
Since $\varphi$ has an satisfying assignment, the function $h_{\varphi, \varepsilon}$ with $h_{\varphi, \varepsilon}(x)=\max(0, \varepsilon-1+g_\varphi(x))$ has strictly more than one linear region, see \Cref{fig:sat_example_figure}.
The final function in the reduction of is $f_{N_\varphi}(x)=\max(0, T_2(x)+\varepsilon-1+g_\varphi(x))$, see \Cref{fig:sat_example_figure}.
\end{example}
\begin{example}
\label{ex:satfornphardness2}
Consider the \SAT\ formula and function
\begin{align*}
\psi=\;&(x_1\lor x_2)\land
(\neg x_1\lor x_2)\land
(x_1\lor \neg x_2)\land
(\neg x_1\lor \neg x_2),\\
h_{\psi, \varepsilon}(x_1,x_2)=\;& \max(0, \varepsilon-\max(0,1-x_1-x_2)-\max(0,1-(1-x_1)-x_2)\\
&\phantom{\ \max(0, \varepsilon}-\max(0,1-x_1-(1-x_2))-\max(0,1-(1-x_1)-(1-x_2))).
\end{align*}
It is clear that $\psi$ is unsatisfiable.
However, for every $\varepsilon> 0$ we have $h_{\psi, \varepsilon}(1/2,1/2)=\varepsilon>0$.
\end{example}

\end{document}